\pgfplotsset{width=6cm,compat=1.9}%
\newcommand*{\mysquare}{\rule[0.18em]{0.36em}{0.36em}}
\newcommand*{\mytriangle}{\raisebox{0.12em}{\resizebox{0.48em}{0.48em}{$\blacktriangleright$}}}
\newcommand*{\mybar}{\rule[0.32em]{0.62em}{0.08em}}
\newcommand*{\mydot}{\raisebox{0.14em}{\resizebox{0.44em}{!}{$\bullet$}}}
\setlist{%
align=left,%
labelindent=0mm, %
leftmargin=!,%
itemindent=0mm, %
listparindent=\parindent,%
parsep=0mm,%
topsep=1mm,%
itemsep=1mm%
}
\setlist[itemize,1]{label={\mysquare}, labelwidth=\widthof{\mysquare\ }}%
\setlist[itemize,2]{label={\mytriangle}, labelwidth=\widthof{\mytriangle\ }}%
\setlist[itemize,3]{label={\mybar}, labelwidth=\widthof{\mybar\ }}%
\setlist[itemize,4]{label={\mydot}, labelwidth=\widthof{\mydot\ }}%
\setlist[enumerate,1]{label=\arabic*), labelwidth=\widthof{9)}}%
\setlist[enumerate,2]{label=\arabic{enumi}.\arabic*), labelwidth=\widthof{9.9)}}%
\setlist[enumerate,3]{label=\arabic{enumi}.\arabic{enumii}.\arabic*), labelwidth=\widthof{9.9.9)}}%
\setlist[enumerate,4]{label=\arabic{enumi}.\arabic{enumii}.\arabic{enumiii}.\arabic*), labelwidth=\widthof{9.9.9.9)}}%
\newcommand*{\abstractnoindent}{}%
\let\abstractnoindent\abstract
\renewcommand*{\abstract}{\let\quotation\quote\let\endquotation\endquote
\abstractnoindent}
\definecolor{blue}{RGB}{58, 95, 205}%
\definecolor{red}{RGB}{205, 41, 144}%
\definecolor{orange}{RGB}{238, 118, 0}%
\definecolor{chocolate}{RGB}{205, 102, 29}%
\lstdefinestyle{input}{
backgroundcolor=\color{black!12},%
commentstyle=\itshape\color{black!50},%
keywordstyle=\bfseries\color{black},%
stringstyle=\color{black}%
}
\lstdefinestyle{output}{
backgroundcolor=\color{black!6}%
}
\lstdefinestyle{codestyle}{
language={},%
keywords={},%
otherkeywords={}%
}
\newcommand*{\code}{\lstinline[
basicstyle=\upshape\ttfamily,
style=codestyle,
literate={~}{{$\sim$}}1
]}
\let\csname Sinput\endcsname\relax
\let\csname endSinput\endcsname\relax
\let\csname Soutput\endcsname\relax
\let\csname endSoutput\endcsname\relax
\lstdefinestyle{Rstyle}{
language=R,%
keywords={},%
otherkeywords={}%
}
\lstdefinestyle{Cstyle}{
language=C,%
keywords={},%
otherkeywords={}%
}
\lstdefinestyle{Bashstyle}{
language=bash,%
keywords={},%
otherkeywords={}%
}
\lstdefinestyle{LaTeXstyle}{
language=[LaTeX]TeX,%
texcs={},%
keywords={},%
otherkeywords={}%
}
\renewcommand*{\cite}[2][]{\textcite[#1]{#2}}%
\newif\ifstarttheorem
\declaretheoremstyle[%
spaceabove=0.5em,
spacebelow=0.5em,
headfont=\sffamily\bfseries\global\starttheoremtrue,
notefont=\sffamily\bfseries,
notebraces={(}{)},
headpunct={},
bodyfont=\normalfont,
postheadspace=\newline%
]{myMainStyle}
\declaretheorem[style=myMainStyle, numberwithin=section]{definition}%
\declaretheorem[style=myMainStyle, sibling=definition]{proposition}
\declaretheorem[style=myMainStyle, sibling=definition]{lemma}
\declaretheorem[style=myMainStyle, sibling=definition]{theorem}
\declaretheorem[style=myMainStyle, sibling=definition]{remark}
\declaretheorem[style=myMainStyle, sibling=definition]{example}
\preto\itemize{%
\if@inlabel
\ifstarttheorem
\mbox{}\par\nobreak\vskip\glueexpr-\parskip-\baselineskip+0.25em\relax\hrule\@height\z@
\fi%
\fi%
\global\starttheoremfalse%
\def\tempa{proof}%
\ifx\tempa\mycurrenvir
\ifstarttheorem
\mbox{}\par\nobreak\vskip\glueexpr-\parskip-\baselineskip+0.25em\relax\hrule\@height\z@
\fi%
\fi%
\global\starttheoremfalse%
}
\preto\enditemize{\global\starttheoremfalse}
\preto\enumerate{%
\if@inlabel
\ifstarttheorem
\mbox{}\par\nobreak\vskip\glueexpr-\parskip-\baselineskip+0.25em\relax\hrule\@height\z@
\fi%
\fi%
\global\starttheoremfalse%
\def\tempa{proof}%
\ifx\tempa\mycurrenvir
\ifstarttheorem
\mbox{}\par\nobreak\vskip\glueexpr-\parskip-\baselineskip+0.25em\relax\hrule\@height\z@
\fi%
\fi%
\global\starttheoremfalse%
}
\preto\endenumerate{\global\starttheoremfalse}
\NewDocumentCommand{\tmb}{O{0.1mm} O{0.1mm} O{0.88} m m m}{%
\mathrel{%
	\vbox{\offinterlineskip\m@th
		\ialign{%
			\hfil##\hfil\cr
			$\scriptscriptstyle\text{\scalebox{#3}{#4}}\mathstrut$\cr%
			\noalign{\vspace{#1}}%
			\vtop{%
				\ialign{%
					\hfil##\hfil\cr
					$#5$\cr\noalign{\vspace{#2}}%
					$\scriptscriptstyle\text{\scalebox{#3}{#6}}\mathstrut$\cr%
				}%
			}\cr
		}%
	}%
}%
}
\NewDocumentCommand{\tmbc}{O{0.1mm} O{0.1mm} O{0.88} m m m}{%
\mathrel{%
	\vbox{\offinterlineskip\m@th
		\ialign{%
			\hfil##\hfil\cr
			$\scriptscriptstyle\mathclap{\text{\scalebox{#3}{#4}}}\mathstrut$\cr%
			\noalign{\vspace{#1}}%
			\vtop{%
				\ialign{%
					\hfil##\hfil\cr
					$#5$\cr\noalign{\vspace{#2}}%
					$\scriptscriptstyle\mathclap{\text{\scalebox{#3}{#6}}}\mathstrut$\cr%
				}%
			}\cr
		}%
	}%
}%
}
\newcommand*{\isim}{\tmb{\tiny{ind.}}{\sim}{}}
\newcommand*{\deq}{\tmb{\tiny{d}}{=}{}}
\newcommand*{\IN}{\mathbb{N}}
\newcommand*{\IR}{\mathbb{R}}
\newcommand*{\U}{\operatorname{U}}
\newcommand*{\B}{\operatorname{B}}
\newcommand*{\N}{\operatorname{N}}
\renewcommand*{\t}{\textit{t}}
\newcommand*{\I}{\mathbbm{1}}
\newcommand*{\rd}{\mathrm{d}}
\renewcommand*{\mod}{\operatorname{mod}}
\newcommand*{\ran}{\operatorname{ran}}
\newcommand*{\supp}{\operatorname{supp}}
\renewcommand*{\P}{\mathbb{P}}
\newcommand*{\E}{\mathbb{E}}
\newcommand*{\ARMA}{\operatorname{ARMA}}
\newcommand*{\GARCH}{\operatorname{GARCH}}
\newcommand*{\W}{\mathcal{W}}
\newcommand*{\Wg}{\W_{\text{g}}}
\newcommand*{\R}{\textsf{R}}
\begin{document}
\thispagestyle{plain}
\begin{center}
  \sffamily
  {\bfseries\LARGE W-transforms: Uniformity-preserving transformations and induced dependence structures\par}
  \bigskip\smallskip
  {\Large Marius Hofert\footnote{Department of Statistics and Actuarial Science, The University of
      Hong Kong,
      \href{mailto:mhofert@hku.hk}{\nolinkurl{mhofert@hku.hk}}.},
    Zhiyuan Pang\footnote{Department of Statistics and Actuarial Science, The University of
      Hong Kong,
      \href{mailto:markus.pang@connect.hku.hk}{\nolinkurl{markus.pang@connect.hku.hk}}.}
    \par
    \bigskip
    \today\par}
\end{center}
\par\smallskip
\begin{abstract}
  W-transforms are introduced as uniformity-preserving univariate
  transformations on the unit interval induced by distribution functions and
  piecewise strictly monotone functions, and their properties are
  investigated. When applied componentwise to random vectors with standard
  uniform univariate margins, W-transforms naturally serve as copula-to-copula
  transformations. Properties of the resulting W-transformed copulas, including
  their analytical form, density, measures of concordance, tail dependence and
  symmetries, are derived. A flexible parametric family of W-transforms is
  proposed as a special case to further enhance tractability. Illustrative
  examples highlight the introduced concepts, and improved dependence modelling
  is demonstrated in terms of a real-life dataset.
\end{abstract}
\minisec{Keywords}
Copula-to-copula transformation, Danube data, invariance principle, piecewise
strictly monotone functions, tractable copula construction, v-transform.

\section{Introduction}
Transformations $\mathcal{T}:[0,1]\to[0,1]$ are \emph{uniformity-preserving} if
$\mathcal{T}(U)\sim\U(0,1)$ for $U\sim\U(0,1)$. Such transformations were considered, for example,
by \cite{porubskysalatstrauch1988}, who showed that $\mathcal{T}$ is
uniformity-preserving if and only if $\E\bigl(h(\mathcal{T}(U))\bigr)=\E(h(U))$ for all
Riemann integrable $h:[0,1]\to\IR$. \cite{strauchporubsky1993} considered the multivariate
case and showed that for $U_1,\dots,U_d\mathrel{\smash{\isim}}\U(0,1)$, $\mathcal{T}_1,\dots,\mathcal{T}_d:[0,1]\to[0,1]$
are jointly uniformity-preserving, that is $(T_1(U_1), \dots, T_d(U_d)) \sim \U(0,1)^d$,
if and only if $\E\bigl(h(\mathcal{T}_1(U_1),\dots,\mathcal{T}_d(U_d))\bigr)=\E(h(U_1,\dots,U_d))$
for all Riemann integrable $h:[0,1]^d\to\IR$. The equivalent result for bounded and continuous $h$
is a direct consequence of the Portmanteau lemma; see \cite[Lemma~2.2]{vandervaart2000}.

Uniformity-preserving transformations also naturally appear in the context of
copula-to-copula transformations, such as the transformations of
\cite{rosenblatt1952} (or its inverse), \cite{khoudraji1995}, \cite{morillas2005},
\cite{liebscher2008}, \cite{durantesarkocisempi2009}, %
\cite[Section~2.7]{hofertkojadinovicmaechleryan2018}, %
and others, with the goal to construct new, tailor-made dependence models from
given ones. More recently, uniformity-preserving transformations $\mathcal{T}$
of the form $\mathcal{T}(u)=F_{T(X)}\bigl(T(F_X^{-1}(u))\bigr)$, $u\in[0,1]$ for
$X\sim F_X$ with quantile function $F_X^{-1}(u)=\inf\{x\in\IR:F_X(x)\ge u\}$,
$u\in[0,1]$, and transformations $T:\IR\to\IR$ were considered in
\cite{mcneil2021} under the name of ``v-transforms'' in the context of modelling
volatile time series and under specific assumptions on both $F_X$ and $T$
(detailed later). \cite{quessy2024general} considered piecewise monotone
transformations as uniformity-preserving transformations and applied them to the
components of copulas for the purpose of multivariate analysis, nonmonotone
regression, and modelling spatial dependence.

While previous work successfully modelled exchangeable dependence, real-life data often
exhibit non-exchangeability. For example, the Danube dataset, see left-hand side
of Figure~\ref{fig:danube}, from the \R\ package \texttt{lcopula} of
\cite{belzile2023} (659 pseudo-observations of base flows measured at Sch\"arding
in Austria and Nagymaros in Hungary) violates exchangeability since measurements
from Sch\"arding (upstream on the Inn River) show systematically larger
base flows than at Nagymaros (downstream on the Danube).
\begin{figure}[htbp]
  \centering
  \includegraphics[width=0.48\textwidth]{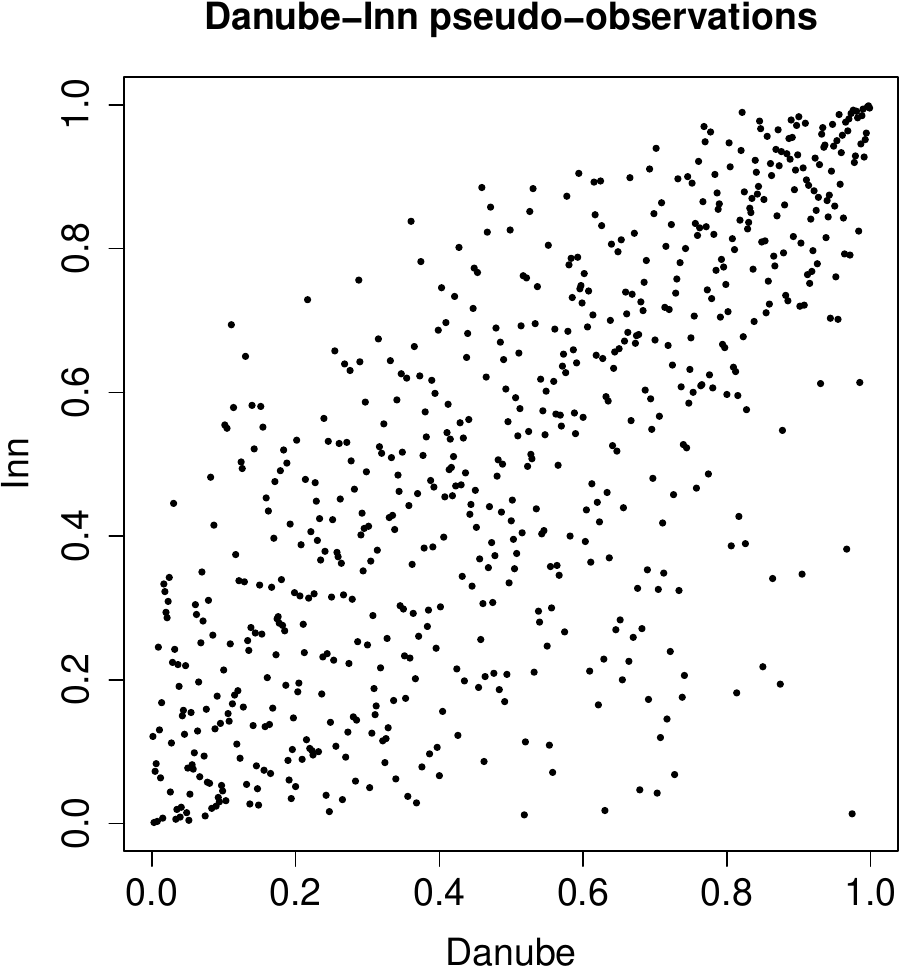}
  \hfill
  \includegraphics[width=0.48\textwidth]{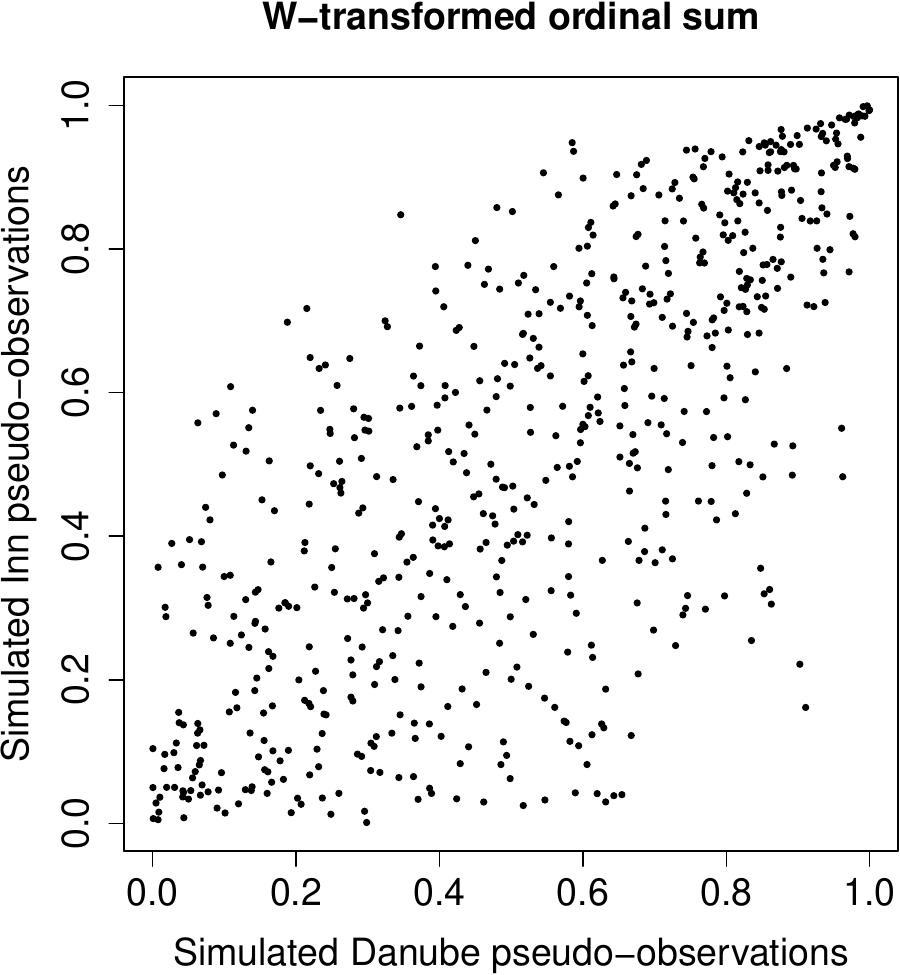}
  \caption{659 pseudo-observations of the Danube dataset of \cite{belzile2023} (left)
    and generated sample of the same size of a model constructed based on W-transforms
    introduced later (right).}
  \label{fig:danube}
\end{figure}
This asymmetry, driven by upstream-downstream dynamics, is critical for
accurately characterising joint base-flow behaviour, which symmetric copula
models fail to capture. While devices like those proposed in
\cite{khoudraji1995}, \cite{liebscher2008}, and \cite{freesvaldez1998} partially
address non-exchangeability, their flexibility remains limited. The model we
propose in this work based on W-transforms is more flexible and produces samples
that closely resemble the Danube dataset (right-hand side of
Figure~\ref{fig:danube}).

We conclude this introduction by positioning W-transforms relative to vine copulas,
introduced and detailed in \cite{bedford2002vines}, \cite{aas2009pair}, currently the most widely
used framework for high-dimensional non-exchangeable dependence modelling. Vine copulas
achieve flexibility through a hierarchical pairwise construction but typically invoke
the simplifying assumption that conditional copulas do not depend on the conditioning
values to remain tractable in higher dimensions. As shown by \cite{stoeber2013simplified}, this
assumption makes widely used copula families such as the Archimedean families (except
Clayton) and mixtures of normals (except student-t copula) outside the range of vine
copulas. W-transforms, as we shall see in Section~\ref{sec:cop} and Section~\ref{sec:application},
operate on a base copula $C$, and therefore any valid copula can serve as the dependence
skeleton. Moreover, the W-transformed ordinal sums constructed via \eqref{eq:ordinal:sum},
\eqref{eq:os:homo:in:W:cop}, and \eqref{eq:os:W:cop} provide closed-form density
and tail dependence (Proposition~\ref{prop:os:tail:dependence}) without recursive conditioning.
We view W-transformed copulas as complementary to vine copulas that is particularly
useful when asymmetry has a physical explanation (such as the Danube data).

The paper is organised as follows. In Section~\ref{sec:gen:W:transform} we
introduce the notion of ``W-transforms''. %
Their properties are thoroughly investigated in Sections~\ref{sec:cont:W:trafo}
and~\ref{sec:general:W:trafo}. In Section~\ref{sec:cop}, we then focus on
``W-transformed copulas'', that is the copulas implied by marginally applying
``W-transforms''. In Section~\ref{sec:application}, we demonstrate how
``W-transforms'' can generate flexible tail dependencies and
non-exchangeability, with applications to the aforementioned Danube
dataset. Conclusions and proofs are provided in Section~\ref{sec:concl} and the
appendix, respectively.

\section{The notion of a W-transform}\label{sec:gen:W:transform}
Let $\bar{\IR}=\IR\cup\{-\infty,\infty\}$ and
$\bar{\IN}=\IN\cup\{\infty\}$. %
For $K\in\bar{\IN}$, \emph{change points} $(t_k)_{k=0}^K$ are points satisfying
$-\infty\le t_0<t_1<\dots<t_k<\cdots\le\infty$. The $K$ intervals $(t_{k-1},t_k)$,
$k=1,\dots,K$, are referred to as %
\emph{pieces}.
Let $D\coloneqq [t_0,\infty)$ if $t_K=\infty$ and $[t_0,t_K]$ otherwise.
We call $T:D\to\IR$ \emph{piecewise continuous and strictly monotone (pcsm)}
with $K$ pieces and change points $(t_k)_{k=0}^K$, if the restriction
$T|_{(t_{k-1},t_k)}$ is continuous and strictly monotone for all
$k\in\{1,\dots,K\}$, where we interpret $T(-\infty)$ as
$\lim_{x\to-\infty} T(x)$ and $T(\infty)$ as $\lim_{x\to\infty} T(x)$. The case
$K=\infty$ is included to allow for countably infinitely many pieces.

We can now introduce the notion of a W-transform as follows.
\begin{definition}[W-transforms]\label{def:W:transform}
  For pcsm $T:D\to\IR$ and $X$ following a \emph{base distribution} $F_X$, let
  $\supp(F_X)\coloneqq\{x\in\IR: F_X(x)-F_X(x-h)>0\ \forall\,h>0\}$ be the support of
  $F_X$ with $\inf\supp(F_X)=t_0$ and $\sup\supp(F_X)=t_K$. %
  For $X\sim F_X$ and $K\in\bar{\IN}$, let $(t_k)_{k=0}^K$ be change points of $T$.
  The \emph{W-transform} $\W:[0,1]\to[0,1]$ of $F_X$ and $T$ is then defined by
  \begin{align}\label{eq:W:transform}
    \W(u)=\begin{cases}
      \lim_{u\to 0+}\W(u),& u=0,\\
      F_{T(X)}\bigl(T(F_X^{-1}(u))\bigr),& u\in(0,1],
    \end{cases}
  \end{align}
  where $T(X)$ follows the \emph{transformed distribution} $F_{T(X)}$. As we
  shall see in Proposition~\ref{prop:W:property}, $\W$ is also pcsm with
  \emph{change points} $\delta_k$, $k\in\{1, \dots, K\}$.
\end{definition}
\begin{remark}[Technical details]
  \begin{enumerate}
  \item $F_X^{-1}(1)=\sup\supp(F_X)=x_{F_X}$ is the \emph{right endpoint} of
    $F_X$.  By assumption $t_K= x_{F_X}$, so $\W(1)$ is
    well-defined. $\W(0)$ is defined as a limit since, otherwise, we
    would need, for all distributions with \emph{left endpoint}
    $\sup\{x\in\IR:F_X(x)=0\}>-\infty$, to be able to define $T(-\infty)$ and
    thus have to choose $t_0=-\infty$ just for the purpose of defining
    $\W(0)$ (but values of $\W(u)$ on a Lebesgue null set will
    not affect uniformity-preservation).

    Note that we do not know the value $\W(0)$ or $\W(1)$ in
    general. For the latter, $\W(1)=F_{T(X)}(T(x_{F_X}))=\P(T(X)\le T(x_{F_X}))$,
    but this can take any value in $[0,1]$ depending on $T$. If $T$ is strictly
    increasing (decreasing), it is $1$ ($0$).

  \item For uniformity-preservation to hold, $T$ cannot be constant $y$ on
    any interval $[s_1,s_2]\subseteq\supp(F_X)$ with $s_1<s_2$ as then
    $F_{T(X)}(z)-F_{T(X)}(z-)=\P(T(X)=z)\ge\P(X\in[s_1,s_2])>0$ %
    so $F_{T(X)}$ jumps in $z$ and thus $\W$ cannot be uniformity-preserving
    since $\W$ does not attain any values in $(F_{T(X)}(z-),F_{T(X)}(z))$.
  \item As we shall see, $\W$ in \eqref{eq:W:transform} is uniformity preserving if $F_X$
    is continuous. If $F_X$ is discontinuous, the W-transform $\W$ is not uniformity-preserving; see
    Example~\ref{eg:non:unif:preserv}. In Section~\ref{sec:general:W:trafo}, we extend the definition
    of W-transforms to discontinuous $F_X$.
  \item A pcsm $T$ allows us to treat fairly general functions while being able
    to identify conditions on $F_X$ (in combination with $T$) that guarantee
    uniformity-preservation. Technically, we allow that
    $\lim_{t\to t_k-}T(t)<T(t_k)<\lim_{t\to t_k+}T(t)$ at all finite $t_k$ as
    the value of $T$ at these at most countably many points is irrelevant for
    the question of uniformity-preservation under continuous $F_X$ due to forming a Lebesgue null
    set. Thus, if $F_X$ is continuous, we can assume without loss of generality
    that $T$ is left-continuous at all finite change points.
  \item The base distribution $F_X$ and the transformation $T$ are auxiliary generators of
    the W-transform $\W$. Different pairs of $(F_X,T)$ mat lead to the same function $\mathcal{W}$ on $[0,1]$
    and, therefore, the pair is not jointly identifiable. For example, one may take $X\sim\U(0,1)$,
    $T_1(x)=|1-2x|$,  $T_2(x)=(1-2x)^2$, $x\in[0,1]$ to obtain $\mathcal{W}_1(u)=\mathcal{W}_2(u)=|1-2u|$,
    $u\in[0,1]$. In applications, one either fixes $F_X$ (e.g, to $\U(0,1)$) and works with a parametric family
    for $T$, or parameterise $\W$ directly through its properties as shown in Proposition~\ref{prop:W:property}.
    The latter has been applied in Section~\ref{subsec:illu:real:data} in the construction of
    \eqref{eq:W:trafo:Inn} where we chose the first piece and recovered the second piece via the property
    Proposition~\ref{prop:W:property}~\ref{prop:W:property:partition:sq}.
  \end{enumerate}
\end{remark}

As the following examples show, the generic form of a W-transform does not
necessarily imply that $\W$ is uniformity-preserving.
\begin{example}[Non-uniformity-preservation of generic W-transforms]\label{eg:non:unif:preserv}
  Let $X\sim\B(1,p)$, $p\in[0,1]$.
  \begin{enumerate}
  \item If $p=0$, then $X=0$ almost surely (a.s.), so that
    $F_X(x)=\I_{[0,\infty)}(x)$, $x\in\IR$, with
    $F_X^{-1}(u)=0$, $u\in(0,1]$. Therefore,
    $\W(u)=F_{T(X)}\bigl(T(F_X^{-1}(u))\bigr)=F_{T(0)}(T(0))=1$,
    $u\in(0,1]$, %
    which is not uniformity-preserving. Similarly for $p=1$, $X=1$ a.s.,
    $F_X(x)=\I_{[1,\infty)}(x)$, $x\in\IR$, with $F_X^{-1}(u)=1$, $u\in(0,1]$,
    and thus $\W(u)=F_{T(1)}(T(1))=1$, $u\in(0,1]$. Note that, in both cases,
    $T$ is only utilised in a single point.
  \item If $p\in(0,1)$, then $F_X(x)=(1-p)\I_{[0,\infty)}(x)+p\I_{[1,\infty)}$,
    $x\in\IR$, with quantile function $F_X^{-1}(u)=\I_{(1-p,1]}(u)$,
    $u\in(0,1]$. With stochastic representation $X\deq F_X^{-1}(U)=\I_{(1-p,1]}(U)$
    for $U\sim\U(0,1)$, we obtain
    $\W(u)=F_{T(X)}\bigl(T(F_X^{-1}(u))\bigr)=\P\bigl(T(\I_{(1-p,1]}(U))\le T(\I_{(1-p,1]}(u))\bigr)$, $u\in(0,1]$.
    Therefore, for $u\in(0, 1]$,
    \begin{align*}
      \W(u)=\begin{cases}
        1-(1-p)\I_{(1-p,1]}(u),& \text{if}\ T\ \text{is strictly decreasing},\\
        1-p\I_{(0,1-p]}(u),& \text{if}\ T\ \text{is strictly increasing},
      \end{cases}
    \end{align*}
    and neither case leads to a uniformity-preserving $\W$.
  \end{enumerate}
\end{example}

\section{W-transforms constructed from continuous random variables}\label{sec:cont:W:trafo}
As already applied, the \emph{quantile transform} $F_X^{-1}(u)$ satisfies
$F_X^{-1}(U)\deq X$ for $U\sim \U(0, 1)$; see, for example, \cite{embrechtshofert2013c}.
In this section we consider W-transforms under continuous base distributions $F_X$,
in which case \emph{probability transform} $F_X(X)$ satisfies $F_X(X)\sim \U(0, 1)$.

\subsection{Uniformity-preservation}
Our first result shows that W-transforms with continuous base distributions $F_X$ are
uniformity-preserving.
\begin{proposition}[Uniformity-preservation under continuous $F_X$]\label{prop:w:unif:preserv}
  Let $X\sim F_X$ for continuous base distribution $F_X$, and let $T:D\to\IR$ be pcsm
  with change points $(t_k)_{k=0}^K$, $K\in\bar{\IN}$. If $U\sim\U(0,1)$, then $\W(U)\sim\U(0,1)$.
\end{proposition}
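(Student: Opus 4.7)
My strategy is to reduce the claim to the standard probability integral transform, with the only real work being to verify that the transformed distribution $F_{T(X)}$ is continuous. The identity $F_X^{-1}(U)\deq X$ holds for any CDF $F_X$, hence
\begin{equation*}
  \W(U)=F_{T(X)}\bigl(T(F_X^{-1}(U))\bigr)\deq F_{T(X)}(T(X)).
\end{equation*}
If $F_{T(X)}$ is continuous, then by the probability integral transform $F_{T(X)}(T(X))\sim\U(0,1)$, which yields $\W(U)\sim\U(0,1)$ (the definition of $\W(0)$ as a limit only changes $\W$ on the Lebesgue null set $\{0\}$ and is immaterial for the distribution).

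The one nontrivial step is therefore to show that $T(X)$ has no atoms. For any $y\in\IR$, I would decompose
\begin{equation*}
  T^{-1}(\{y\})=\bigcup_{k=1}^K\bigl(T|_{(t_{k-1},t_k)}\bigr)^{-1}(\{y\})\;\cup\;\{t_k:k\in\{0,\dots,K\},\,T(t_k)=y\}.
\end{equation*}
Because $T$ is continuous and strictly monotone on each piece $(t_{k-1},t_k)$, the preimage $(T|_{(t_{k-1},t_k)})^{-1}(\{y\})$ is either empty or a single point; combined with the (at most countably many) change points, $T^{-1}(\{y\})$ is at most countable. Continuity of $F_X$ then gives $\P(T(X)=y)=\P(X\in T^{-1}(\{y\}))=0$, so $F_{T(X)}$ is continuous as required.

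The \emph{hard part} is really the bookkeeping of the pcsm structure in the preimage argument: one has to use both the no-flat-part consequence of strict monotonicity on each piece (otherwise $T(X)$ would have an atom at the constant value, as noted in the remark after the definition) and the countability of the change points. Once this atomlessness of $F_{T(X)}$ is established, the rest is the one-line chain $\W(U)\deq F_{T(X)}(T(X))\sim\U(0,1)$ via the PIT, and no further assumptions on $T$ or $F_X$ are needed.
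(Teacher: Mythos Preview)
Your proposal is correct and follows essentially the same route as the paper: reduce $\W(U)$ to $F_{T(X)}(T(X))$ via the quantile transform, argue that $F_{T(X)}$ is continuous, and conclude by the probability integral transform. The paper merely asserts the continuity of $F_{T(X)}$ in one clause (``Due to local strict monotonicity of $T$, $X$ being continuously distributed implies that $T(X)$ is continuously distributed''), whereas you actually spell out the preimage argument that justifies it; your version is therefore slightly more detailed at the one place the paper is terse.
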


The following example addresses ``v-transforms'', a special case of
uniformity-preserving W-transforms considered in~\cite{mcneil2021}.
\begin{example}[V-transforms and their use in \cite{mcneil2021}]\label{eg:v:trans:use}
  \begin{enumerate}
  \item \cite{mcneil2021} considered \emph{v-transforms} (denoted by
    $\mathcal{V}$), which are W-transforms of the form
    $\mathcal{T}(u)=F_{T(X)}\bigl(T(F_X^{-1}(u))\bigr)$, $u\in[0,1]$, for
    absolutely continuous $F_X$ with density $f_X$ symmetric about
    $0$ %
    and continuous and differentiable transformations $T:\IR\to[0,\infty)$ that
    are, for change points $t_0=-\infty$, $t_1$ and $t_2=\infty$, strictly decreasing on
    $(-\infty,t_1]$, strictly increasing on $[t_1,\infty)$ and satisfy
    $T(\t_1)=0$. The point $\delta=F_X(t_1)$ is the \emph{fulcrum} of the
    v-transform, and, due to its intended application, $T$ is called
    \emph{volatility proxy transformation}.  For $T(x)=|x|$, one has
    $\mathcal{V}(u)=|2u-1|$, $u\in[0,1]$, which is of v-shape and piecewise
    linear.
  \item\label{eg:v:trans:use:chara:v} \cite[Theorem~1, Proposition~3]{mcneil2021} shows that $\mathcal{V}:[0,1]\to[0,1]$
    is a v-transform if and only if
    \begin{align}
      \mathcal{V}(x)=\begin{cases}
        (1-x)-(1-\delta)G(\frac{x}{\delta}), & x\in[0,\delta],\\
        x-\delta G^{-1}(\frac{1-x}{1-\delta}), & x\in(\delta,1],
      \end{cases}\label{eq:vtrans}
    \end{align}
    for a continuous and strictly increasing distribution function $G$ on
    $[0,1]$, referred to as the \emph{generator} of $\mathcal{V}$. In particular,
    \cite{mcneil2021} considered the two-parameter family of distribution
    functions $G(x) = \exp(-\kappa(-\ln x)^{\xi})$ where
    $\kappa = 2$, $\xi = 0.5$, $\delta = 0.4$ in \eqref{eq:vtrans}; see the
    left-hand side of Figure~\ref{fig:v:trans:relation}. We can see that the
    v-transform has two strictly monotone branches, and the graph resembles the
    letter ``v'', hence the name.
    \begin{figure}[htbp]
      \centering
      \begin{minipage}[c]{0.48\textwidth}
        \centering
        \includegraphics[width=\textwidth]{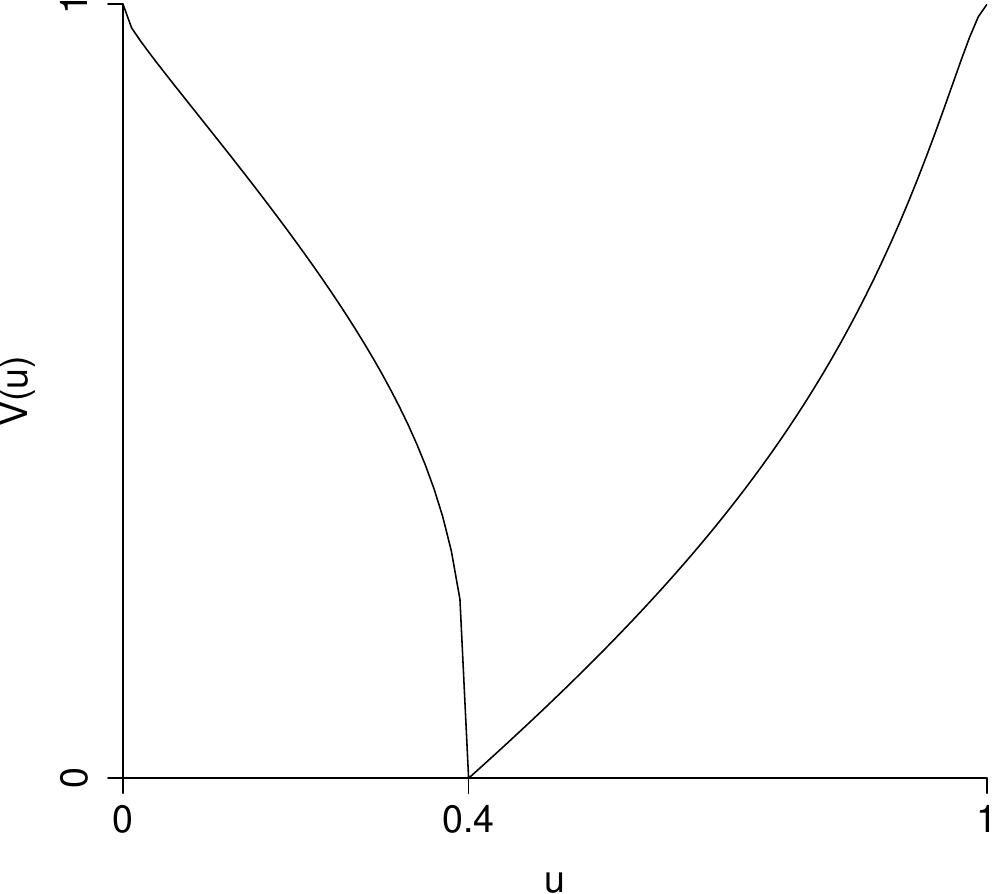}
      \end{minipage}
      \hfill
      \begin{minipage}[c]{0.48\textwidth}
        \centering
        \begin{tikzpicture}[node distance=2cm]
          \node (Xt) at (0, 1) {$X_t$};
          \node (Ut) at (3, 1) {$U_t$};
          \node (TXt) at (0, -1) {$T(X_t)$};
          \node (Vt) at (3, -1) {$V_t$};

          \draw[-Stealth] (Xt) -- (Ut) node[midway, above] {$F_X$};
          \draw[-Stealth] (TXt) -- (Vt) node[midway, above] {$F_{T(X)}$};
          \draw[-Stealth] (Ut) -- (Vt) node[midway, right] {$\mathcal{V}$};
          \draw[-Stealth] (Xt) -- (TXt) node[midway, right] {$T$};
        \end{tikzpicture}
      \end{minipage}
      \caption{A v-transform (left) and the conceptual relationship between $X_t$, $U_t$,
        $T(X_t)$ and $V_t$ (right).}\label{fig:v:trans:relation}
    \end{figure}
  \item As v-transforms have no ordinary inverse,
    \cite{mcneil2021} %
    considered stochastic (that is randomised) inverse v-transforms for the purpose of
    constructing competitive alternatives to $\GARCH$ time series models. Inspired by the
    fact that a $\GARCH(p,q)$ process $(X_t)_{t\in\IN}$, when squared, is an
    $\ARMA(p,q)$ process, the idea is to construct a new, symmetric and strictly
    stationary stochastic process $(X_t)_{t\in\IN}$ with given absolutely continuous
    margin $F_X$ and even density $f_X$ (for example from a Laplace distribution or Student's
    $t$-distribution), such that the \emph{volatility proxy series}
    $(T(X_t))_{t\in\IN}$ for even $T$ (such as $T(x)=x^2$
    or $T(x)=|x|$) is an $\ARMA$ process $(Z_t)_{t\in\IN}$. This can be done as follows:
    \begin{enumerate}[label=(\arabic*), labelwidth=\widthof{(2)}]
    \item Construct the \emph{normalised volatility proxy series},
      that is a causal and invertible $\ARMA$ process $(Z_t)_{t\in\IN}$ with
      standardised innovation distribution, without loss of generality $\N(0, 1)$.
    \item\label{mcneil:vol:model:algo:vol:PIT:process} Construct the \emph{volatility PIT process}
      $(V_t)_{t\in\IN}=(\Phi(Z_t))_{t\in\IN}$.
    \item Construct the \emph{series PIT process}
      $(U_t)_{t\in\IN}$ from $(V_t)_{t\in\IN}$ via
      $U_t=\mathcal{V}^{-1}(V_t)$, where $\mathcal{V}^{-1}$ is the stochastic
      inverse of the v-transform $\mathcal{V}$.
    \item\label{mcneil:vol:model:algo:last:step} Construct $(X_t)_{t\in\IN}$ via $(X_t)_{t\in\IN}=(F_X^{-1}(U_t))_{t\in\IN}$.
    \end{enumerate}
    The volatility PIT process $(V_t)_{t\in\IN}=(\Phi(Z_t))_{t\in\IN}$ in
    Step~\ref{mcneil:vol:model:algo:vol:PIT:process} by construction equals
    $F_{T(X)}(T(X_t))$ derived in
    Step~\ref{mcneil:vol:model:algo:last:step}. V-transforms therefore
    characterise the copula of $(U_t,V_t)$, which, by the invariance principle,
    is also that of $(X_t,T(X_t))$. This conceptual relationship is illustrated
    on the right-hand side of Figure~\ref{fig:v:trans:relation}.
  \end{enumerate}
\end{example}

\subsection{Properties of W-transforms}
In this section, we study properties of W-transforms.
We start with the following, fundamental ones.
\begin{proposition}[Properties of W-transforms]\label{prop:W:property}
  Let $X\sim F_X$ be continuous and $T: D \rightarrow \IR$ be pcsm and
  left-continuous. A W-transform $\W$ defined by
  \eqref{eq:W:transform} then has the following properties:
  \begin{enumerate}
  \item\label{prop:W:property:change:pt}  $\W$ has change points at
    $\delta_0=0<\delta_1<\dots<\delta_k<\cdots<1=\delta_K$ with $\delta_k=F_X(t_k)$,
    $k\in\{1,\dots,K\}$.
  \item\label{prop:W:property:monotone} $\W$ has the same monotonicity in
    $(\delta_{k-1},\delta_k]$ as $T$ has in $(t_{k-1}, t_k]$ for any $k\in\{1,\dots,K\}$.
    If $T$ is continuous everywhere, then so is $\W$.
  \item\label{prop:W:property:partition:sq} \emph{Partition of square property.}
    Consider $v\in [0,1]$. Define the preimage sets restricted
    to $(\delta_{k-1}, \delta_k]$ as $S_k(v)=\{u\in(\delta_{k-1}, \delta_k]:
    \W(u)\leq v)\}$ for all $k\in\{1,\dots,K\}$.
    Then, for the Lebesgue measure $\lambda$,
    \begin{align}\label{eq:partition:sq:property}
      \lambda\biggl(\,\biguplus_{k=1}^{K} S_k(v)\biggr) = v.
    \end{align}
    For each $k\in\{1,\dots,K\}$, consider the bijective piece
    $\W_{|k}\coloneqq\W|_{(\delta_{k-1}, \delta_k]}$. Then $S_k(v)$ is of the following form:
    \begin{enumerate}[label=\roman*), labelwidth=\widthof{iii)}]
    \item If $\inf \W_{|k}>v$, then $S_k(v)=\emptyset$.
    \item If $\sup \W_{|k}\leq v$, then $S_k(v)=(\delta_{k-
        1},\delta_k]$.
    \item Otherwise, if $\W_{|k}$ is increasing then $S_k(v) =
      (\delta_{k-1}, \W^{-1}_{|k}(v)]$, and if $\W_{|k}$ is decreasing then
      $S_k(v)=(\W^{-1}_{|k}(v), \delta_k]$.
    \end{enumerate}
  \end{enumerate}
\end{proposition}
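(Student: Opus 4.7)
The plan is to verify the three claims in order, exploiting the decomposition $\W = F_{T(X)} \circ T \circ F_X^{-1}$, the continuity of $F_X$, and the uniformity-preservation already established in Proposition~\ref{prop:w:unif:preserv}.

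For part~\ref{prop:W:property:change:pt}, I would first observe that continuity of $F_X$ yields $F_X^{-1}(\delta_k)\in(t_{k-1},t_k]$ via $\delta_k=F_X(t_k)$, and more generally that $F_X^{-1}$ maps $(\delta_{k-1},\delta_k]$ into $(t_{k-1},t_k]$ (up to plateaus of $F_X$, which lie outside $\supp(F_X)$). Since $T$ is pcsm with change points precisely $t_k$, the composition $T\circ F_X^{-1}$ inherits change points at $\delta_k$, and post-composing with the monotone $F_{T(X)}$ adds no new ones; combined with $\delta_0=F_X(t_0)=0$ and $\delta_K=F_X(t_K)=1$ this yields the chain $0=\delta_0<\delta_1<\dots<\delta_K=1$.

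For part~\ref{prop:W:property:monotone}, I would argue that on each open piece $(\delta_{k-1},\delta_k)$, $F_X^{-1}$ is strictly increasing (by continuity of $F_X$), $T$ is strictly monotone on $(t_{k-1},t_k)$ by assumption, and $F_{T(X)}$ is strictly increasing on the range $T((t_{k-1},t_k))$ since pcsm $T$ and continuous $F_X$ preclude atoms of $T(X)$. Hence $\W_{|k}$ inherits the monotonicity direction of $T|_{(t_{k-1},t_k)}$. For the continuity claim, the same no-atoms argument shows $F_{T(X)}$ is continuous. Potential discontinuities of $\W$ can then only stem from plateaus of $F_X$, i.e.\ intervals $[a,b]\subseteq D$ on which $F_X$ is constant; however $\P(X\in[a,b])=0$ on any such plateau, so $F_{T(X)}$ is constant on $T([a,b])$, and thus the jump of $F_X^{-1}$ at the corresponding $u$ is absorbed upon applying $F_{T(X)}\circ T$, yielding continuity of $\W$.

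For part~\ref{prop:W:property:partition:sq}, disjointness of $S_1(v),\dots,S_K(v)$ is immediate as they live in the disjoint pieces $(\delta_{k-1},\delta_k]$. Their total Lebesgue measure equals $v$ by uniformity-preservation: for $U\sim\U(0,1)$,
\begin{align*}
  \lambda\biggl(\,\biguplus_{k=1}^K S_k(v)\biggr)=\lambda\bigl(\{u\in(0,1]:\W(u)\le v\}\bigr)=\P(\W(U)\le v)=v
\end{align*}
by Proposition~\ref{prop:w:unif:preserv}. The explicit form of each $S_k(v)$ then follows from a routine case analysis on $v$ relative to the range of $\W_{|k}$: if $v<\inf\W_{|k}$ the set is empty; if $v\ge\sup\W_{|k}$ it equals the whole piece $(\delta_{k-1},\delta_k]$; otherwise, strict monotonicity and continuity of $\W_{|k}$ identify $\{\W_{|k}\le v\}$ as a half-open subinterval with endpoint $\W_{|k}^{-1}(v)$, on the right if $\W_{|k}$ is increasing and on the left if $\W_{|k}$ is decreasing.

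The main obstacle I expect is the continuity assertion in part~\ref{prop:W:property:monotone}: since $F_X^{-1}$ need not be continuous even for continuous $F_X$ (it jumps at plateaus of $F_X$), establishing continuity of $\W$ requires the plateau-absorption argument above, which is the one delicate point of the proof.
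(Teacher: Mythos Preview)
Your argument for parts~\ref{prop:W:property:change:pt}, the monotonicity half of~\ref{prop:W:property:monotone}, and~\ref{prop:W:property:partition:sq} is essentially the paper's: change points are located via $F_X(F_X^{-1}(u))=u$, monotonicity is inherited because $F_X^{-1}$ and $F_{T(X)}|_{\ran T}$ are strictly increasing, and~\eqref{eq:partition:sq:property} is exactly uniformity-preservation applied to the event $\{\W(U)\le v\}$.

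For the continuity assertion you go further than the paper, which simply declares $\W$ continuous ``as a composition of $F_{T(X)}$, $T$ and $F_X^{-1}$'' and thereby tacitly assumes $F_X^{-1}$ continuous. Your plateau-absorption argument tries to repair this, but it has a gap: from $\P(X\in[a,b])=0$ you infer that $F_{T(X)}$ is constant on $T([a,b])$, i.e.\ $\P\bigl(T(X)\in T([a,b])\bigr)=0$. This does not follow, since $T^{-1}\bigl(T([a,b])\bigr)$ may strictly contain $[a,b]$ and carry positive $F_X$-mass. A concrete failure: take $F_X$ uniform on $[0,\tfrac13]\cup[\tfrac23,1]$ (plateau on $[\tfrac13,\tfrac23]$) and the continuous $T$ with $T(x)=x$ on $[0,\tfrac23]$, $T(x)=\tfrac83-3x$ on $(\tfrac23,1]$. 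Then the second branch of $T$ maps $(\tfrac23,\tfrac79)$, where $X$ has mass $\tfrac16$, into $T([\tfrac13,\tfrac23])=[\tfrac13,\tfrac23]$, and one computes $\W(\tfrac12)=\tfrac56$ but $\W(\tfrac12+)=1$. The continuity claim therefore requires the additional hypothesis $\supp(F_X)=[t_0,t_K]$ (equivalently, $F_X$ strictly increasing on $D$), under which $F_X^{-1}$ is continuous and the straightforward composition argument suffices; this assumption is present in the paper's examples and in Proposition~\ref{prop:suff:con:lin} but not explicitly in Proposition~\ref{prop:W:property}.
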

Proposition~\ref{prop:W:property}~\ref{prop:W:property:change:pt} and
\ref{prop:W:property:monotone} ensure that $\W$ is pcsm, and the
partition of square property ensures that $\W$ is uniformity-preserving.

The following result shows that the properties listed in Proposition~\ref{prop:W:property} are
closed under composition of W-transforms; we also apply this result later when considering
``periodic'' W-transforms.
\begin{proposition}[Composition of W-transforms preserves properties of W-transforms]\label{prop:W:compo}
  Let $\W'$ and $\W''$ be W-transforms constructed from continuous base distributions
  with change points $\{\delta_k'\}_{k=0}^{K'}$ and $\{\delta_\ell''\}_{\ell=0}^{K''}$
  where $K', K''\in \bar{\IN}$. Then, the composition $\W=\W'\circ\W''$
  is uniformity-preserving and pcsm.
\end{proposition}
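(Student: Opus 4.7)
The strategy is to handle uniformity-preservation and the piecewise continuous and strictly monotone (pcsm) structure separately. Uniformity-preservation follows immediately by iterating Proposition~\ref{prop:w:unif:preserv}: for $U\sim\U(0,1)$, the W-transform $\W''$ gives $\W''(U)\sim\U(0,1)$, and a second application delivers $\W'(\W''(U))\sim\U(0,1)$, so $\W=\W'\circ\W''$ preserves uniformity.

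For the pcsm property, my plan is to refine the partition of $[0,1]$ induced by the change points $\{\delta_\ell''\}_{\ell=0}^{K''}$ of $\W''$. By Proposition~\ref{prop:W:property}, on each piece $I_\ell''=(\delta_{\ell-1}'',\delta_\ell'')$ the restriction $\W''|_{I_\ell''}$ is continuous and strictly monotone, hence a homeomorphism onto an open subinterval $J_\ell\subseteq[0,1]$. I would then pull back those change points $\delta_k'$ of $\W'$ that lie in $J_\ell$ through $(\W''|_{I_\ell''})^{-1}$, producing a countable collection of additional interior change points inside $I_\ell''$. Taking the union over $\ell$ of these pulled-back points together with $\{\delta_\ell''\}_{\ell=0}^{K''}$ yields the candidate change-point set of $\W$.

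On each of the resulting sub-pieces, $\W''$ maps bijectively into a single piece of $\W'$, on which $\W'$ is continuous and strictly monotone (again by Proposition~\ref{prop:W:property}). A composition of two continuous and strictly monotone functions is continuous and strictly monotone, with direction of monotonicity determined by the product of the two individual directions; this yields the pcsm property of $\W$ on each sub-piece.

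The main obstacle I expect is the bookkeeping for the refined change-point set, i.e., verifying that it admits the strictly increasing enumeration $\delta_0<\delta_1<\cdots\le 1$ required by the definition of pcsm. Within each $I_\ell''$ the pulled-back points inherit their ordering from $\{\delta_k'\}$ via the monotone homeomorphism $(\W''|_{I_\ell''})^{-1}$, and different $I_\ell''$ are already separated by the strictly increasing $\{\delta_\ell''\}$, so the union over $\ell$ assembles into a candidate sequence indexed by some $K\in\bar{\IN}$. The subtle case is $K'=K''=\infty$, where one has to argue that the combined collection does not develop pathological accumulations in the interior that would break the sequential indexing of the pcsm definition.
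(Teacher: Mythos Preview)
Your approach coincides with the paper's: uniformity-preservation by iterating Proposition~\ref{prop:w:unif:preserv}, and the pcsm structure by pulling back the change points $\delta_k'$ of $\W'$ through the piecewise inverses of $\W''$ to refine each $(\delta_{\ell-1}'',\delta_\ell'')$ into countably many subintervals on which the composition of two continuous strictly monotone maps is again continuous and strictly monotone. The paper's proof is no more detailed on the enumeration issue you flag---it simply concludes that ``the total partition is countable'' without verifying that the refined change points admit a strictly increasing indexing when $K'=K''=\infty$---so the obstacle you identify is a genuine subtlety of the stated pcsm definition that the paper leaves unresolved as well.
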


We now present some examples of W-transforms, with one featuring an illustration of the partition
of square property.
\begin{example}[W-transforms]\label{eg:W:trafo}
  \begin{enumerate}
  \item\label{eg:W:trafo:shuffle} \emph{Shuffle of identity.} Let $X\sim \U(0, 1)$ and
    \begin{align*}
      T(x)=\begin{cases}
        -x + 1, & x\in[0, \frac{1}{3}],\\
        x, & x\in(\frac{1}{3}, \frac{2}{3}],\\
        x-\frac{2}{3}, & x\in(\frac{2}{3}, 1],
      \end{cases}
    \end{align*}
    with change points $t_0=0$, $t_1=1/3$, $t_2=2/3$, $t_3=1$. Then the functional form
    of \eqref{eq:W:transform} is $\W(u)=T(u)$. A plot of $\W_1\coloneqq \W$
    is shown in Figure~\ref{fig:example:W:trans} (top-left) and one sees that
    $\W_1$ is a shuffle-and-reorder of strips of the identity on $[0,1]$,
    reminiscent of the construction of shuffle-of-min; see \cite{durante2009}.
    \begin{figure}[htbp]
      \includegraphics[width=0.48\textwidth]{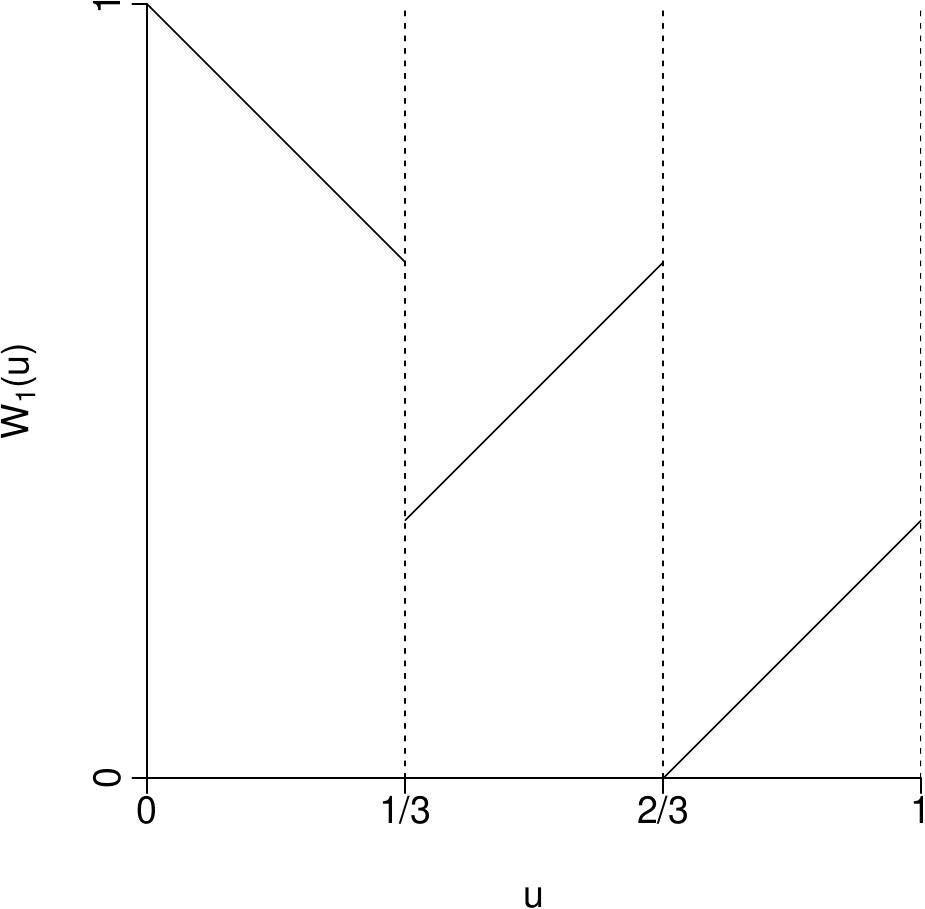}
      \hfill
      \includegraphics[width=0.48\textwidth]{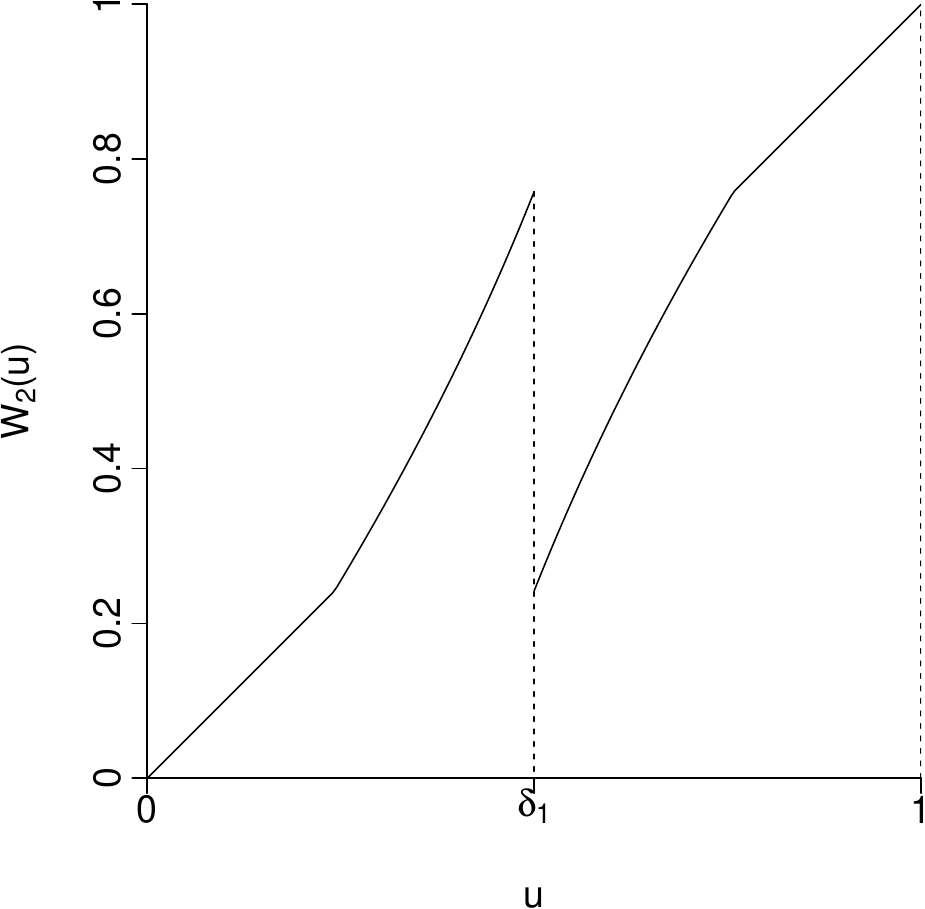}\\[5mm]
      \includegraphics[width=0.48\textwidth]{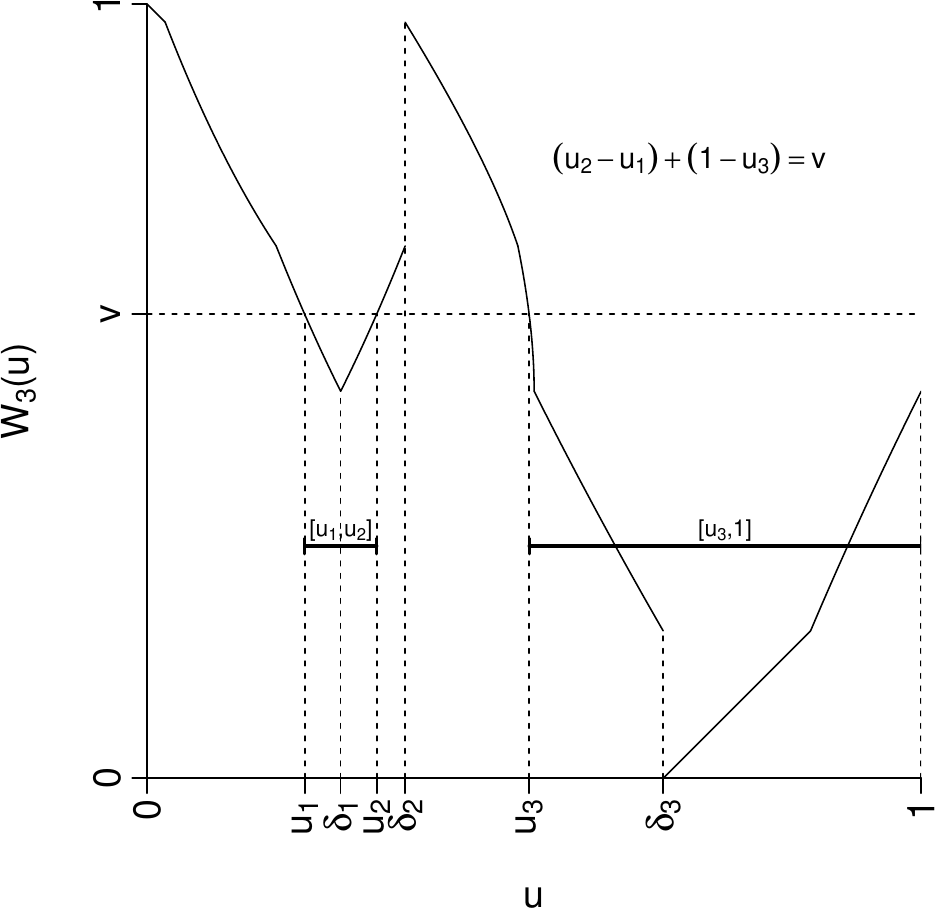}
      \hfill
      \includegraphics[width=0.48\textwidth]{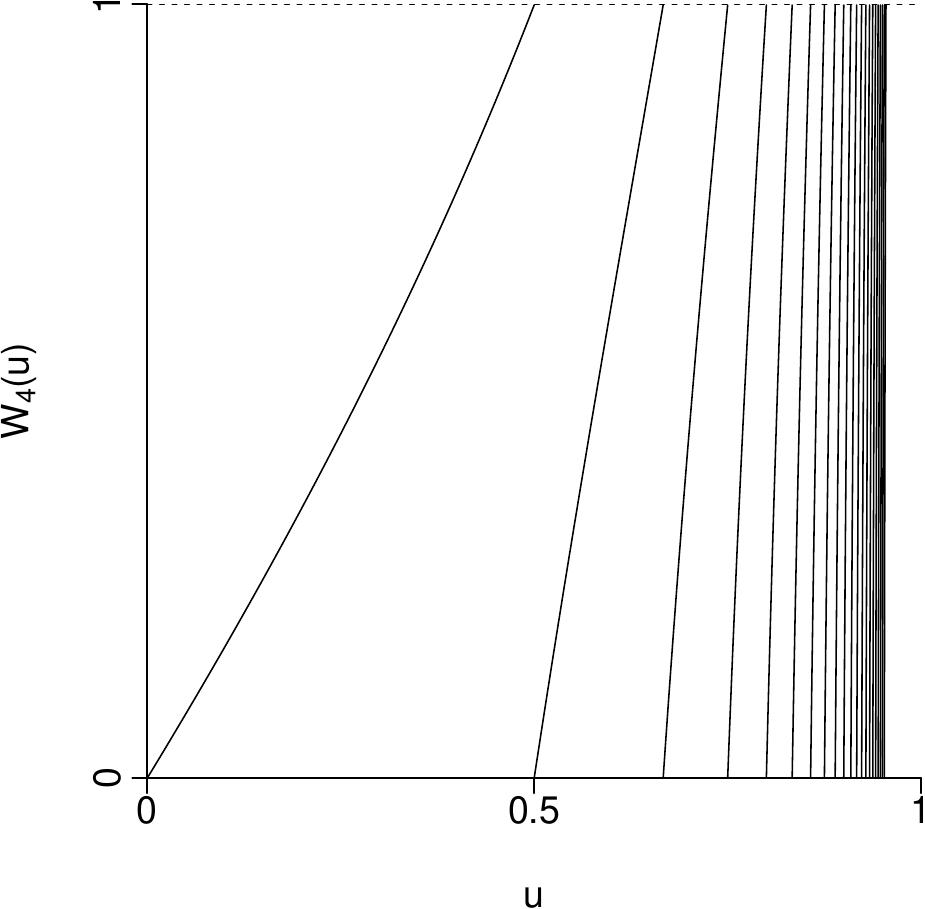}
      \caption{Shuffle of identity (top-left), piecewise increasing W-transform (top-right),
        zig-zagged W-transform with illustration of partition of square property (bottom-left) and
        W-transform with countably many change points (bottom-right).}
      \label{fig:example:W:trans}
    \end{figure}
  \item\label{eg:W:trafo:piece:in} \emph{Piecewise increasing W-transform.} Let $F_X(x) = \begin{cases}
    1 - 0.25^{x}, & x\in[0,0.5), \\
    4^{x-1}, & x\in[0.5,1],
  \end{cases}$
  and $T(x) = \begin{cases}
    x, & x\in[0,0.5], \\
    x-\alpha, & x\in(0.5,1],
  \end{cases}$
  with change points $t_0=0, t_1=0.5, t_2=1$, where $\alpha\in\IR$. If $a<0$, then the functional
  form of \eqref{eq:W:transform} is
  $\W(u) = u,\ u\in[0,1]$, and if $\alpha>0.5$, then $\W(u) = \begin{cases}
    u+\frac{1}{2-2u}-0.5, & u\in[0, 0.5],\\
    u-\frac{1}{2u}+0.5, & u\in(0.5, 1].\\
  \end{cases}$
  Otherwise, if $\alpha\in[0,0.5]$, then
  \begin{align*}
    \W(u) = \begin{cases}
      u, & u\in[0, 1-0.25^{0.5-\alpha}], \\[10pt]
      u+\dfrac{4^{\alpha-1}}{1-u}-0.5, & u\in(1-0.25^{0.5-\alpha} ,0.5], \\[10pt]
      u-\dfrac{4^{\alpha-1}}{u}+0.5, & u\in(0.5, 0.25^{0.5-\alpha}], \\[10pt]
      u, & u\in(0.25^{0.5-\alpha},1].
    \end{cases}
  \end{align*}
  The change points of $\W$ are then $\delta_0 = 0, \delta_1 = F_X(0.5)=0.5$
  and $\delta_2 = 1$. A plot of $\W_2\coloneqq\W$ for $\alpha=0.3$ is shown in
  Figure~\ref{fig:example:W:trans} (top-right).
\item\label{eg:W:trafo:zig:zag} \emph{W-transform with more change points.} Let $X \sim \U(0, 1)$.
  Let $T$ have change points $t_0=0, t_1 = \frac{1}{4}, t_2=\frac{1}{3}, t_3=\frac{2}{3}$, $t_4=1$,
  and define
  \begin{align*}
    T(x) = \begin{cases}
      \exp(3(x-\frac{1}{4})^2), & x\in[0,\frac{1}{3}],\\
      -x+\frac{3}{2}, & x\in(\frac{1}{3},\frac{2}{3}], \\
      \frac{1}{x}, & x\in(\frac{2}{3},1].\\
    \end{cases}
  \end{align*}
  Then the corresponding W-transform $\W$ has four pieces and exhibits
  a ``zig-zag'' pattern; see Figure~\ref{fig:example:W:trans} (bottom-left). Its
  change points are $\delta_0=0$, $\delta_1 = 1/4$, $\delta_2=1/3$,
  $\delta_3=2/3$, $\delta_4=1$. According to
  Proposition~\ref{prop:W:property}~\ref{prop:W:property:partition:sq}, with
  $v=0.6$, we have $\W^{-1}_{|1}(0.6)\approx 0.20328$,
  $\W^{-1}_{|2}(0.6) \approx 0.29672$, and
  $\W^{-1}_{|3}(0.6)\approx 0.49343$. One can thus identify each one of
  the $S_k(v)$'s and indeed confirm that
  $\P(\biguplus_{k=1}^{K} S_k(v)) = v$ (shown in the bottom-left panel of
  Figure~\ref{fig:example:W:trans}). The explicit functional form of
  $\W$ is omitted here for brevity, but can be given explicitly via
  \eqref{eq:W:transform}.
\item \emph{Countably infinitely many change points.}
  Let $X$ follow a Pareto Type~I distribution with
  distribution function $F_X(x)=1-1/x^2$, $x\in[1,\infty)$, and $T(x)=x^2-\lceil x^2 \rceil+1$, $x\in [1,
  \infty)$. Then the functional form of \eqref{eq:W:transform} is
  \begin{align*}
    \W(u)=\sum_{n\in\bar{\IN}}F_X\Bigl(\sqrt{n+(F_X^{-1}(u))^2-\lceil (F_X^{-1}(u))^2
    \rceil+1}\Bigr)-F_X(\sqrt{n}),
  \end{align*}
  A plot of $\W_4\coloneqq \W$ is shown in Figure~\ref{fig:example:W:trans} (bottom-right).
\end{enumerate}
\end{example}

As we have seen, our W-transforms generalise the v-shape of v-transforms to allow for
more general piecewise monotone functions. The top-right plot of
Figure~\ref{fig:example:W:trans} motivates the question when W-transforms are
piecewise linear. We now provide three sufficient conditions under which this holds.
The first one exploits the injectivity of $T$; the second one takes $T=F_X$; and the
last one considers symmetry across admissibly dissected pieces of $T$, where
``dissected'' refers to the partitioning of a monotone piece of $T$ into subpieces.

\begin{proposition}[Sufficient conditions for $\W$ to be piecewise linear]\label{prop:suff:con:lin}
  \begin{enumerate}
  \item Let $T:D\rightarrow \IR$ be injective except possibly at the change
    points $t_0, \dots, t_K$. Then, for any continuous $F_X$ with
    $\supp(F_X)= D$, the W-transform $\W$ in~\eqref{eq:W:transform} is
    piecewise linear.
  \item If $T=F_X$, then $\W(u)=u$, $u\in[0, 1]$.
  \item Let $X\sim\U(t_0, t_K)$ with $K<\infty$ and $T:D\rightarrow \IR$. Suppose there exist
    $\{t_0', \dots, t_{K'}'\}\supseteq \{t_0, \dots, t_K\}$ with $t_0'=t_0$ and $t_{K'}'=t_K$ such that $T$
    is pcsm with $K'\ge K$ pieces. Then, the restriction $T_{|k'}\coloneqq T|_{(t'_{k-1}, t_k']}$ is
    continuous and strictly monotone for any $k'\in\{1, 2\dots, K'\}$. For any fixed $\ell'\in\{1,
    \dots, K'\}$, if for all $k'\in\{1, \dots, K'\}$ one of the following properties holds, then
    $\W$ is piecewise linear.
    \begin{enumerate}[label=\roman*), labelwidth=\widthof{iii)}]
    \item $\ran(T_{|k'})\cap\ran(T_{|\ell'})\subseteq\{T_{|k'}(t_{\ell'-1}')\}$ (disjoint range,
      where $\ran(f)$ denotes the range of the function $f$);
    \item $T_{|k'}(x)=T_{|\ell'}(x+t'_{k'}-t_{\ell'}')$, $x\in (t_{k'-1}', t'_{k'}]$ (translation invariance); or
    \item $T_{|k'}(x)=T_{|\ell'}(-x+t_{k'-1}'+t'_{\ell'})$, $x\in (t_{k'-1}', t'_{k'}]$ (reflection invariance).
    \end{enumerate}
  \end{enumerate}
\end{proposition}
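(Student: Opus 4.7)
The plan is to handle the three assertions in turn, with parts (1) and (2) amounting to short direct computations and (3) requiring piece-by-piece bookkeeping. For (1), I would observe that injectivity of $T$ (except at the finitely many change points) combined with continuity on each piece forces the images $\ran T_{|k}$, $k=1,\dots,K$, to be pairwise essentially disjoint. Fixing $u \in (\delta_{k-1}, \delta_k]$ and writing $x = F_X^{-1}(u)$, I decompose
\begin{align*}
  \W(u) = \sum_{j=1}^{K} \P\bigl(X \in (t_{j-1}, t_j],\ T_{|j}(X) \le T_{|k}(x)\bigr).
\end{align*}
Since $T_{|k}(x) \in \ran T_{|k}$ lies on a fixed side of each $\ran T_{|j}$ for $j \ne k$, these summands reduce to constants (either $0$ or $F_X(t_j)-F_X(t_{j-1})$); continuity of $F_X$ collapses the $j=k$ term to $u-\delta_{k-1}$ or $\delta_k-u$ depending on the monotonicity of $T_{|k}$, yielding piecewise linearity of $\W$.

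For (2), the probability integral transform gives $T(X) = F_X(X) \sim \U(0,1)$ since $F_X$ is continuous, so $F_{T(X)}$ is the identity on $[0,1]$ and $\W(u) = F_X(F_X^{-1}(u)) = u$ on $(0,1]$ by continuity of $F_X$.

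For (3), uniformity of $X$ on $(t_0, t_K)$ makes both $F_X$ and $F_X^{-1}$ affine. Setting $L = t_K - t_0$ and $x = F_X^{-1}(u) \in (t_{k'-1}', t_{k'}']$, I write
\begin{align*}
  \W(u) = \frac{1}{L} \sum_{j'=1}^{K'} \lambda\bigl(\{s \in (t_{j'-1}', t_{j'}'] : T_{|j'}(s) \le T_{|k'}(x)\}\bigr)
\end{align*}
and analyse each summand via the relationship of piece $j'$ to the reference piece $\ell'$. Conditions (ii) and (iii) express $T_{|j'}$ as an affine reparametrisation of $T_{|\ell'}$, forcing $\ran T_{|j'} = \ran T_{|\ell'}$ and making the transition map $T_{|j'}^{-1}\circ T_{|\ell'}$ affine; condition (i) says $\ran T_{|j'}$ and $\ran T_{|\ell'}$ share at most one point. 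Combined with the same dichotomy applied to $k'$, each summand is either constant in $x$ (when the relevant ranges are essentially disjoint, so $T_{|k'}(x)$ stays on a fixed side of $\ran T_{|j'}$) or an affine function of $T_{|j'}^{-1}(T_{|k'}(x))$ (when the ranges coincide), which is a composition of affine maps. Composing with the affine $u \mapsto x$ then gives $\W$ piecewise linear on each piece.

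The main obstacle lies in (3): ensuring that when both $j'$ and $k'$ fall under condition (i) with $\ell'$, the $j'$-summand is still constant as $x$ varies across the piece. This should follow by reading (i) as placing each category-(i) range on a prescribed side of $\ran T_{|\ell'}$ through the designated shared endpoint, so that the category-(i) ranges form an ordered chain with pairwise essentially trivial intersections. A secondary bookkeeping issue is tracking orientations through compositions of translations and reflections, which I would dispatch by a short case analysis on the four possible sign combinations.
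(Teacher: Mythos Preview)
Your treatment of parts (1) and (2) is correct and essentially identical to the paper's argument.

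For part (3) there is a genuine gap, and it is precisely the obstacle you flag at the end. Your proposed resolution---that condition (i) forces the category-(i) ranges to ``form an ordered chain with pairwise essentially trivial intersections''---is false. Condition (i) only says that $\ran T_{|k'}$ meets $\ran T_{|\ell'}$ in at most one point; it says nothing about how two category-(i) ranges relate to \emph{each other}. For instance, with $\ran T_{|\ell'}=[0,1]$, $\ran T_{|j'}=[2,4]$, $\ran T_{|k'}=[3,5]$, both $j'$ and $k'$ satisfy (i) relative to $\ell'$, yet their ranges overlap on $[3,4]$. In that situation the summand $\lambda\bigl(\{s:T_{|j'}(s)\le T_{|k'}(x)\}\bigr)$ depends on $T_{|j'}^{-1}\circ T_{|k'}$, which need not be affine, so your argument breaks down on such pieces.

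The paper avoids this entirely by never separating the reference index $\ell'$ from the piece containing $x$: it takes $\ell'$ to be the index with $F_X^{-1}(u)\in(t'_{\ell'-1},t'_{\ell'}]$ and invokes the hypothesis \emph{for that} $\ell'$ (``for any fixed $\ell'$'' is read as a universal quantifier, so the hypothesis is available for every choice). The threshold $T_{|\ell'}(x)$ then lies in $\ran T_{|\ell'}$, and each summand indexed by $j'$ is handled by comparing $j'$ directly to $\ell'$: category (i) gives a constant summand, categories (ii)/(iii) give explicit linear summands. The problematic interaction between two category-(i) pieces never arises. Varying $\ell'$ over all refined pieces then yields piecewise linearity. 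You should restructure your argument the same way: set $k'=\ell'$ rather than treating $\ell'$ as a fixed external reference.
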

Interpreted geometrically, translation invariance implies that the graph of $T$ is identical
on the intervals $(t'_{k'-1}, t'_{k'}]$ and $(t'_{\ell'-1},t'_{\ell'}]$, and reflection
invariance means that the graph on $(t'_{k'-1}, t'_{k'}]$ is the mirror image of the
graph on $(t'_{\ell'-1},t'_{\ell'}]$.

\cite[Proposition 6]{porubskysalatstrauch1988} showed the following result, which we will frequently
refer to.
\begin{lemma}[Characterisation of uniformity-preservation under
  differentiability]\label{lem:unif:preserv}
  Let $\W: [0,1]\rightarrow [0,1]$ be piecewise differentiable. Then $\W$ is
  uniformity-preserving if and only if $\sum_{u\in\W^{-1}(v)} \frac{1}{|\W'(u)|}=1$
  for almost every $v\in [0,1]$.
\end{lemma}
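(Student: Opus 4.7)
The plan is to show that $g(v)\coloneqq\sum_{u\in\W^{-1}(v)}1/|\W'(u)|$ is (a.e.) the density of $\W(U)$ for $U\sim\U(0,1)$; the stated equivalence then follows because $\W$ is uniformity-preserving iff this density coincides Lebesgue-a.e.\ with the $\U(0,1)$ density $\I_{[0,1]}$.

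First, I decompose $[0,1]$, up to a Lebesgue null set (the at most countably many change points), into the open pieces $(a_{k-1},a_k)$ on which $\W$ is differentiable. On any such piece $\W'$ must be nonzero almost everywhere: otherwise $\W$ would be constant on a set of positive measure, and the preimage of that single value would have positive measure, violating $\W(U)\sim\U(0,1)$ in the ``only if'' direction (and making the sum $g(v)$ contain a $1/0$ term on a non-null set of $v$, ruling out the hypothesis in the ``if'' direction). Further subdividing each piece into maximal open subintervals $I_{k,j}$ on which $\W$ is strictly monotone (losing only a further null set of critical points where $\W'=0$), each restriction $\W|_{I_{k,j}}$ is a $C^1$ bijection onto its image $J_{k,j}$.

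Next, for any Borel set $B\subseteq[0,1]$, the one-dimensional change of variables formula applied to each $I_{k,j}$ gives
\begin{align*}
  \lambda\bigl(\{u\in I_{k,j}:\W(u)\in B\}\bigr)=\int_{B\cap J_{k,j}}\frac{1}{|\W'(\W|_{I_{k,j}}^{-1}(v))|}\,\dd v.
\end{align*}
Summing over $(k,j)$ and interchanging sum with integral via Tonelli (the integrands being nonnegative) yields
\begin{align*}
  \P(\W(U)\in B)=\lambda(\W^{-1}(B))=\int_B g(v)\,\dd v,
\end{align*}
so $g$ is the Lebesgue density of $\W(U)$.

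Finally, $\W(U)\sim\U(0,1)$ iff $\P(\W(U)\in B)=\lambda(B)$ for every Borel $B\subseteq[0,1]$, which, by uniqueness of densities, is equivalent to $g(v)=1$ for Lebesgue-a.e.\ $v\in[0,1]$. The main obstacle is the measure-theoretic bookkeeping across the possibly countably many pieces and subpieces, and verifying that the union of change points and critical points (where $\W'$ is undefined or zero) is a Lebesgue null set whose image can safely be absorbed into the ``almost every $v$'' of the statement; once this hygiene is in place, the density identification and the stated equivalence follow immediately.
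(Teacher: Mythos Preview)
The paper does not supply its own proof of this lemma; it is quoted verbatim as Proposition~6 of \cite{porubskysalatstrauch1988} and used as an external result. Your argument---identifying $g(v)=\sum_{u\in\W^{-1}(v)}1/|\W'(u)|$ as the Lebesgue density of $\W(U)$ via piecewise change of variables on the monotone subpieces, then invoking uniqueness of densities---is the standard and correct route to this characterisation, so there is nothing to compare against in the paper itself.

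One small gap worth tightening: the claim that $\W'=0$ on a set of positive measure forces $\W$ to be constant on a set of positive measure is not correct as stated (a $C^1$ function can be strictly increasing yet have derivative zero on a fat Cantor set). What you actually need, and what holds, is the one-dimensional Sard theorem: the \emph{image} $\W(\{u:\W'(u)=0\})$ is Lebesgue-null, so the critical values are absorbed into the ``almost every $v$'' exception set regardless of the measure of the critical \emph{points}. With that correction the bookkeeping goes through exactly as you outline.
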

Lemma~\ref{lem:unif:preserv} implies that $|\W'(u)|\ge 1$ almost
everywhere, meaning $\W$ is only allowed to \emph{stretch}
neighbourhoods (that is for any $J\subseteq [0,1]$, the Lebesgue
measure $\lambda$ satisfies $\lambda(J)\le\lambda(\W(J))$). Intuitively,
$\W$ cannot ``compress'' intervals while preserving uniformity, and any
expansion must be counterbalanced by the preimage condition
$\sum_{u\in\W^{-1}(v)} \frac{1}{|\W'(u)|}=1$. In
Section~\ref{subsec:tail}, we discuss how this constraint influences tail
dependence properties.

To conclude this section, we define the periodicity of a W-transform. The only known such W-transforms
are the \emph{interval-exchange transformations} (IET) as defined by \cite{Keane1975}, which are piecewise
linear and uniformity-preserving. Periodic W-transforms will help us identify shuffle-of-min copulas
of \cite{durantesarkocisempi2009} as a special case of ``W-transformed copulas'' in Section~\ref{subsec:distr} later.
\begin{definition}[Periodic W-transforms]
  Let $\W:[0,1]\rightarrow[0,1]$ be a W-transform and
  $\W^p\coloneqq\W\circ\cdots\circ\W$ be the $p$-fold composition of $\W$.
  Then $\W$ is \emph{$p$-periodic} if there exists a $p\in \IN$ such that, for almost every $u\in[0,1]$,
  one has
  \begin{align*}
    \W^{p}(u)=u
  \end{align*}
  and $p$ is the smallest such natural number, that is for $q\in\{1,\dots,p-1\}$, $\W^q(u)\neq u$ on a
  set of positive Lebesgue measure.
\end{definition}

We now provide a necessary condition for W-transforms to be $p$-periodic.
\begin{proposition}[Only bijective piecewise linear W-transforms can be $p$-periodic]\label{prop:linear:period}
  Let $\W:[0, 1]\rightarrow [0,1]$ be a W-transform. If $\W$
  is $p$-periodic, then for a Lebesgue null set $N$, $\W$ is bijective
  on $[0, 1]\setminus N$ and piecewise linear on $[0,1]$.
\end{proposition}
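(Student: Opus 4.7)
The plan is to establish the two conclusions in sequence: first, that $\W$ is almost-everywhere bijective as a direct consequence of the periodicity $\W^p = \mathrm{id}$ (a.e.); and second, that piecewise linearity follows by combining this a.e.\ bijectivity with the uniformity-preservation of $\W$ and its pcsm structure guaranteed by Proposition~\ref{prop:W:property}. A key feature of this approach is that it avoids invoking differentiability (so Lemma~\ref{lem:unif:preserv} is not needed), relying instead on the measure-theoretic fact that $\W$ preserves Lebesgue measure on subintervals of each monotone piece.

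For the bijectivity step, I would let $N_0 = \{u \in [0,1] : \W^p(u) \neq u\}$, which is Lebesgue null by hypothesis. Since $\W$ is uniformity-preserving, preimages of null sets are null, so $N \coloneqq \bigcup_{j=0}^{p-1} \W^{-j}(N_0)$ is also null. For any $u \in [0,1]\setminus N$, all iterates $\W^j(u)$ with $j=0,\dots,p-1$ lie outside $N_0$, and in particular $\W^p(u)=u$. If $u_1, u_2 \in [0,1]\setminus N$ satisfy $\W(u_1) = \W(u_2)$, applying $\W^{p-1}$ gives $u_1 = \W^p(u_1) = \W^{p-1}(\W(u_1)) = \W^{p-1}(\W(u_2)) = \W^p(u_2) = u_2$, establishing injectivity on $[0,1]\setminus N$. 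Surjectivity modulo a null set follows from uniformity-preservation applied to the complement: since the preimage of $[0,1]\setminus \W([0,1])$ under $\W$ is empty, this set has Lebesgue measure zero, so $\W$ restricts to a bijection between full-measure subsets of $[0,1]$.

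For the piecewise-linearity step, I would fix any piece $(\delta_{k-1},\delta_k)$, on which $\W$ is continuous and strictly monotone by Proposition~\ref{prop:W:property}. For any closed subinterval $[a,b]\subseteq(\delta_{k-1},\delta_k)$, the intermediate value theorem yields that $\W([a,b])$ is the interval with endpoints $\W(a)$ and $\W(b)$. Uniformity-preservation then gives $\lambda(\W^{-1}(\W([a,b]))) = \lambda(\W([a,b])) = |\W(b)-\W(a)|$, while the a.e.\ injectivity from the first step forces $\W^{-1}(\W([a,b]))$ to coincide with $[a,b]$ up to a null set, so $\lambda(\W^{-1}(\W([a,b])))=b-a$. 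Hence $|\W(b)-\W(a)|=b-a$ for all $a<b$ in the piece, which forces $\W$ to be affine with slope $\pm 1$ on each piece; by continuity this extends to the closed pieces, giving piecewise linearity on $[0,1]$. The main obstacle I anticipate is the careful bookkeeping of null sets, particularly ensuring that the ``other pieces' contributions'' to $\W^{-1}(\W([a,b]))$ are indeed negligible by virtue of the a.e.\ injectivity established in the first step, rather than tacitly assumed; making this rigorous requires that the images of distinct pieces overlap only on null sets, which in turn follows from a.e.\ injectivity across the full domain $[0,1]\setminus N$.
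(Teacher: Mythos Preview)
Your bijectivity argument matches the paper's. For piecewise linearity you depart from the paper, which simply invokes Lemma~\ref{lem:unif:preserv} to get $|\W'|=1$ almost everywhere from the (essential) single-valuedness of $\W^{-1}$; you instead aim for $|\W(b)-\W(a)|=b-a$ directly from measure preservation. The step that does not go through is the claim that a.e.\ injectivity forces $\W^{-1}(\W([a,b]))=[a,b]$ modulo a null set. What measure preservation gives for free is only the inclusion $[a,b]\subseteq\W^{-1}(\W([a,b]))$, hence the one-sided bound $|\W(b)-\W(a)|\ge b-a$. To rule out extra mass from another piece $k'\neq k$ you would need $\W$ restricted to a piece to satisfy Luzin's N property: if a non-degenerate interval $J$ in piece~$k'$ has $\W(J)\subseteq\W([a,b])$, then for each $u\in J\setminus N$ the partner $(\W|_{[a,b]})^{-1}(\W(u))$ must lie in $N\cap[a,b]$, and concluding $\lambda(J)=0$ from this requires $\W([a,b]\cap N)$ to be null---which does not follow from a.e.\ injectivity plus measure preservation alone (continuous strictly monotone maps need not send null sets to null sets). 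Your proposed resolution (``images of distinct pieces overlap only on null sets'') runs into the same obstruction.

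A fix that stays within your framework: keep only the one-sided inequality and iterate. By Proposition~\ref{prop:W:compo}, $\W^p$ is pcsm; on each of its pieces it is continuous and equals the identity on a full-measure (hence dense) subset, so $\W^p=\mathrm{id}$ there. For $[a,b]$ contained in a piece of $\W^p$, each image $\W^j([a,b])$, $j=0,\dots,p-1$, lies in a single piece of $\W$ (this is how the pieces of the composite are constructed), so chaining the one-sided bound yields
\[
b-a=|\W^p(b)-\W^p(a)|\ge|\W^{p-1}(b)-\W^{p-1}(a)|\ge\cdots\ge|\W(b)-\W(a)|\ge b-a,
\]
forcing equality throughout. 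Hence $\W$ is affine with slope $\pm1$ on every piece of $\W^p$; since these sub-pieces refine the pieces of $\W$ and $\W$ is continuous on each of the latter, the intercepts agree across adjacent sub-pieces and $\W$ is affine on each of its own pieces.
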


The W-transform given in Example~\ref{eg:W:trafo}~\ref{eg:W:trafo:shuffle} is
4-periodic. On the other hand, the function considered by \cite{Nogueira1989} in
the discussion of IETs with
\begin{align*}
  \W(u)=\begin{cases}
    \frac{2}{3}-\alpha+u, & u\in[0,\alpha],\\
    \frac{1}{3}+u-\alpha, & u\in(\alpha,\frac{1}{3}],\\
    \frac{4}{3}-u, & u\in(\frac{1}{3}, \frac{2}{3}],\\
    1-u, & u\in(\frac{2}{3}, 1],
  \end{cases}
\end{align*}
where $\alpha\in(0,1/3)$ is irrational, is not $p$-periodic. This can be quickly seen by observing
that $\W^3$ maps $[0,1/3]$ to itself by a translation: $\W^3(u)=
u-\alpha\ \mod(1/3),\ u\in[0,1/3]$, which is equivalent to rotating a circle by an irrational multiple
of its circumference and therefore, $\W$ is not $p$-periodic.
Hence, the converse of Proposition~\ref{prop:linear:period} is not true in general.

Bijective piecewise linear W-transforms as in
Proposition~\ref{prop:linear:period} have been identified as IETs in
\cite{Keane1975} and \cite{Nogueira1989}. In their definition, all pieces of
IETs are defined on non-degenerate open subintervals of $[0, 1]$, but we
slightly extended the domain of W-transforms to the endpoints of these
subintervals. For the sake of uniformity-preservation, this extension is
irrelevant since these endpoints are part of the null set $N$ in
Proposition~\ref{prop:linear:period}.

\subsection{A parametric family}
We now propose a flexible parametric family of W-transforms which we call \emph{piecewise surjective and strictly
  monotone (pssm)} W-transforms that allow us to control three features:
\begin{enumerate}
\item Change points: The number $K$ and the locations $\{\delta_k\}_{k=1}^{K}\subseteq[0, 1]$
  can be freely specified.
\item Monotonicity: The monotonicity of each piece is determined by parameters
  $\{r_k\}_{k=1}^K\subseteq\{0,1\}^K$, where $r_k=0$ ($r_k=1$) means that
  $T_{|k}\coloneqq T|_{(t_{k-1},t_k]}$ is decreasing (increasing).
\item Shape: The non-linearity of the resulting W-transform is controlled by the
  base distribution $F_X$.
\end{enumerate}
The family of pssm W-transforms generalises the class of v-transforms (recovered
for $K=2$, $r_0=0$, and $r_1=1$, see Example~\ref{eg:param:fam}~\ref{eg:param:fam:v:trans}
below) to more flexible piecewise functions. To provide its
form, let $T:[0, 1]\rightarrow[0,1]$ be pcsm with change points
$0=t_0<t_1<t_2<\cdots<t_K=1$ where $K\in\bar{\IN}$. For $k\in\{1, \dots, K\}$,
the $k$th piece $T_{|k}$ is given by
\begin{align*}
  T_{|k}(t)=(-1)^{1-r_k}\frac{t-c_k}{t_k-t_{k-1}}, \quad \text{where } c_k=r_kt_{k-1}
  +(1-r_k)t_k,
\end{align*}
and $r_k\in\{0, 1\}$ indicates whether $T_{|k}$ is increasing. The resulting
pssm W-transform $\W_{\bm{t}, \bm{r}, F_X}$ has change points at
$\{F_X(t_k)\}_{k=0}^K$ and is given by
\begin{align}\label{eq:W:param:fam}
  \W_{\bm{t}, \bm{r}, F_X}(u)&=\sum_{k=1}^K\bigl[F_X\bigl(T(F_X^{-1}(u))t_k+
                               \bigl(1-T(F_X^{-1} (u))\bigr)t_{k-1}\bigr)-F_X(t_{k-1})\bigr]^{r_k}\times \notag\\
                             &\phantom{=\sum_{k=1}^K\bigl[}\bigl[F_X(t_k)-F_X\bigl(T(F_X^{-1}(u))t_{k-1}+\bigl(1-
                               T(F^{-1}_X(u))t_k\bigr)\bigr)\bigr]^{1-r_k}.
\end{align}

The following example shows that v-transforms and piecewise linear surjective
W-transforms are pssm W-transforms.
\begin{example}[Flexibility of pssm W-transforms]\label{eg:param:fam}
  \begin{enumerate}
  \item\label{eg:param:fam:v:trans} \emph{v-transforms.} Consider an absolutely continuous $F_X$
    on $[0, 1]$. Let $\bm{t}=(0, F_X^{-1}(\delta), 1)$ where $\delta\in(0, 1)$ is the fulcrum
    and $\bm{r}=(0, 1)$. Then \eqref{eq:W:param:fam} can be written as
    \begin{align}\label{eq:param:v:trans}
      \W_{(0, F_X^{-1}(\delta), 1), (0,1), F_X}(u)=\begin{cases}
        F_X\bigl(1-\frac{1-F_X^{-1}(\delta)}{F_X^{-1}(\delta)}F_X^{-1}(u)\bigr)-u, & u\leq \delta,\\
        u-F_X\bigl(\frac{1-F_X^{-1}(u)}{1-F_X^{-1}(\delta)}F_X^{-1}(\delta)\bigr), & u>\delta.
      \end{cases}
    \end{align}
    Example~\ref{eg:v:trans:use}~\ref{eg:v:trans:use:chara:v} gave necessary and sufficient
    conditions for a function $\mathcal{V}:[0,1]\rightarrow[0,1]$ to be a
    v-transform.  If one takes
    \begin{align*}
      G(x)=\frac{1-F_X(1-\frac{1-F_X^{-1}(\delta)}{F_X^{-1}(\delta)}F_X^{-1}(\delta x))}{1-
      \delta}, \quad x\in[0,1]
    \end{align*}
    in \eqref{eq:vtrans}, one obtains that \eqref{eq:param:v:trans} is a
    v-transform. The left-hand side of Figure~\ref{fig:example:param:fam} shows an example
    for which $F_X(x)=x^2$, $x\in[0, 1]$, $\bm{t}=(0, 0.5, 1)$ so that
    $G(x)=\frac{4\sqrt{x}-x}{3}$, $x\in[0, 1]$, and $\delta= 0.25$.
    \begin{figure}[htbp]
      \centering
      \includegraphics[width=0.48\textwidth]{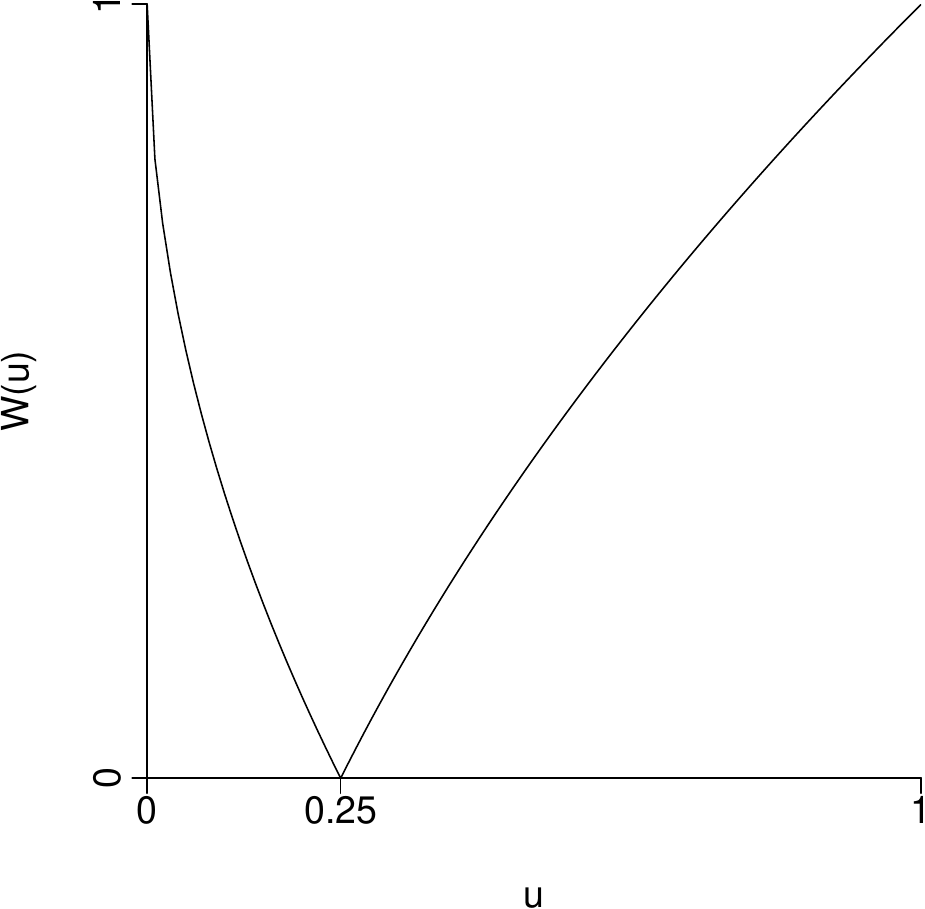}\hfill
      \includegraphics[width=0.48\textwidth]{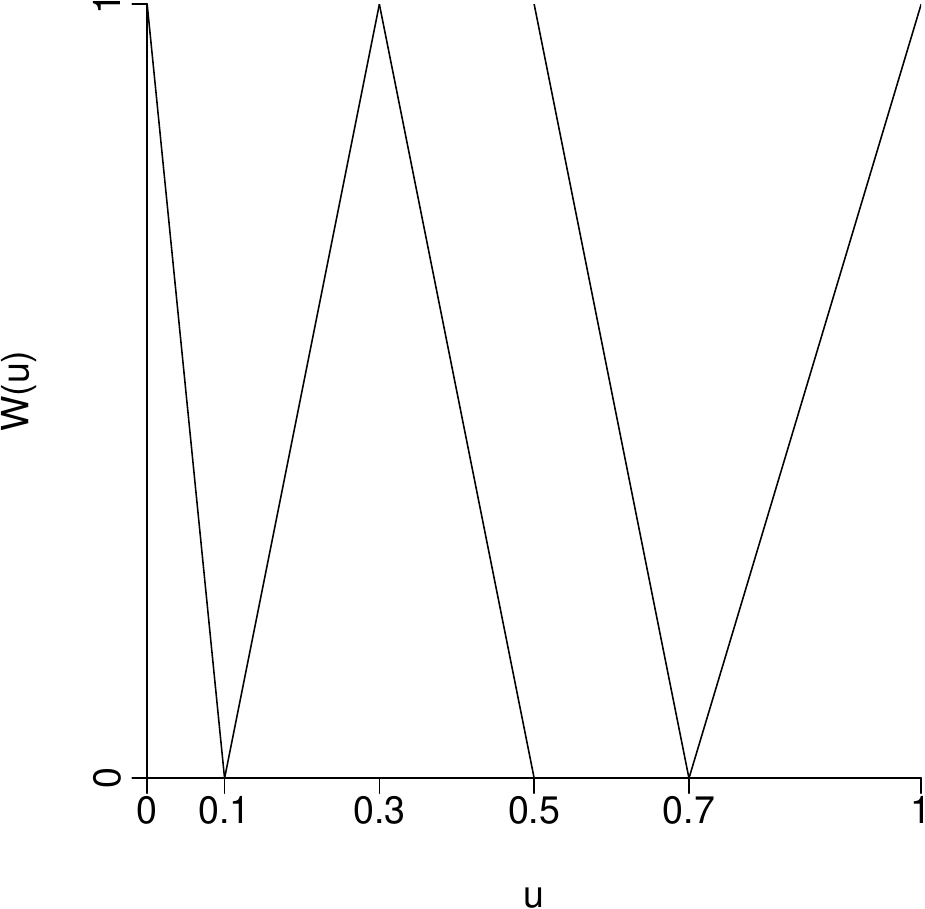}
      \caption{A v-transform recovered from \eqref{eq:W:param:fam} for
        $G(x)=\frac{4\sqrt{x}-x}{3}$ (left), and
        a pssm W-transform $\W_{\bm{t}, \bm{r}, F_X}$ constructed from
        \eqref{eq:W:param:fam} for $\bm{t}=(0, 0.1, 0.3, 0.5, 0.7, 1)$, $\bm{r}=(0, 1, 0, 0, 1)$
        and $X\sim\U(0, 1)$ (right).}
      \label{fig:example:param:fam}
    \end{figure}
  \item \emph{Piecewise surjective and linear W-transforms.}
    Let $X\sim\U(0,1)$. Then \eqref{eq:W:param:fam} reduces to
    \begin{align*}
      \W_{\bm{t}, \bm{r}, F_X}(u)=\sum_{k=1}^K[(t_k-t_{k-1})T(u)]^{r_k}
      [2t_k-(t_k+t_{k-1})T(u)]^{1-r_k}.
    \end{align*}
    Since $T$ is piecewise linear, so is the W-transform $\W_{\bm{t}, \bm{r}, F_X}$.
    The right-hand side of Figure~\ref{fig:example:param:fam} shows an example for which
    $\bm{t}=(0, 0.1, 0.3, 0.5, 0.7, 1)$ and $\bm{r}=(0, 1, 0, 0, 1)$.
  \end{enumerate}
\end{example}

To conclude this section, we provide a lemma for the derivatives of
$\W_{\bm{t}, \bm{r}, F_X}$ at both endpoints of the support. In
Section~\ref{subsec:cop:ordinal:sum} later, this will be useful for
modifying the tails of a ``W-transformed copula''.

\begin{lemma}[Derivatives at the boundary]\label{lem:param:derivative}
  Let $\W_{\bm{t}, \bm{r}, F_X}$ be a pssm W-transform as
  in~\eqref{eq:W:param:fam} with absolutely continuous $F_X$ and density
  $f_X$. Let $\bm{r}=\bm{1}$, that is $\W_{\bm{t}, \bm{1}, F_X}$ is
  piecewise increasing.
  \begin{enumerate}
  \item If $f_X(0+)=\infty$ and $f_X(x) < \infty$, $x\in(0, 1)$, then $\W'_{
      \bm{t}, \bm{1}, F_X}(0+)=1$.
  \item If $f_X(1-)=\infty$ and $f_X(x) < \infty$, $x\in(0, 1)$, then $\W'_{
      \bm{t}, \bm{1}, F_X}(1-)=1$.
  \end{enumerate}
\end{lemma}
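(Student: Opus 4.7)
The plan is to exploit the particular structure of \eqref{eq:W:param:fam} under $\bm{r} = \bm{1}$, write $\W_{\bm{t}, \bm{1}, F_X}$ as a principal linear term in $u$ plus a remainder, and then use the boundary blow-up of $f_X$ to kill the derivative of the remainder.

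First I would simplify \eqref{eq:W:param:fam}: since $r_k = 1$ for every $k$, each $[\,\cdot\,]^{1-r_k}$ factor collapses to $1$ and, writing $\alpha(u) \coloneqq T(F_X^{-1}(u))$,
\begin{align*}
\W_{\bm{t},\bm{1},F_X}(u) = \sum_{k=1}^K \bigl[F_X\bigl(t_{k-1} + \alpha(u)(t_k - t_{k-1})\bigr) - F_X(t_{k-1})\bigr].
\end{align*}
For part~(1), I would restrict to $u \in (0, \delta_1]$, so $F_X^{-1}(u)$ sits in the first piece of $T$ and $\alpha(u) = F_X^{-1}(u)/t_1$ (recall $t_0 = 0$). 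The $k=1$ summand then equals $F_X(F_X^{-1}(u)) - F_X(0) = u$, giving
\begin{align*}
\W_{\bm{t},\bm{1},F_X}(u) = u + \sum_{k=2}^K \bigl[F_X\bigl(t_{k-1} + \alpha(u)(t_k - t_{k-1})\bigr) - F_X(t_{k-1})\bigr].
\end{align*}

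Next I would differentiate. Using $\alpha'(u) = 1/(t_1 f_X(F_X^{-1}(u)))$ and the chain rule,
\begin{align*}
\W'_{\bm{t},\bm{1},F_X}(u) = 1 + \frac{1}{t_1 f_X(F_X^{-1}(u))} \sum_{k=2}^K f_X\bigl(t_{k-1} + \alpha(u)(t_k - t_{k-1})\bigr)(t_k - t_{k-1}).
\end{align*}
As $u \to 0^+$, $F_X^{-1}(u) \to 0$ and $f_X(F_X^{-1}(u)) \to \infty$ by assumption. For each $k \geq 2$, the argument $t_{k-1} + \alpha(u)(t_k - t_{k-1})$ stays in the compact set $[t_{k-1}, t_k] \subseteq (0,1)$ on which $f_X$ is finite, so each summand in the bracket converges to $f_X(t_{k-1})(t_k - t_{k-1})$. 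The prefactor $1/(t_1 f_X(F_X^{-1}(u)))$ vanishes, and the whole second term tends to $0$; therefore $\W'_{\bm{t},\bm{1},F_X}(0+) = 1$. Part~(2) is fully symmetric: for $u \in (\delta_{K-1},1]$ I would write $\alpha(u) = (F_X^{-1}(u) - t_{K-1})/(1 - t_{K-1})$, isolate the $k=K$ summand as $u - \delta_{K-1}$ with derivative $1$, and show the remaining $k \le K-1$ contributions vanish because of the $1/f_X(F_X^{-1}(u))$ prefactor and $f_X(1-) = \infty$.

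The only non-routine step is the control of the bracketed sum when $K = \infty$. Here I would argue that the sum is uniformly bounded in $u$ by the observation that $\sum_{k \ge 2}(F_X(t_k) - F_X(t_{k-1})) \le 1$, so by a mean-value argument the quantities $f_X(\xi_k)(t_k - t_{k-1})$ sum to at most $1$ for some intermediate points $\xi_k$; the same bound transfers to evaluations at $t_{k-1} + \alpha(u)(t_k-t_{k-1})$ under any mild regularity of $f_X$ on compact subsets of $(0,1)$, giving a uniform $O(1)$ bound that is crushed by the blowing-up denominator $f_X(F_X^{-1}(u))$.
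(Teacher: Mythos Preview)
Your proposal is correct and follows essentially the same approach as the paper: both arguments differentiate~\eqref{eq:W:param:fam}, isolate the $k=1$ summand (which contributes derivative~$1$), and kill the remaining terms via the factor $1/f_X(F_X^{-1}(u))$ whose denominator blows up. Your presentation is marginally cleaner in that you separate the $k=1$ term as exactly $u$ \emph{before} differentiating, whereas the paper differentiates the full sum and then splits; and you explicitly flag the $K=\infty$ case, which the paper's proof passes over in silence.
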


\section{Generalised W-transforms}\label{sec:general:W:trafo}
In the previous section, we considered continuous $F_X$. If $F_X$ is not
continuous, the probability transform for $F_X$ %
fails to be $\U(0, 1)$ distributed.  To
generalise W-transforms to arbitrary distributions $F_X$, we utilise the notion
of a \emph{generalised probability transform} of \cite{rueschendorf2009} in this
section. To this end, let $X\sim F_X$ and $V\sim\U(0, 1)$ be independent. In terms of the
\emph{modified distribution function}
$F_X(x, v)\coloneqq\P(X<x)+v \P(X=x)$, $v\in[0,1]$, $x\in\IR$, the
\emph{generalised probability transform} is
$F_X(X,V) = F_X(X-)+V(F_X(X)-F_X(X-))$. By construction, $F_X(X,V)\sim\U(0, 1)$
and $F_X^{-1}(U)=X$ a.s.; see \cite[Proposition~2.1]{rueschendorf2009}.

With these notions at hand, we can now generalise W-transforms to arbitrary
random variables $X\sim F_X$.
\begin{definition}[Generalised W-transform]\label{def:general:W:trafo}
  Let $T: D\to\IR$ be pcsm with change points $\{t_k\}_{k=0}^K$ and $X\sim F_X$
  with $\inf\supp(F_X)=t_0$ and $\sup\supp(F_X)=t_K$.  Let $V\sim\U(0,1)$ be
  independent of $X$. Then the \emph{generalised W-transform}
  $\Wg:[0,1]\to[0,1]$ is
  \begin{align}\label{eq:general:W:trafo}
    \Wg(F_X(x, V)) =\begin{cases}
      \lim_{u\to 0+} \W_{\text{g}}(u), & F_X(x, V)=0,\\
      F_{T(X)}(T(x), V), & F_X(x, V) \in (0, 1]
                           .    \end{cases}
  \end{align}
\end{definition}

A (generalised) W-transform operates on the (generalised) probability transform of $X$, mapping
it to that of $T(X)$. This implies that $\W$ (respectively $\Wg$) must be
uniformity-preserving.

We now present two examples, the first one is a continuation of
Example~\ref{eg:non:unif:preserv} and the second one features a $\Wg$
constructed from a mixed-type distribution.

\begin{example}[Generalised W-transforms $\Wg$]\label{ex:gen:W}
  \begin{enumerate}
  \item \emph{Continuation of Example~\ref{eg:non:unif:preserv}.} Consider
    $X\sim\B(1,p)$, $p\in[0,1]$. If $p=0$, then $X=0$ a.s., and
    $F_X(x, v) = \I_{[0,\infty)}(x)+v\I_{\{x=0\}}, \ x\in\IR$. Furthermore,
    $T(X)=T(0)$ a.s.\ for any $T$ and so
    $F_{T(X)}(T(x), v) = \I_{\{T(x)\geq T(0)\}}+v\I_{\{T(x)=T(0)\}},\
    x\in\IR$. Since $\Wg$ maps $F_X(x, v)$ to $F_{T(X)}(T(x), v)$ by
    \eqref{eq:general:W:trafo}, we have $\Wg(u) =u$, $u\in(0,1)$,
    $\Wg(0)\in\{0, 1\}$ and $\Wg(1)\in\{0, 1\}$. Similarly,
    if $p=1$, one has $\Wg(u) = u$ for any $u\in(0,1)$ and
    $\Wg(0),\Wg(1)\in\{0, 1\}$.  Hence, $\Wg$ is
    the identity on $(0, 1)$ and is thus uniformity-preserving.

    If $p\in(0,1)$, then $F_X(x, v)=(1-p)(\I_{(0,\infty)}(x)+v\I_{
      \{x=0\}})+p(\I_{(1,\infty)}(x)+v\I_{\{x=1\}})$, $v\in[0, 1]$,
    $x\in\IR$, and $F_{T(X)}(T(x), v)=(1-p)(\I_{(T(0),\infty)}(T(x))+v\I_
    {\{T(x)=T(0)\}})+p(\I_{(T(1),\infty)}(T(x))+v\I_{\{T(x)=T(1)\}})$, $v\in[0, 1]$, $x\in\IR$.
    \begin{enumerate}[label=\roman*), labelwidth=\widthof{iii)}]
    \item If $T(1)>T(0)$, then $\Wg(u)=u$, $u\in[0,1]$.

    \item If $T(1)<T(0)$, then $\Wg(u)=\begin{cases}
      u+p, & u\in[0,1-p], \\
      u-1+p, & u\in(1-p, 1].
    \end{cases}$

  \item If $T(1)=T(0)$, then $T(X)=T(0)$ a.s. and
    $F_{T(X)}(T(x), v)=v\I_{\{T(x)=T(0)\}}+\I_{(T(0),\infty)}(T(x))$, $x\in\IR$.
    It follows that
    $\Wg(u)=\begin{cases}
      u/(1-p), & u\in[0,1-p], \\
      (u-(1-p))/p, & u\in(1-p, 1].
    \end{cases}$
  \end{enumerate}
  In all cases, $\Wg$ is uniformity-preserving.
\item\label{ex:gen:W:mixed} \emph{Mixed-type distribution.} Consider $X\sim F_X$ with
  \begin{align*}
    F_X(x) = \begin{cases}
      1-e^{-0.5(x+1)}, & x\in[-1,0),\\
      e^{-0.5}, & x=0,\\
      1+e^{-0.5}-e^{-0.5x}, &x\in(0, 1],
    \end{cases}
  \end{align*}
  so that $\P(X=0)=2e^{-0.5}-1$. For $\alpha\in [0,1]$,
  consider $T: [-1, 1]\rightarrow \IR$ with $T(x; \alpha)=\begin{cases}
    \alpha, & x=0,\\
    |x| & x\neq 0.
  \end{cases}$
  By~\eqref{eq:general:W:trafo}, we have
  \begin{align*}
    \Wg(u) = \begin{cases}
      1+e^{-0.5}-u-\dfrac{e^{-0.5}}{1-u}, & u\in[0, 1-e^{0.5(\alpha-1)}],\\
      2-e^{-0.5}-u-\dfrac{e^{-0.5}}{1-u}, & u\in(1-e^{0.5(\alpha-1)},1-e^{-0.5}),\\
      u-e^{-0.5\alpha}+e^{0.5(\alpha-1)}, & u\in[1-e^{-0.5}, e^{-0.5}],\\
      u-2e^{-0.5}+\dfrac{e^{-0.5}}{1+e^{-0.5}-u}, & u\in(e^{-0.5},1+e^{-0.5}-e^{-0.5\alpha}],\\
      u+\dfrac{e^{-0.5}}{1+e^{-0.5}-u}-1, & u\in (1+e^{-0.5}-e^{-0.5\alpha},1].
    \end{cases}
  \end{align*}
  Plots of $\Wg$ for $\alpha\in\{0, 0.5, 1\}$ are shown in
  Figure~\ref{fig:general:W:trafo}.
  \begin{figure}[htbp]
    \includegraphics[width=0.32\textwidth]{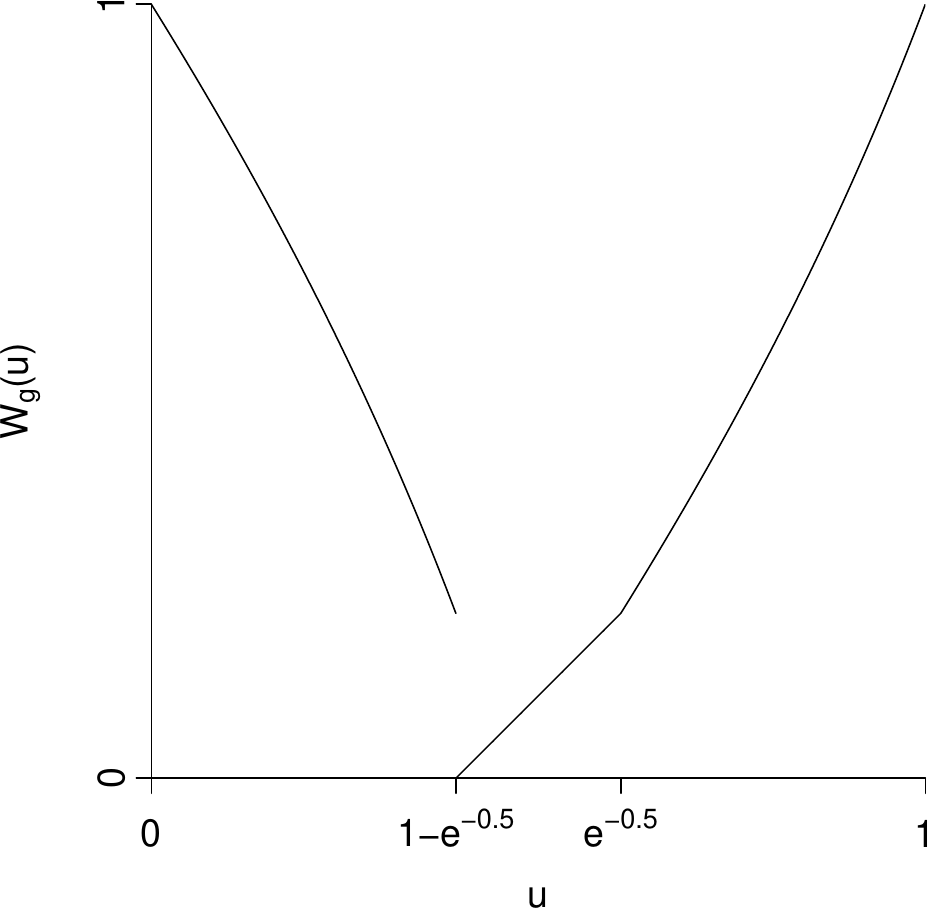}\hfill
    \includegraphics[width=0.32\textwidth]{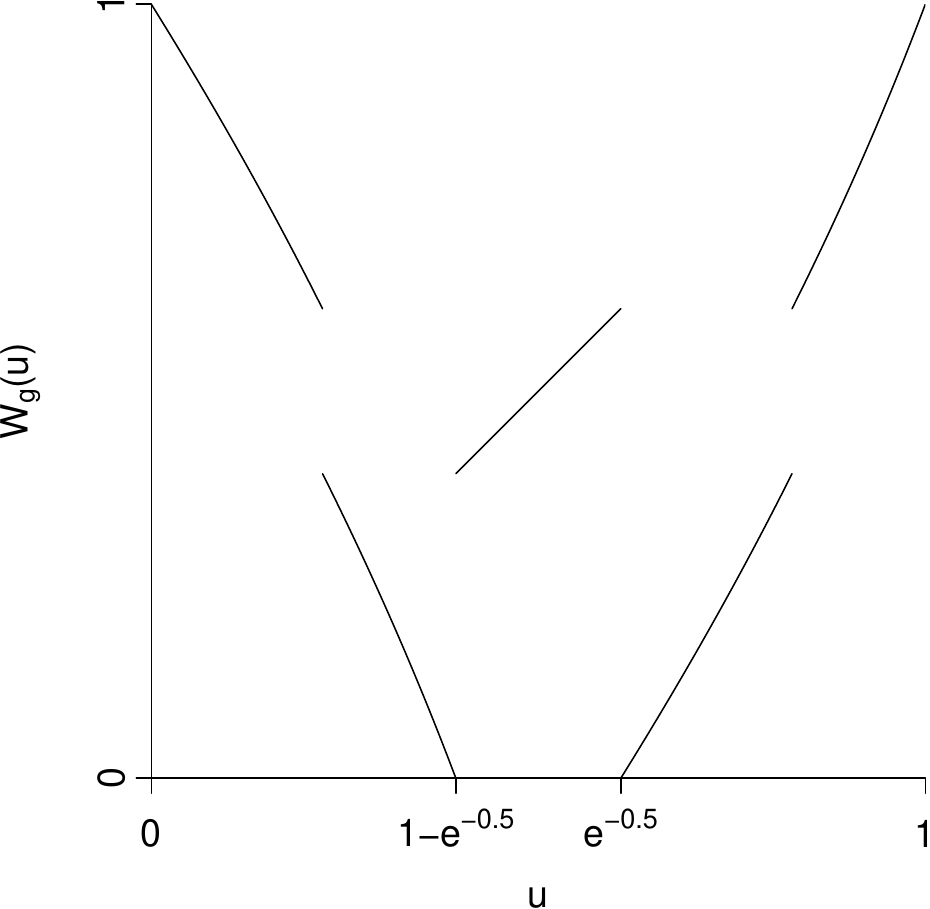}\hfill
    \includegraphics[width=0.32\textwidth]{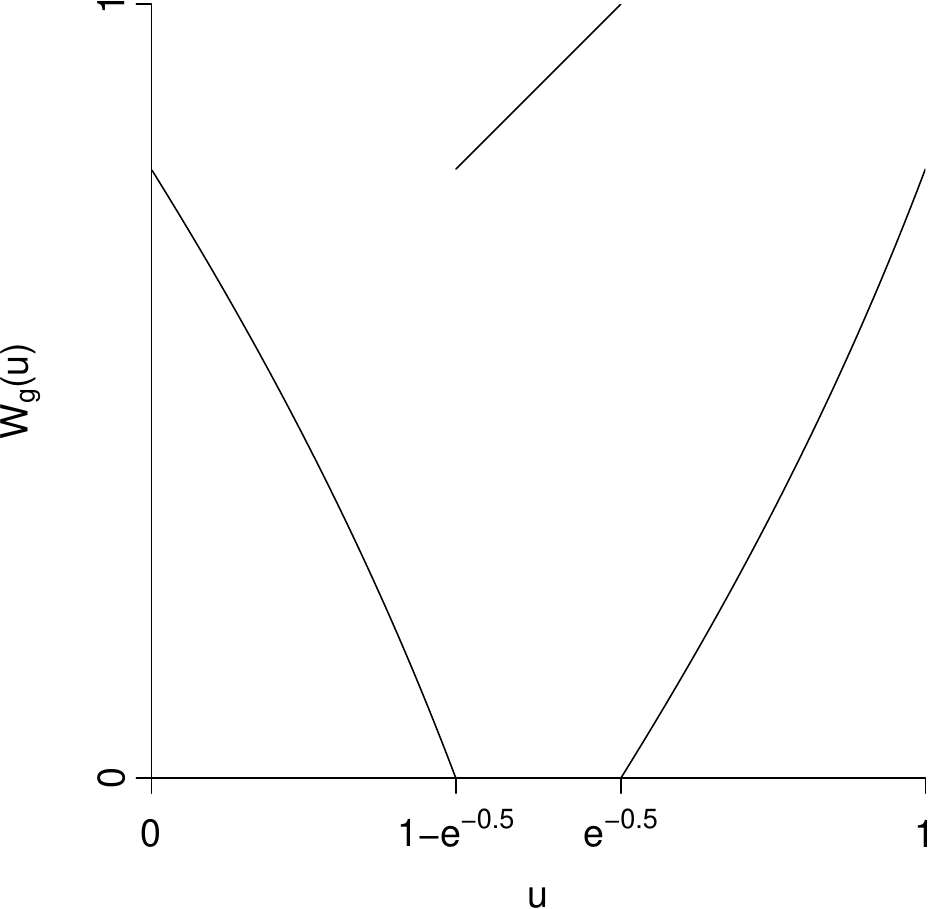}
    \caption{Generalised W-transform $\Wg$ induced by $T(.;\alpha)$ and a mixed-type
      $F_X$ for $\alpha=0$ (left), $\alpha=0.5$ (centre) and $\alpha=1$ (right).}
    \label{fig:general:W:trafo}
  \end{figure}
\end{enumerate}
\end{example}

We observe from Example~\ref{ex:gen:W}~\ref{ex:gen:W:mixed} that $\Wg$
jumps at $1-e^{-0.5}$ and is linear on $[1-e^{-0.5}, e^{-0.5}]$. Moreover, the
length of the linear part is exactly $\P(X=0)$, that is, the probability of $X$
taking on its discrete value.  In general, if $X$ jumps, then $\Wg$
induced by $X$ and $T$ also jumps. The following result proves this more
formally.

\begin{proposition}[Discontinuous generalised W-transforms $\Wg$ have linear pieces]\label{prop:jump:linear}
  Let $X\sim F_X$ and suppose that $F_X$ jumps at $x_0$
  (so $\P(X=x_0)=F_X(x_0)-F_X(x_0-)>0$) for some $x_0\in\IR\setminus N$ for a
  Lebesgue null set $N$.  Let $T$ and $\Wg$ be as in Definition
  \ref{def:general:W:trafo}.  Then $\Wg$ is linear on $(F_X(x_0-),
  F_X(x_0))$. Furthermore, if $T$ maps multiple jump points $x_0,\dots,x_L$ of
  $F_X$ to the same value $s\coloneqq T(x_0)$, then the slope of each linear
  piece of $\Wg$ on $(F_X(x_\ell-), F_X(x_\ell))$, $\ell\in\{0,\dots,L\}$, is
  $(\sum_{\ell=0}^L\P(X=x_\ell))/\P(X=x_\ell)$.
\end{proposition}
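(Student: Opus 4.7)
The plan is to read off the linearity directly from the defining identity of the generalised W-transform, namely $\Wg(F_X(x, V)) = F_{T(X)}(T(x), V)$ for $F_X(x, V)\in(0,1]$ with $V\sim\U(0,1)$ independent of $X$. The key observation is that, for fixed $x$, both sides of this identity depend on $v$ linearly through the modified distribution functions, so the relation forces $\Wg$ itself to be linear on the image set corresponding to a jump.

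First I would fix a jump point $x_0$ of $F_X$ with $p_0 \coloneqq \P(X=x_0) > 0$ and set $s \coloneqq T(x_0)$. As $v$ ranges over $[0,1]$, the modified distribution value
\begin{equation*}
u = F_X(x_0, v) = F_X(x_0-) + v\,p_0
\end{equation*}
parametrises the interval $(F_X(x_0-), F_X(x_0))$ bijectively and affinely. Solving for $v$ and substituting into the right-hand side $F_{T(X)}(s, v) = F_{T(X)}(s-) + v\,\P(T(X)=s)$ yields
\begin{equation*}
\Wg(u) = F_{T(X)}(s-) + \frac{\P(T(X)=s)}{p_0}\bigl(u - F_X(x_0-)\bigr), \qquad u\in(F_X(x_0-), F_X(x_0)),
\end{equation*}
establishing both the linearity and the slope formula $\P(T(X)=s)/p_0$.

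Second, I would identify the atomic mass $\P(T(X)=s)$ explicitly. Because $T$ is pcsm, its restriction to each open piece $(t_{k-1}, t_k)$ is strictly monotone and hence injective, so the preimage $T^{-1}(\{s\})$ meets each piece in at most one point; combined with the at most countably many change points, $T^{-1}(\{s\})$ is itself at most countable. Consequently, the continuous part of $F_X$ contributes zero to $\P(T(X)=s)$, leaving only the masses carried by those jump points of $F_X$ that are mapped to $s$. Under the hypothesis that these are exactly $x_0,\dots,x_L$, we obtain $\P(T(X)=s)=\sum_{\ell=0}^L\P(X=x_\ell)$. Applying the linearity argument above to each $x_\ell$ separately then produces the announced slope $\bigl(\sum_{\ell=0}^L\P(X=x_\ell)\bigr)/\P(X=x_\ell)$ on $(F_X(x_\ell-), F_X(x_\ell))$.

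The main obstacle is the bookkeeping for $\P(T(X)=s)$: one must rule out any hidden probability mass coming from outside the listed jump points. This is settled by the pcsm structure of $T$, which guarantees that $T^{-1}(\{s\})$ is countable and thus negligible against the continuous part of $F_X$. Everything else is an essentially one-line computation from Definition~\ref{def:general:W:trafo}.
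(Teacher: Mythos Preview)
Your proposal is correct and follows essentially the same route as the paper's proof: both read off the linearity directly from the defining identity $\Wg(F_X(x_0,V))=F_{T(X)}(T(x_0),V)$ by observing that each side is affine in $V$ at a jump, then solve for $\Wg$ to obtain the slope $\P(T(X)=s)/\P(X=x_0)$. The only difference is that you supply an explicit justification for $\P(T(X)=s)=\sum_{\ell=0}^L\P(X=x_\ell)$ via the countability of $T^{-1}(\{s\})$ under the pcsm hypothesis, whereas the paper simply asserts this equality; your extra care here is warranted and adds rigour without changing the argument.
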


Although, as we saw in this section, one can generalise the construction of
uniformity-preserving transformations to arbitrary distributions $F_X$, by
Proposition~\ref{prop:jump:linear} the resulting generalised W-transforms
$\Wg$ are always linear on $(F_X(x_0-), F_X(x_0))$. Moreover, by
Proposition~\ref{prop:suff:con:lin}, there are various ways for constructing linear
parts in $\W$ even if $F_X$ is continuous. Therefore, in what follows,
we focus on W-transforms constructed from continuous $F_X$ as we did in Section~\ref{sec:cont:W:trafo}.

\section{W-transformed copulas}\label{sec:cop}
Since W-transforms are uniformity-preserving, they serve naturally as
copula-to-copula transformations, and thus allow us to construct more flexible
dependence structures from given ones.  In this section, we thus apply
W-transforms marginally to investigate the resulting copulas, that is given
$\bm{U}\sim C$ for a \emph{base copula} $C$ and marginal W-transforms
$\W_1, \dots, \W_d$ constructed from continuous $F_{X_1},\dots,F_{X_d}$,
we study the \emph{W-transformed copula} $C_{\bm{\W}}$ of $(\W_1(U_1), \dots, \W_d(U_d))$;
note that by Proposition~\ref{prop:w:unif:preserv}, we have
\begin{align}
  (\W_1(U_1), \dots, \W_d(U_d))\sim C_{\bm{\W}}.\label{eq:W:transformed:copula}
\end{align}

In Section~\ref{subsec:sto:inv}, we derive the stochastic inverse of
$\W$ and the copula of $(U,\W(U))$ for $U\sim\U(0, 1)$. In
Section~\ref{subsec:distr}, we derive the analytical form of
$C_{\bm{\W}}$ and show that it can be interpreted as a sum of
$C$-volumes. Thereafter, in Section~\ref{subsec:tail}, we derive bounds on
the tail dependence coefficients of $C_{\bm{\W}}$, which provide
meaningful guidance on how W-transforms may increase tail dependence.  Despite
the lack of closed-form formulas, we also investigate concordance measures of
W-transformed copulas; see Section~\ref{subsec:concor}. Finally, in
Section~\ref{subsec:symmetry}, we address symmetry properties of
$C_{\bm{\W}}$ in relation to $C$, determining when W-transforms break
or preserve symmetries of $C$.

\subsection{Stochastic inverse of $\W$}\label{subsec:sto:inv}
To facilitate the construction and sampling of W-transformed copulas considered
later, we need the notion of a stochastic inverse of any W-transform
$\W$.

For a W-transform $\W$, consider $D_k:=(\delta_{k-1},\delta_k]$, $k\in\{1, \dots, K\}$ and
define the restriction $\W_{|k}\coloneqq\W|_{D_k}$ of $\W$ on $D_k$.
Let $O_k\coloneqq\{\W(u):u\in D_k\}$, and for $v\in O_k$ define the inverse of $\W$
locally on $D_k$ via the restriction $\W_{|k}$ as
\begin{align*}
  \W^{-1}_{|k}(v) = \begin{cases}
    \sup \{u\in D_k: \W_k(u)\geq v\}, & \text{if } \W_{|k}
                                        \text{ is strictly decreasing},\\
    \inf \{u\in D_k: \W_k(u)\geq v\}, & \text{if } \W_{|k}
                                        \text{ is strictly increasing},
  \end{cases}
\end{align*}
with the convention that $\sup\emptyset = \delta_{k-1}$ and $\inf\emptyset=\delta_k$.
For any $v\in [0, 1]$, let $N(v)\coloneqq\{k\in \{1,\dots,K\}:\W^{-1}_{|k}(v)
\in(\delta_{k-1}, \delta_k)\}$, that is, $N(v)$ identifies the pieces of $\W$
where $\W^{-1}_{|k}(v)$ are not change points.

Section~\ref{sec:cont:W:trafo} defined W-transforms from continuous $F_X$ and
have shown that such W-transforms are uniformity-preserving, pcsm, and satisfy the partition of
square property (Proposition~\ref{prop:W:property}~\ref{prop:W:property:partition:sq}).
With these at hand, we are now ready to derive the copula of $(U,\W(U))$, which is our first main result
in this section.
\begin{theorem}[Copula of $(U, \W(U))$]\label{theorem:U:V:cop}
  Consider a W-transform $\W$ with increasing (decreasing) pieces indexed by
  $I\subseteq \{1, \dots, K\}$ ($I^C=\{1,\dots,K\}\setminus I$).
  Let $U\sim\U(0, 1)$ and $V=\W(U)$.
  \begin{enumerate}
  \item The joint distribution function of $(U, V)$ is given, for all $u,v\in[0,1]$, by the copula
    \begin{align}\label{eq:U:V:cop}
      C(u, v)&=\sum_{k\in I}\max\bigl\{\min\{u,\W^{-1}_{|k}(v)\}-\delta_{k-1}, 0\bigr\} +\sum_{k\in I^C}\max \bigl\{\min\{\delta_k,u\}-\W^{-1}_{|k}(v), 0\bigr\}.
    \end{align}
  \item\label{theorem:U:V:cop:cond} Let $v\in[0,1]$. If, for every $u$ such that
    $\W(u)=v$, $\W$ is differentiable at $u$, then,
    conditional on $V=v$, the distribution of $U=\W_{|k}^{-1}(v)$ is
    \begin{align}\label{eq:joint:U:V}
      \P(U\leq u\,|\,V=v)&=\sum_{k\in N(v)}p_k\I_{\{\W_{|k}^{-1}(v)\le u\}},
    \end{align}
    where $p_k\coloneqq\bigl|\frac{\rd}{\rd v}\W_{|k}^{-1}(v)\bigr|$ for each $k\in N(v)$.
    Notably, non-differentiability only occurs at countably many points and is hence
    stochastically negligible.
  \end{enumerate}
\end{theorem}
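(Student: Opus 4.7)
The plan for part~1 is to compute $\P(U\le u, V\le v) = \lambda(\{s\in[0,1]: s\le u,\ \W(s)\le v\})$, where $\lambda$ is Lebesgue measure, by invoking the partition of square property (Proposition~\ref{prop:W:property}~\ref{prop:W:property:partition:sq}) to decompose $\{s:\W(s)\le v\}$ as $\biguplus_{k=1}^K S_k(v)$. Each $S_k(v)$ is an explicit interval: for increasing pieces ($k\in I$) of the form $(\delta_{k-1},\W_{|k}^{-1}(v)]$, and for decreasing pieces ($k\in I^C$) of the form $(\W_{|k}^{-1}(v),\delta_k]$, with the degenerate cases (empty, or the whole piece) absorbed by the conventions $\sup\emptyset=\delta_{k-1}$, $\inf\emptyset=\delta_k$ made when defining $\W_{|k}^{-1}$. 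Intersecting each $S_k(v)$ with $(0,u]$ and taking Lebesgue measure then produces the two sums of~\eqref{eq:U:V:cop}, where the $\max\{\cdot,0\}$ accounts for empty intersections. The resulting function is a copula since $V\sim\U(0,1)$ by Proposition~\ref{prop:w:unif:preserv} and since setting $u=1$ in the computed expression recovers $v$ via~\eqref{eq:partition:sq:property}.

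For part~2, I treat $U\mid V=v$ as a discrete distribution on the preimages $\{\W_{|k}^{-1}(v):k\in N(v)\}$ and identify the mass at each preimage as a local Jacobian. Fix a $v$ at which $\W$ is differentiable at every such preimage; a standard change-of-variables argument then gives the density of $V$ as $f_V(v)=\sum_{k\in N(v)}1/|\W'(\W_{|k}^{-1}(v))|$, which equals one by Lemma~\ref{lem:unif:preserv} since $V\sim\U(0,1)$. The conditional mass assigned to the preimage $\W_{|k}^{-1}(v)$ is therefore $p_k=1/|\W'(\W_{|k}^{-1}(v))|=|(\rd/\rd v)\W_{|k}^{-1}(v)|$ by the inverse function theorem, and summing the $p_k$ over those $k\in N(v)$ whose preimage satisfies $\W_{|k}^{-1}(v)\le u$ yields~\eqref{eq:joint:U:V}.

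The main obstacle is the bookkeeping in part~1: each of the three subcases of Proposition~\ref{prop:W:property}~\ref{prop:W:property:partition:sq} must be checked to collapse into the single $\max\{\cdot,0\}$/$\min\{u,\cdot\}$ expression under both monotonicity regimes, and the sign conventions must be handled so that decreasing pieces contribute the length measured down from $\delta_k$ while increasing pieces contribute the length measured up from $\delta_{k-1}$. For part~2, the technicality is justifying that the set of $v$ at which some preimage of $\W$ is a non-differentiability point is negligible: it is contained in $\W(\{\delta_0,\dots,\delta_K\})$ together with the at most countably many interior non-differentiabilities of each $\W|_{(\delta_{k-1},\delta_k)}$, hence has Lebesgue measure zero, which is why the qualifier ``stochastically negligible'' in the statement is justified.
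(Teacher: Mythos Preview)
Your proposal is correct, and part~1 is essentially identical to the paper's argument: both invoke the partition of square property to write $\{\W(U)\le v\}$ as $\biguplus_k S_k(v)$, intersect each piece with $(0,u]$, and package the resulting lengths into the $\max/\min$ expressions of~\eqref{eq:U:V:cop}. The paper fixes the piece containing $u$ explicitly (writing $u\in(\delta_{\ell-1},\delta_\ell]$ and splitting off an $R_\ell$ term), but this is exactly the bookkeeping you describe.

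For part~2 there is a mild methodological difference. The paper simply differentiates~\eqref{eq:U:V:cop} with respect to $v$ and recognises the result as the distribution function of a discrete law on the preimages with masses $p_k=\bigl|\tfrac{\rd}{\rd v}\W_{|k}^{-1}(v)\bigr|$; no appeal to Lemma~\ref{lem:unif:preserv} is needed since the copula formula already encodes the normalisation. Your route instead rebuilds the conditional law from scratch via the change-of-variables density $f_V(v)=\sum_k 1/|\W'(\W_{|k}^{-1}(v))|$, uses Lemma~\ref{lem:unif:preserv} (equivalently $V\sim\U(0,1)$) to see this equals~$1$, and then reads off $p_k$ by the inverse function theorem. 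Both arguments are valid and compute the same Jacobians; the paper's is shorter because it reuses part~1, while yours is more self-contained and makes the probabilistic meaning of the $p_k$'s transparent without reference to the explicit copula formula.
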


Theorem~\ref{theorem:U:V:cop} gives a method to stochastically invert the
non-injective $\W$ through a probability allocation. When multiple
solutions exist to the equation $\W(u)=v$ (see,
Example~\ref{eg:W:trafo}~\ref{eg:W:trafo:zig:zag} for $v=0.6$), the inverse of
$\W$ distributes values according to a multinomial
distribution. However, if $\W$ is not differentiable at
$u\in \{u\in[0,1]:\W(u) = v\}$ for some $v\in [0,1]$, then
Theorem~\ref{theorem:U:V:cop}~\ref{theorem:U:V:cop:cond} fails.  For example, in
Example~\ref{eg:W:trafo}~\ref{eg:W:trafo:piece:in} with $v=1/\sqrt[5]{4}\approx 0.7579$, the unique
solution $u=1/\sqrt[5]{4}$ coincides with a change point in which $\W$ is not
differentiable. Here, $p_2\approx 0.6025\neq1$. Since there are only
countably many change points, there are only countably many $v$'s for which
$\W$ is not differentiable at $u\in \{u\in[0,1]:\W(u) =
v\}$. Since $\{u\in[0, 1]:\W(u) = v\}$ is countable, $\W$ is
differentiable almost everywhere.

\begin{definition}[Stochastic inverse of W-transforms]
  Let $\W$ be a W-transform constructed from a continuous $F_X$ and $U'\sim \U(0,1)$. Let
  $D\coloneqq \{u: \W \mbox{ is differentiable at } u\}$. Define the
  stochastic inverse $\W^{-1}: D\times [0, 1]\rightarrow [0, 1]$ of
  $\W$ by
  \begin{align*}
    \W^{-1}(v, U') = \sum_{k\in N(v)} \W^{-1}_{|k}(v)\I\biggl\{U'\in\biggl(\,\sum_{\ell=1}^{k-1}p_\ell, \sum_{\ell=1}^{k}p_\ell\biggr]\biggr\},\quad v\in D.
  \end{align*}
\end{definition}

The following result establishes basic properties of stochastic inverses of W-transforms.
\begin{proposition}[$\W\circ\W^{-1}$ is a stochastic identity]\label{prop:sto:id}
  Let $V, U'\sim \U(0, 1)$ be independent. Then $\W(\W^{-1}(V, U')) =V$ and
  $\W^{-1}(V, U')\sim \U(0, 1)$. %
\end{proposition}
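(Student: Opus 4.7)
The plan is to handle the two assertions separately, with the bulk of the work being a direct bookkeeping argument based on Theorem~\ref{theorem:U:V:cop}\,\ref{theorem:U:V:cop:cond}. The key observation is that the weights $p_k$, $k\in N(v)$, sum to one for almost every $v$, so the intervals $(\sum_{\ell=1}^{k-1}p_\ell,\sum_{\ell=1}^{k}p_\ell]$ form a partition of $(0,1]$ and the stochastic inverse is almost surely well-defined and single-valued.

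For the identity $\W(\W^{-1}(V,U'))=V$, fix $v\in D$ and $u'\in(0,1]$. Since the intervals $(\sum_{\ell=1}^{k-1}p_\ell,\sum_{\ell=1}^{k}p_\ell]$ partition $(0,1]$, exactly one index $k^\ast\in N(v)$ satisfies $u'\in(\sum_{\ell=1}^{k^\ast-1}p_\ell,\sum_{\ell=1}^{k^\ast}p_\ell]$, so only that indicator in the defining sum is nonzero and $\W^{-1}(v,u')=\W^{-1}_{|k^\ast}(v)\in(\delta_{k^\ast-1},\delta_{k^\ast})$. By definition of the local inverse on the continuous strictly monotone piece $\W_{|k^\ast}$, we have $\W_{|k^\ast}(\W^{-1}_{|k^\ast}(v))=v$, hence $\W(\W^{-1}(v,u'))=v$. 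As $D^c$ is countable by Proposition~\ref{prop:W:property}, this yields $\W(\W^{-1}(V,U'))=V$ almost surely.

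For the uniformity of $\W^{-1}(V,U')$, I would compute its distribution function at any $u\in[0,1]$ by independence of $V$ and $U'$ and Fubini:
\begin{align*}
\P\bigl(\W^{-1}(V,U')\le u\bigr)
&=\int_0^1\int_0^1\I\bigl\{\W^{-1}(v,u')\le u\bigr\}\,\rd u'\,\rd v\\
&=\int_0^1\sum_{k\in N(v)}p_k\,\I\bigl\{\W^{-1}_{|k}(v)\le u\bigr\}\,\rd v,
\end{align*}
where the inner integral was evaluated using the partition structure together with the fact that $U'\sim\U(0,1)$ so the Lebesgue measure of $(\sum_{\ell=1}^{k-1}p_\ell,\sum_{\ell=1}^{k}p_\ell]$ is exactly $p_k$. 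By Theorem~\ref{theorem:U:V:cop}\,\ref{theorem:U:V:cop:cond}, the integrand is precisely $\P(U\le u\mid V=v)$ for the pair $(U,V=\W(U))$ with $U\sim\U(0,1)$, so the integral equals $\P(U\le u)=u$ by the law of total probability. Hence $\W^{-1}(V,U')\sim\U(0,1)$.

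The only subtle point, and the one I would want to state explicitly, is that the set $D^c$ of points where $\W$ fails to be differentiable is contained in the countable set of change points together with their preimages under the finitely-many (per level) piecewise inverses, so $V\in D$ with probability one and the use of Theorem~\ref{theorem:U:V:cop}\,\ref{theorem:U:V:cop:cond} is justified Lebesgue-a.e.\ in $v$. There is no real obstacle beyond this measure-theoretic housekeeping; the result is essentially a rephrasing of the conditional distribution formula in~\eqref{eq:joint:U:V}.
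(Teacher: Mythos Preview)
Your proof is correct and follows essentially the same approach as the paper: both arguments observe that $\W^{-1}(v,U')$ selects one local inverse $\W^{-1}_{|k}(v)$ (giving the identity $\W\circ\W^{-1}=\mathrm{id}$ immediately), and both establish uniformity by showing that the conditional law of $\W^{-1}(V,U')$ given $V=v$ matches the conditional distribution~\eqref{eq:joint:U:V} of Theorem~\ref{theorem:U:V:cop}\,\ref{theorem:U:V:cop:cond}, then integrating over $v$. Your version spells out the Fubini step more explicitly, while the paper phrases the same conclusion by saying the pair $(\W^{-1}(V,U'),V)$ has the copula~\eqref{eq:U:V:cop}; these are the same argument.
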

As stochastic inverses, $\W^{-1}(\W(u), U')$ may not be equal to $u$.
To see this, let $\W(u)=|2u-1|$ with stochastic inverse
$\W^{-1} (v, U')=\frac{1-v}{2}+v\I\{T>\frac{1}{2}\}$. Then
$\W^{-1}(\W(\frac{1}{4}), \frac{3}{4})=\frac{3}{4}\neq \frac{1}{4}$. This is because of
the stochastic choice among the preimages
$\{\W^{-1}_{|k}(v):k\in N(v)\}$, which reflects the general non-invertibility
of W-transforms.

To end this section, we consider shuffles of copulas and can relate them
to W-transforms.%
\begin{example}[Shuffle of copulas]
  Consider a random vector $(U, V)\sim C$. Then $C$ is a \emph{shuffle-of-min} copula as
  detailed by \cite{durantesarkocisempi2009} if and only if a bijective, piecewise continuous function
  $f$ exists such that $V=f(U)$ almost surely. For example, if one takes $(U,V)\sim M$, a $p$-periodic
  W-transform $\W$ and constructs $V\coloneqq\W(U)$, then the joint distribution function of
  $(U,\W(U))$ is a shuffle-of-min copula. \cite{durantesarkocisempi2009} further generalised this construction
  to shuffle-of-$C$ copulas. Starting from $(U,V)\sim C$ and a bijective measure-preserving
  $\mathcal{T}: [0, 1]\to[0, 1]$, they defined a new copula $C_{\mathcal{T}}$ as the joint distribution
  function of $(\mathcal{T}(U), V)$.  In the context of W-transforms, this can be achieved by replacing
  $\mathcal{T}$ by a $p$-periodic W-transform $\W$.  The analytical form of $C_{\W}=C_\mathcal{T}$
  is given in \eqref{eq:U:V:cop}.
\end{example}

\subsection{Componentwise W-transforms as multivariate measure-preserving transformations}\label{subsec:distr}
We now find the analytical form of $C_{\bm{\W}}$ and its density, if it
exists. To this end, if all $\W_j$ in~\eqref{eq:W:transformed:copula} are
identical, we call $C_{\bm{\W}}$ \emph{homogeneous W-transformed copula}. Also,
the \emph{$C$-volume} of a copula $C$ of the hyperrectangle
$B=\prod_{j=1}^d(a_j,b_j]$ is
\begin{align*}
  V_C(B)=\Delta_B C=\sum_{\bm{i}\in\{0, 1\}^d}(-1)^{\sum_{j=1}^d i_j}C(a_1^{i_1}b_1^{1-i_1},
  \dots,a_d^{i_d}b_d^{1-i_d});
\end{align*}
for $\bm{U}\sim C$, note that $V_C(B)=\Delta_B C=\P(\bm{U}\in B)$.

The following theorem provides the closed-form expression of $C_{\bm{\W}}$
in terms of $C$, which is the main result of this section.
\begin{theorem}[W-transformed copulas and their densities]\label{theorem:W:trafo:cop}
  For $j=1, \dots, d$, let $\W_j:[0, 1]\rightarrow[0, 1]$ be a W-transform with
  change points $\delta_{j,k}$ for $k\in\{1,\dots, K_j\}$, $K_j\in\bar{\IN}$,
  where $\delta_{j,1}=0$ and $\delta_{j,K_j}=1$. Let
  $\W_{j|k}=\W_j|_{(\delta_{j,k-1},\delta_{j,k}]}$, $k=1,\dots,K_j$, be the
  piecewise restrictions of $\W_j$ and suppose each $\W_j$ has its increasing
  (decreasing) pieces indexed by $I_{j}\subseteq\{1, \dots, K_j\}$ ($I_j^C$),
  where $\W_{j|k}$ is increasing if and only if $k\in I_j$.  Then the
  distribution function of $\bm{W}(\bm{U})=(\W_1(U_1),\dots,\W_d(U_d))$ for
  $\bm{U}\sim C$ is given by the copula
  \begin{align}
    C_{\bm{\W}}(\bm{u})=
    \sum_{k_d=1}^{K_d}\dots\sum_{k_1=1}^{K_1}\Delta_{B_{\bm{\delta}_{\bm{k}},\bm{\W^{-1}(\bm{u})},\bm{I}}}C,
    \label{eq:W:trafo:cop}
  \end{align}
  where
  \begin{align*}
    B_{\bm{\delta}_{\bm{k}},\bm{\W^{-1}(\bm{u})},\bm{I}}
    = \prod_{j=1}^d\mathcal{I}_j^{(k_j)}, \quad\mathcal{I}_j^{(k_j)}=
    \begin{cases}
      (\delta_{j, k_j-1}, \W^{-1}_{j|k_j}(u_j)], & k_j\in I_j,\\
      (\W^{-1}_{j|k_j}(u_j),\delta_{j,k_j}], & k_j\notin I_j.
    \end{cases}
  \end{align*}
  Moreover, if $C$ has density $c$, then $C_{\bm{\W}}$ has density
  \begin{align*}
    c_{\bm{\W}}(\bm{u})=\sum_{\substack{j\in\{1,\dots,d\}:\\k_j\in N_j(u_j)}}
    \,\prod_{\ell=1}^d
    \frac{(-1)^{d+\sum_{m=1}^d\I_{I_m}(k_m)}
    c(\bm{u}_{\bm{W}})}{\W'_{\ell|k_\ell}(\W^{-1}_{\ell|k_\ell}(u_\ell))},
  \end{align*}
  where $\bm{u}_{\bm{W}}=(u_{W_1},\dots,u_{W_d})$ with
  $u_{W_j}=\mathcal{W}^{-1}_{j|k_1}(u_j)$, $j\in\{1, \dots, d\}$, and
  $N_j(u_j)\coloneqq\{k\in \{1,\dots,K_j\}:\W^{-1}_{j|k}(u_j) \in(\delta_{j,k-1},
  \delta_{j,k})\}$.
\end{theorem}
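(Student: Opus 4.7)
The plan is to reduce the event $\{\W_1(U_1)\le u_1,\dots,\W_d(U_d)\le u_d\}$ to a disjoint union of hyperrectangles by applying Proposition~\ref{prop:W:property}~\ref{prop:W:property:partition:sq} componentwise, and then to express its probability under $\bm U\sim C$ as the corresponding sum of $C$-volumes. First I would fix $\bm u\in[0,1]^d$ and, for each coordinate $j$, use the partition of square property to write
\begin{align*}
\{u_j'\in[0,1]:\W_j(u_j')\le u_j\}=\biguplus_{k_j=1}^{K_j} S_{j,k_j}(u_j),
\end{align*}
where each $S_{j,k_j}(u_j)\subseteq(\delta_{j,k_j-1},\delta_{j,k_j}]$ equals $\mathcal I_j^{(k_j)}$ when the ``otherwise'' case of Proposition~\ref{prop:W:property}~\ref{prop:W:property:partition:sq} applies. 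The degenerate cases ($S_{j,k_j}(u_j)=\emptyset$ and $S_{j,k_j}(u_j)=(\delta_{j,k_j-1},\delta_{j,k_j}]$) are absorbed into the same notation via the convention $\sup\emptyset=\delta_{j,k_j-1}$ and $\inf\emptyset=\delta_{j,k_j}$ used in the definition of $\W^{-1}_{j|k_j}$ in Section~\ref{subsec:sto:inv} (they contribute $C$-volumes equal to $0$ and to the full box volume, respectively).

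Next I would take the intersection over $j\in\{1,\dots,d\}$, which distributes over the disjoint unions and yields
\begin{align*}
\{\bm u'\in[0,1]^d:\W_j(u_j')\le u_j,\ j=1,\dots,d\}=\biguplus_{k_1,\dots,k_d}\prod_{j=1}^d\mathcal I_j^{(k_j)}=\biguplus_{\bm k}B_{\bm\delta_{\bm k},\bm{\W^{-1}(\bm u)},\bm I}.
\end{align*}
Since $\bm U\sim C$, the probability of each hyperrectangle equals its $C$-volume, and summing gives~\eqref{eq:W:trafo:cop}. To verify $C_{\bm\W}$ is a copula (and not merely a distribution function) I would invoke Proposition~\ref{prop:w:unif:preserv}: each $\W_j(U_j)\sim\U(0,1)$, so the margins are uniform; $d$-increasingness and the boundary conditions follow directly from the representation as a sum of nonnegative $C$-volumes combined with the marginal uniformity.

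For the density, assuming $C$ admits $c$, I would differentiate~\eqref{eq:W:trafo:cop} term by term in $u_1,\dots,u_d$. In each summand, the change points $\delta_{j,k_j-1},\delta_{j,k_j}$ are independent of $\bm u$, so only the $\W^{-1}_{j|k_j}(u_j)$-endpoint varies. The mixed partial of a $C$-volume $\Delta_B C$ with respect to one endpoint in each coordinate equals $\pm c$ evaluated at the chosen endpoints, with sign $(-1)^{\#\text{lower endpoints}}$; here lower endpoints occur exactly at indices $j$ with $k_j\notin I_j$, producing $(-1)^{d-\sum_m\I_{I_m}(k_m)}=(-1)^{d+\sum_m\I_{I_m}(k_m)}$. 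The chain rule contributes the factor $\tfrac{\rd}{\rd u_j}\W^{-1}_{j|k_j}(u_j)=1/\W'_{j|k_j}(\W^{-1}_{j|k_j}(u_j))$ in each coordinate, giving the stated product. The sum ranges only over $\bm k$ with $k_j\in N_j(u_j)$ for all $j$, since outside this set $\W^{-1}_{j|k_j}(u_j)$ sits at a change point where $\W_{j|k_j}$ need not be differentiable; because change points form a countable set, the excluded $\bm u$ form a Lebesgue null set and do not affect the density.

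The main technical obstacle I expect is the careful bookkeeping of signs and degenerate pieces: one must check that empty $\mathcal I_j^{(k_j)}$ contribute zero $C$-volumes (so that truncating to the ``otherwise'' case in Proposition~\ref{prop:W:property}~\ref{prop:W:property:partition:sq} is harmless), that the sign identity $(-1)^{d-x}=(-1)^{d+x}$ aligns the two natural expressions for the density, and that differentiation under the $C$-volume is justified almost everywhere despite the countably many nondifferentiability points of $\W^{-1}_{j|k_j}$. The rest is a bookkeeping exercise in $C$-volumes and the chain rule.
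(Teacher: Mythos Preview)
Your proposal is correct and follows essentially the same approach as the paper's proof: apply Proposition~\ref{prop:W:property}~\ref{prop:W:property:partition:sq} componentwise to decompose each event $\{\W_j(U_j)\le u_j\}$ into a disjoint union of intervals, intersect and distribute to obtain a disjoint union of hyperrectangles whose $C$-volumes sum to $C_{\bm\W}(\bm u)$, and then differentiate term by term via the chain rule for the density. Your treatment of the degenerate pieces and the sign bookkeeping is in fact more explicit than the paper's version.
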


We see from~\eqref{eq:W:trafo:cop} that the W-transformed copula
$C_{\bm{\W}}$ is a sum of volumes of $C$. The change points of
$\W_1,\dots,\W_d$ induce a rectilinear grid inside the unit
hypercube $[0,1]^d$, and the piecewise monotonicity of each W-transform
determines at which corners of the rectilinear grid the $C$-volume is evaluated.
The following examples highlight this novel construction.
\begin{example}[Special cases]\label{eg:w:trans:volume}
  \begin{enumerate}
  \item
    \emph{Reflection of copulas.} For $a,b\in[0,1]$, \cite[Exercise~2.6]{nelsen1999} defined the copula
    \begin{align*}
      K_{a,b}(u_1,u_2)=\Delta_{[a(1-u_1),u_1+a(1-u_1)]\times[b(1-u_2),u_2+b(1-u_2)]}C.
    \end{align*}
    Consider the W-transform $\W(u;\delta)=\begin{cases}
      1-u/\delta, & u\in [0,\delta],\\
      (u-\delta)/(1-\delta), & u\in(\delta,1],
    \end{cases}$
    which is a piecewise linear v-transform, for $\delta\in(0,1)$ and $\W(u;0)=u$, $\W(u;1)=1-u$.
    Then $K_{a,b}$ is obtained by applying
    $\W(u;\delta)$ with $\delta=a$ ($\delta=b$) to the first (second) margin of $(U_1,U_2)\sim C$.
    Specifically, $K_{0,1}(u_1,u_2)=u_1-C(u_1,1-u_2)$ is a reflection of $C$ in the second component
    (with stochastic representation $(U_1,1-U_2)$), $K_{1,0}(u_1,u_2)=u_2-C(1-u_1,u_2)$ is a
    reflection of $C$ in the first component (with stochastic representation $(1-U_1,U_2)$) and
    $K_{1,1}(u_1,u_2)=-1+u_1+u_2+C(1-u_1,1-u_2)$ is the \emph{survival copula} $\hat{C}$
    of $C$ (with stochastic representation $(1-U_1,1-U_2)$).
  \item\label{item:inter:mono}
    \emph{$C_{\bm{\W}}$ is a sum of volumes of $C$.} Consider $(U_1, U_2)\sim C$,
    $\W_1(u) = \bigl|3|u-\frac{2}{3}|-1\bigr|$, and $\W_2(u)=1-|2u-1|$, $u\in[0, 1]$.
    Then one has $\delta_{1,k_1}=k_1/3$ for $k_1\in\{0, 1, 2, 3\}$, $\delta_{2,k_2}=k_2/2$
    for $k_2\in\{0,1,2\}$, $I_1=\{2\}$ and $I_2=\{1\}$. For any $u_1, u_2\in[0,1]$,
    the W-transformed copula $C_{\bm{\W}}$ is given by
    \begin{align*}
      C_{\bm{\W}}(u_1, u_2)&=\Delta_{\bigl(\frac{1-u_1}{3},\frac{1+u_1}{3}\bigr]\times
                             \bigl(0, \frac{u_2}{2}\bigr]}C+
                             \Delta_{\bigl(\frac{1-u_1}{3},\frac{1+u_1}{3}\bigr]\times
                             \bigl(\frac{2-u_2}{2}, 1\bigr]}C\,+\\
                           &\phantom{={}}\Delta_{\bigl(\frac{3-u_1}{3}, 1\bigr]\times
                             \bigl(0, \frac{u_2}{2}\bigr]}C+
                             \Delta_{\bigl(\frac{3-u_1}{3}, 1\bigr]\times\bigl(\frac{2-u_2}{2}, 1\bigr]}C,
    \end{align*}
    so $C_{\bm{\W}}(u_1, u_2)$ is obtained by
    summing the volumes of $C$ in the area depicted by the shaded region in
    the top-left panel of Figure~\ref{fig:grid:volume}.
    \begin{figure}[htbp]
      \centering
      \hspace{0.3cm}
      \includegraphics[width=0.46\textwidth]{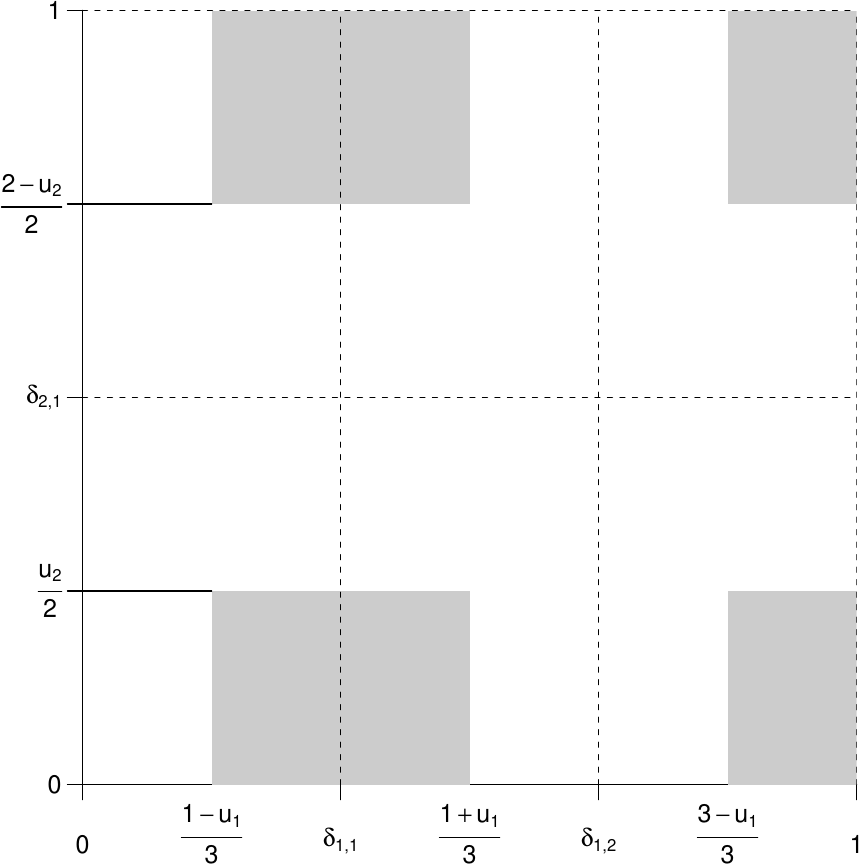}
      \hfill
      \includegraphics[width=0.46\textwidth]{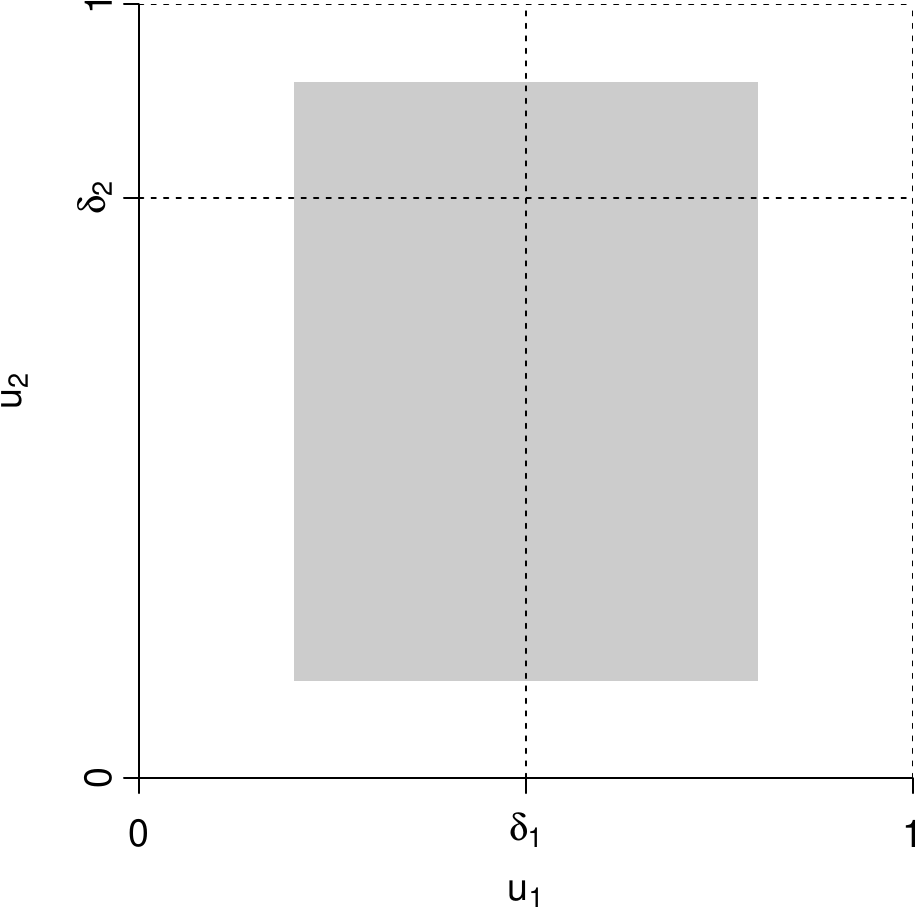}\\[5mm]
      \includegraphics[width=0.49\textwidth]{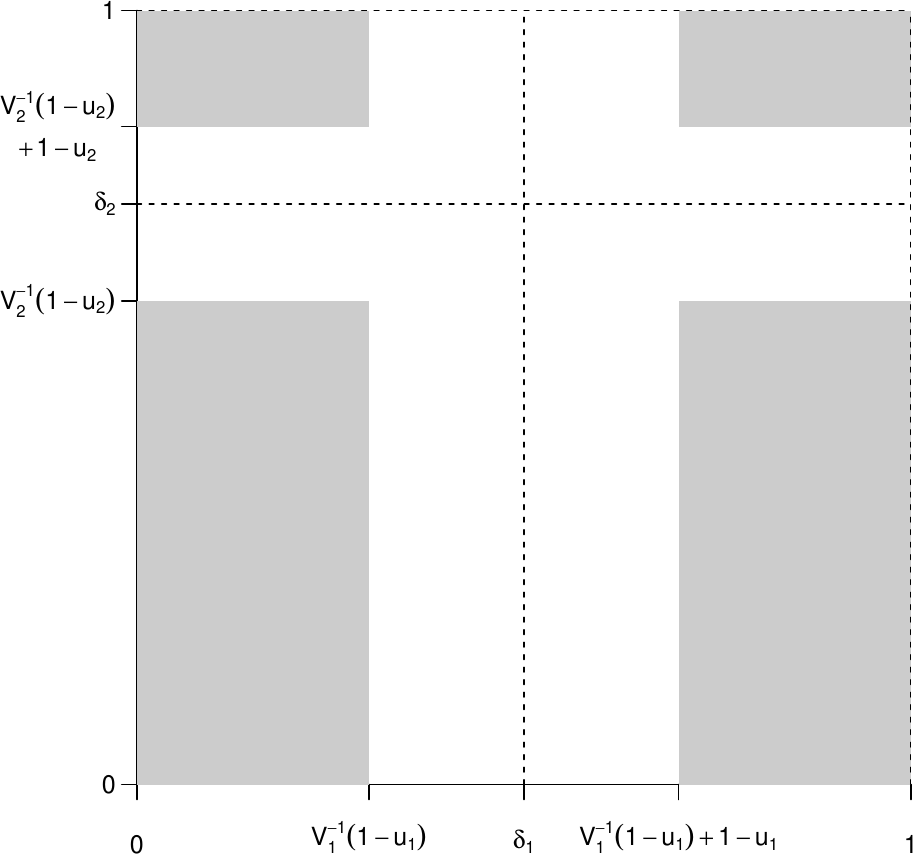}
      \hfill
      \includegraphics[width=0.45\textwidth]{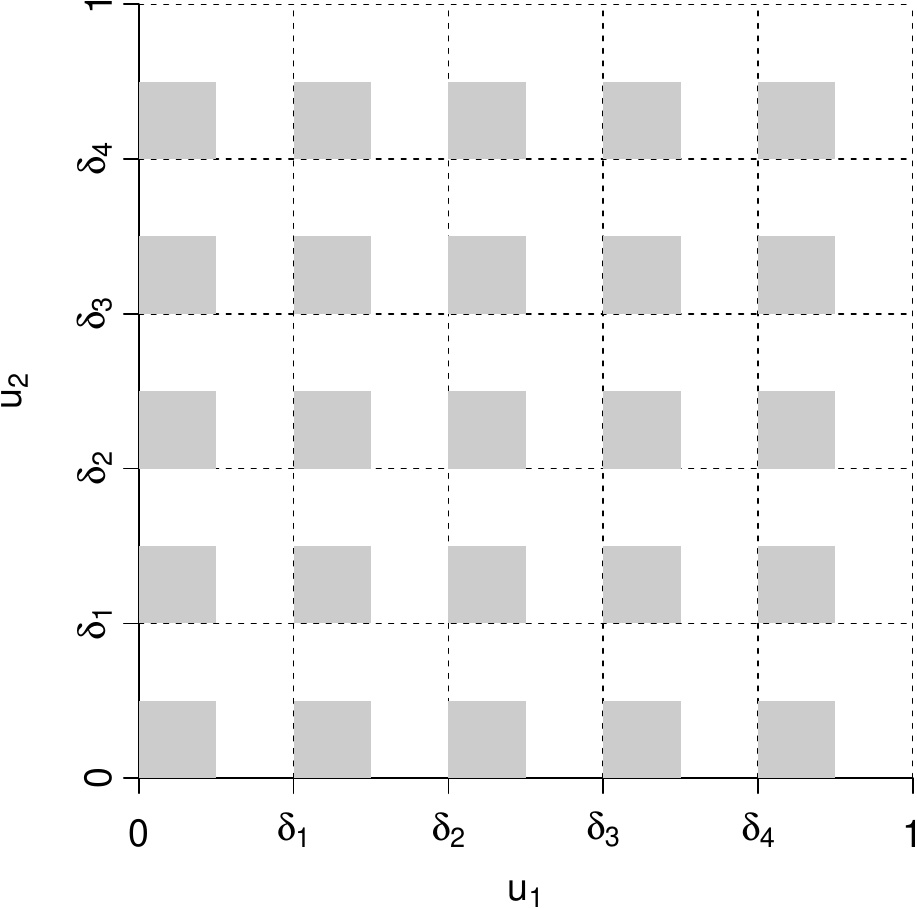}
      \caption{The shaded
        areas depict the rectangular regions of which the volumes of $C$
        are summed up to determine the value of $C_{\bm{\W}}$, and
        this for four different $\bm{\W}$ (top-left: general
        W-transform; top-right: v-transform; bottom-left: flipped
        v-transform; bottom-right: piecewise increasing W-transform).  All
        vertices strictly inside $[0, 1]^2$ are determined by
        applying the respective piecewise inverse of
        $\bm{\W}$ componentwise.}
      \label{fig:grid:volume}
    \end{figure}
  \item\label{eg:w:trans:volume:v:trans} \emph{(Flipped) v-transformed copulas.} For $j=1,\dots,d$, let
    $\mathcal{V}_j$ be as in \eqref{eq:vtrans}, so a special W-transform
    with change points $\delta_{j,0}=0$, $\delta_{j,1}=\delta_j$,
    $\delta_{j,2}=1$. Let
    $\mathcal{V}^{-1}_j:(0, 1]\rightarrow [0,\delta_j)$ be the inverse of
    the left branch of $\mathcal{V}_j$. Let $C$ be any $d$-dimensional copula.  Then by
    \eqref{eq:W:trafo:cop} and Proposition~\ref{prop:W:property}~\ref{prop:W:property:partition:sq},
    the W-transformed (or here: v-transformed) copula
    $C_{\bm{\mathcal{V}}}$ is
    \begin{align*}
      C_{\bm{\mathcal{V}}}(u_1, \dots, u_d)=\Delta_BC,\quad B= \prod_{j=1}^d(\mathcal{V}_j^{-1}(u_j),
      \mathcal{V}^{-1}_j(u_j)+u_j], \quad \bm{u}\in[0, 1]^d.
    \end{align*}
    For $j=1,\dots,d$, consider the \textit{flipped v-transform}
    $\mathcal{V}^\ast_j=1-\mathcal{V}_j$, which is also a special
    W-transform (but not a v-transform) with the same change points as $\mathcal{V}_j$ but with the
    monotonicity flipped on each piece. The inverse of the left branch of
    $\mathcal{V}^\ast_j$ is
    $\mathcal{V}^{*\,-1}_j(v)=\mathcal{V}_j^{-1}(1-v)$, $v\in(0,1]$.

    By Theorem~\ref{theorem:U:V:cop}, the copula obtained by marginally applying
    the flipped v-transforms is
    \begin{align*}
      C_{\bm{\mathcal{V}}^\ast}(u_1, \dots, u_d)=\sum_{k_d=1}^2\cdots\sum_{k_1=1}^2
      \Delta_{B_{\bm{\delta}, \bm{\mathcal{V}^{-1}(\bm{u}),\bm{I}}}}C, \quad \bm{u}\in[0, 1]^d,
    \end{align*}
    where
    \begin{align*}
      B_{\bm{\delta}, \bm{\mathcal{V}^{-1}(\bm{u}),\bm{I}}}=
      \prod_{j=1}^d\mathcal{I}_j^{(k_j)}, \quad\mathcal{I}_j^{(k_j)}=
      \begin{cases}
        (0,\mathcal{V}^{-1}_j(1-u_j)], & k_j =1,\\
        [\mathcal{V}^{-1}_j(1-u_j)+1-u_j, 1], & k_j=2.
      \end{cases}
    \end{align*}
    We thus see that $C_{\bm{\mathcal{V}}}$ is obtained by evaluating $C$-volumes of hyperrectangles
    at the centre of $[0,1]^d$ (see the top-right panel of Figure~\ref{fig:grid:volume} for $d=2$,
    $\mathcal{V}_1(u)=|2u-1|$ and
    $\mathcal{V}_2(u)=\begin{cases}
      2-\sqrt{1+4u}, & u\in[0, 0.75],\\
      2\sqrt{u-0.75}, & u\in(0.75, 1].
    \end{cases}$), while
    $C_{\bm{\mathcal{V}}^\ast}$ is obtained by evaluating $C$-volumes of hyperrectangles anchored at the
    corners of $[0,1]^d$ (see the bottom-left panel of Figure~\ref{fig:grid:volume} for $d=2$).
    Notably, the behaviour of $C_{\bm{\mathcal{V}}}$ and $C_{\bm{\mathcal{V}}^\ast}$ are opposites
    as $\bm{u}$ approaches $\bm{0}$ and $\bm{1}$. As $\bm{u}\to\bm{0}$, the lower tail of
    $C_{\bm{\mathcal{V}}}$ is aggregated by the centre volume of $C$, while the lower tail of
    $C_{\bm{\mathcal{V}}^\ast}$ is aggregated by the volumes at the four corners. Conversely, as
    $\bm{u}\to1$, their upper tails are aggregated by the four corners and the centre volume,
    respectively.
  \item \emph{Piecewise monotone W-transformed copulas.} For $j=1,\dots,d$,
    consider $d$ piecewise increasing (decreasing) W-transforms $\W_j$ with change points
    $\delta_{j,k}$, $j\in\{1,\dots, d\}$, $k\in\{1,\dots, K_j\}$. By
    Theorem~\ref{theorem:W:trafo:cop}, the joint distribution of
    $(\W_1(U_1), \dots, \W_d(U_d))$ is given by
    \begin{align*}
      C_{\bm{\W}}(u_1,\dots, u_d) &= \sum_{k_d=1}^K\cdots
                                    \sum_{k_1=1}^K \Delta_{B_{\bm{\delta}, \bm{W}^{-1}(\bm{u})}}C, \quad \bm{u}\in[0, 1]^d,
    \end{align*}
    where
    \begin{align*}
      B_{\bm{\delta}, \bm{W}^{-1}(\bm{u})}=\begin{cases}
        \prod_{j=1}^d\bigl(\delta_{j,k_j-1}, \W^{-1}_{j|k_j}(u_j)\bigr],& \text{if each }
                                                                          \W \text{ is piecewise increasing,}\\
        \prod_{j=1}^d\bigl(\W^{-1}_{j|k_j}(u_j), \delta_{j,k_j}\bigr], & \text{if each }
                                                                         \W \text{ is piecewise decreasing.}
      \end{cases}
    \end{align*}
    We deduce that for piecewise increasing (decreasing) W-transforms, the
    $C$-volumes are always evaluated at hyperrectangles anchored at the lower
    (upper) corner of each grid cell of the rectilinear grid. As a concrete
    example, consider $\W_1=\dots=\W_d=:\W$
    with $\W(u)=5u-\lceil 5u\rceil+1$ for the change points
    $\delta_k=k/5$, $u\in[0, 1]$, $k= 0,\dots,5$. Then the shaded area in the
    bottom-right panel of Figure~\ref{fig:grid:volume} displays the regions
    over which the volumes of $C$ are aggregated to get the values of the homogeneous
    W-transformed copula $C_{\bm{\W}}$.
  \end{enumerate}
\end{example}

\begin{remark}[Computational aspects]
  The density formula in Theorem~\ref{theorem:W:trafo:cop} involves at most $\prod_{j=1}^dK_j$ terms.
  In practice, this poses a computational burden when $d$ is large. However, we note three points
  that collectively show this complexity is not a practical barrier to the framework's applicability.
  First, for low dimensions $d\in\{2,3,4\}$, the number of summands is small and likelihood evaluation is
  straightforward. Second, as we shall present in Section~\ref{subsec:tail}, for large $d$, the ordinal
  sum construction with piecewise increasing W-transforms (Equation~\ref{eq:os:in:W:cop}) yields
  a density whose evaluation cost is linear in the number of pieces $K$, remaining fully tractable
  regardless of dimension. For W-transforms with interchanging monotonicity over adjacent pieces, many
  rectangular regions merge, substantially reducing the number of summands.
  Examples~\ref{eg:w:trans:volume}~\ref{item:inter:mono} and~\ref{eg:w:trans:volume:v:trans}
  require only 4 and 1 evaluations of volumes (rather than 6 and 4), respectively. The effective
  cost has a lower bound of $O(\prod_{j=1}^d(K_j/2))$ in such cases, far below the worst case
  $O(\prod_{j=1}^dK_j)$.
\end{remark}

As we have seen in~\eqref{eq:W:trafo:cop} of Theorem~\ref{theorem:W:trafo:cop},
the value of the W-transformed copula $C_{\bm{\W}}$ is a sum of
$C$-volumes. We now present the analytical form of
$C_{\bm{\W}}$-volumes and their relationship to $C$-volumes through the
W-transforms $\W_1, \dots, \W_d$.
\begin{proposition}[Volume of $C_{\bm{\W}}$]\label{prop:W:volume}
  Let $C$ be a copula, $\W_1,\dots, \W_d$ be W-transforms, and $C_{\bm{\W}}$
  the corresponding W-transformed copula. Then the $C_{\bm{\W}}$-volume of $(\bm{a}, \bm{b}]$
  with $\bm{0}\leq\bm{a}\leq\bm{b}\leq\bm{1}$ is
  \begin{align*}
    \Delta_{(\bm{a},\bm{b}]}C_{\bm{\W}}=\sum_{k_d=1}^{K_d}\cdots\sum_{k_1=1}^{K_1}
    \Delta_{B_{\bm{\W^{-1}}(\bm{a}),\bm{\W^{-1}}(\bm{b}),\bm{I}}}C,
  \end{align*}
  where
  \begin{align*}
    B_{\bm{\W^{-1}}(\bm{a}),\bm{\W^{-1}}(\bm{b}),\bm{I}}
    =\prod_{j=1}^d\bigl(\W_{j|k_j}^{-1}(a_j^{\I\{k_j\in I_j\}}b_j^{\I\{k_j\notin I_j\}}),
    \W_{j|k_j}^{-1}(a_j^{\I\{k_j\notin I_j\}}b_j^{\I\{k_j\in I_j\}})\bigr).
  \end{align*}
\end{proposition}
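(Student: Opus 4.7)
The plan is to exploit the stochastic representation $(\W_1(U_1),\ldots,\W_d(U_d))\sim C_{\bm{\W}}$ for $\bm{U}\sim C$ recorded in~\eqref{eq:W:transformed:copula}, which turns the $C_{\bm{\W}}$-volume into a probability that can be decomposed piecewise via the pcsm structure of each $\W_j$.

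First I would write
$$\Delta_{(\bm{a},\bm{b}]}C_{\bm{\W}} = \P\bigl(\bm{\W}(\bm{U})\in(\bm{a},\bm{b}]\bigr) = \P\bigl(\bm{U}\in\bm{\W}^{-1}((\bm{a},\bm{b}])\bigr).$$
For each coordinate $j$, Proposition~\ref{prop:W:property}~\ref{prop:W:property:change:pt} and~\ref{prop:W:property:monotone} guarantee that $[0,1]$ is partitioned by the pieces $(\delta_{j,k_j-1},\delta_{j,k_j}]$ on which $\W_{j|k_j}$ is continuous and strictly monotone, so
$$\W_j^{-1}((a_j,b_j]) = \biguplus_{k_j=1}^{K_j}\W_{j|k_j}^{-1}((a_j,b_j]),$$
and by strict monotonicity each piecewise preimage is again an interval: for $k_j\in I_j$ it is $(\W_{j|k_j}^{-1}(a_j),\W_{j|k_j}^{-1}(b_j)]$, and for $k_j\notin I_j$ it is $(\W_{j|k_j}^{-1}(b_j),\W_{j|k_j}^{-1}(a_j)]$. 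These intervals coincide with the factors of $B_{\bm{\W^{-1}}(\bm{a}),\bm{\W^{-1}}(\bm{b}),\bm{I}}$, and the conventions $\sup\emptyset=\delta_{j,k_j-1}$, $\inf\emptyset=\delta_{j,k_j}$ introduced in Section~\ref{subsec:sto:inv} automatically clip to the current piece whenever $(a_j,b_j]$ overflows $\ran(\W_{j|k_j})$, producing a degenerate and hence zero-measure interval.

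Taking Cartesian products over $j$ yields the disjoint decomposition
$$\bm{\W}^{-1}((\bm{a},\bm{b}]) = \biguplus_{k_1=1}^{K_1}\cdots\biguplus_{k_d=1}^{K_d} B_{\bm{\W^{-1}}(\bm{a}),\bm{\W^{-1}}(\bm{b}),\bm{I}}.$$
Because the univariate margins of $C$ are $\U(0,1)$ and hence continuous, change-point boundaries carry zero probability, so open versus half-open versus closed endpoint conventions are immaterial. Countable additivity then gives
$$\P\bigl(\bm{U}\in\bm{\W}^{-1}((\bm{a},\bm{b}])\bigr) = \sum_{k_1=1}^{K_1}\cdots\sum_{k_d=1}^{K_d}\P\bigl(\bm{U}\in B_{\bm{\W^{-1}}(\bm{a}),\bm{\W^{-1}}(\bm{b}),\bm{I}}\bigr) = \sum_{k_1=1}^{K_1}\cdots\sum_{k_d=1}^{K_d}\Delta_{B_{\bm{\W^{-1}}(\bm{a}),\bm{\W^{-1}}(\bm{b}),\bm{I}}}C,$$
which is the claimed identity.

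The main obstacle is the boundary bookkeeping: one must verify that the stochastic-inverse conventions from Section~\ref{subsec:sto:inv} yield precisely the right clipped interval whenever $a_j$ or $b_j$ falls outside $\ran(\W_{j|k_j})$, so that summing over $\bm{k}$ reassembles the full preimage $\bm{\W}^{-1}((\bm{a},\bm{b}])$ without gaps or double counting at change points. An alternative, purely algebraic route would be to substitute~\eqref{eq:W:trafo:cop} into the $2^d$-corner expansion $\Delta_{(\bm{a},\bm{b}]}C_{\bm{\W}}=\sum_{\bm{i}\in\{0,1\}^d}(-1)^{\sum_j i_j}C_{\bm{\W}}(\bm{c}_{\bm{i}})$ and exploit the coordinatewise factorisation of the $\bm{i}$-sum: the $\delta$-endpoints do not depend on $\bm{i}$ and hence cancel, leaving only the preimage endpoints. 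This recovers the same formula but is more tedious than the probabilistic argument sketched above.
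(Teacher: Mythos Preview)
Your proposal is correct and follows essentially the same probabilistic route as the paper: both start from the stochastic representation $\Delta_{(\bm{a},\bm{b}]}C_{\bm{\W}}=\P(\bm{\W}(\bm{U})\in(\bm{a},\bm{b}])$, decompose each coordinate's preimage over the pieces of $\W_j$, identify the resulting intervals according to the monotonicity of $\W_{j|k_j}$, and sum over $\bm{k}$. Your treatment is in fact more careful about the endpoint and clipping conventions than the paper's terse derivation.
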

It follows from Proposition~\ref{prop:W:volume} that, as $C_{\bm{\W}}$ values (see~\eqref{eq:W:trafo:cop}),
$C_{\bm{\W}}$-volumes are also sums of $C$-volumes, but instead
of hyperrectangles anchored at the corners of each rectilinear grid, hyperrectangles in the centre enter.

\subsection{Tail dependence}\label{subsec:tail}
In this section, we study tail dependence of W-transformed copulas.  As
W-transforms allow one to introduce (tail) asymmetry, we start by considering an
asymmetric notion of tail dependence, namely the notion of maximal tail
concordance of \cite{koikekatohofert2023}. We then focus on homogeneous
W-transformed copulas and study the influence of W-transforms on the tail
dependence coefficients, defined by
$\lambda_{\text{u}}=\lim_{t\rightarrow 1-}(1-2t+C(t,t))/(1-t)$ in the upper
and $\lambda_{\text{l}}=\lim_{t\rightarrow 0+}C(t, t)/t$ in the lower
tail.

In terms of the \emph{(lower) tail copula} $\Lambda(x,y;C)=\lim_{p\to0+}C(px,py)/p$, $(x,y)\in [0,\infty)^2$, of a
copula $C$, the \emph{maximal tail concordance measure (MTCM)} of $C$ is
\begin{align*}
  \lambda_{\operatorname{MTCM}}^\ast(C) = \sup_{b\in(0,\infty)}\Lambda(b,1/b;C),
\end{align*}
which equals $\Lambda(b^*,1/b^*;C)$ if a unique maximiser $b^*\in(0,\infty)$
exists.  The following result provides the MCTM of flipped v-transformed
copulas, showcasing how flipped v-transforms (as in
Example~\ref{eg:w:trans:volume}~\ref{eg:w:trans:volume:v:trans}) affect the
direction and the magnitude of the MTCM of $C$.
\begin{proposition}[MTCM of flipped v-transformed copulas]\label{prop:flip:v:trans:tail}
  Let $C$ be a bivariate copula with MTCM
  $\lambda_{\operatorname{MTCM}}^\ast(C)=\Lambda(b^*,1/b^*;C)$ for some
  $b^*\in(0,\infty)$. Consider a flipped v-transform $\mathcal{V}^*$
  with change point $\delta$.
  If $C$ is tail independent in the upper-left, upper and lower-right tails %
  and lower tail dependent, then the flipped v-transformed copula
  $C_{\bm{\mathcal{V}}^*}$ has MTCM
  \begin{align*}
    \Lambda(b^*_{C_{\bm{\mathcal{V}}^*}},1/b^*_{C_{\bm{\mathcal{V}}^*}};C_{\bm{\mathcal{V}}^*})=\sqrt{\alpha_1\alpha_2}\Lambda(b^*,1/b^*;C),
  \end{align*}
  where $b^*_{C_{\bm{\mathcal{V}}^*}}=\sqrt{\alpha_2/\alpha_1}b^*$,
  $\alpha_1=(\mathcal{V}_{1|1}^{*\,-1})'(0+)=(\mathcal{V}_{1|1}^{-1})'(1-)$ and
  $\alpha_2=(\mathcal{V}_{2|1}^{*\,-1})'(0+)=(\mathcal{V}_{2|1}^{-1})'(1-)$.
\end{proposition}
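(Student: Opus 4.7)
The plan is to compute the lower tail copula $\Lambda(\,\cdot\,;C_{\bm{\mathcal{V}}^*})$ explicitly from the four-term decomposition of $C_{\bm{\mathcal{V}}^*}$ derived in Example~\ref{eg:w:trans:volume}~\ref{eg:w:trans:volume:v:trans}, and then to optimise over $b$ via the degree-one homogeneity of the tail copula.

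First, substituting $u_j = p x_j$ into
$$C_{\bm{\mathcal{V}}^*}(u_1, u_2) = \sum_{k_1, k_2 \in \{1,2\}} \Delta_{B_{\bm{\delta}, \bm{\mathcal{V}}^{-1}(\bm u), \bm I}} C$$
produces four $C$-volumes, one anchored at each corner of $[0,1]^2$. Since $\mathcal{V}_j^{-1}(1)=0$, the differentiability at $1-$ gives $\mathcal{V}_j^{-1}(1 - p x_j) = \alpha_j p x_j + o(p)$. The $(k_1,k_2)=(1,1)$ volume is therefore $C(\alpha_1 p x + o(p),\alpha_2 p y + o(p))$, which, divided by $p$, converges to $\Lambda(\alpha_1 x,\alpha_2 y;C)$ by lower tail dependence. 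The $(1,2)$ volume simplifies to $s - C(s, 1-t)$ with $s = \alpha_1 p x + o(p)$ and $1 - t = (1-\alpha_2) p y + o(p)$; this is an upper-left tail probability of $C$ and hence $o(p)$ by upper-left tail independence. Symmetrically, the $(2,1)$ volume is a lower-right tail probability, $o(p)$ by lower-right tail independence, and the $(2,2)$ volume is an upper tail probability, $o(p)$ by upper tail independence. Summing yields
$$\Lambda(x, y; C_{\bm{\mathcal{V}}^*}) = \Lambda(\alpha_1 x, \alpha_2 y; C).$$

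Second, setting $s = \alpha_1 b$ gives $\alpha_2/b = \alpha_1 \alpha_2 / s$, and the degree-one homogeneity of $\Lambda(\,\cdot\,;C)$ yields
$$\Lambda(s,\alpha_1 \alpha_2/s; C) = \sqrt{\alpha_1 \alpha_2}\,\Lambda\bigl(s/\sqrt{\alpha_1 \alpha_2},\sqrt{\alpha_1 \alpha_2}/s; C\bigr).$$
Taking the supremum over $s>0$ reduces to $\sqrt{\alpha_1 \alpha_2}\,\sup_t \Lambda(t,1/t;C) = \sqrt{\alpha_1 \alpha_2}\,\Lambda(b^*,1/b^*;C)$, with the maximiser at $t=b^*$, i.e.\ at $b = b^*\sqrt{\alpha_2/\alpha_1}$, which matches the claim.

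The main obstacle is the $o(p)$ control of the three off-diagonal $C$-volumes: each of the ``upper-left'', ``lower-right'' and ``upper'' tail independence assumptions must be encoded precisely through the relevant one-sided tail copula so that the associated boundary $C$-volume vanishes faster than $p$. A secondary subtlety lies in the sign convention for $\alpha_j$: $\mathcal{V}_{j|1}^{-1}$ is decreasing whereas $\mathcal{V}_{j|1}^{*-1}(v) = \mathcal{V}_{j|1}^{-1}(1-v)$ is increasing, so the equality $(\mathcal{V}_{j|1}^{*-1})'(0+) = (\mathcal{V}_{j|1}^{-1})'(1-)$ must be read up to sign and the Taylor expansions carried through with all $\alpha_j$'s taken nonnegative so that the scaling in $\Lambda(\alpha_1 x,\alpha_2 y;C)$ is well-defined.
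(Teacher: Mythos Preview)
Your proposal is correct and follows essentially the same route as the paper's own proof: the four-corner decomposition of $C_{\bm{\mathcal{V}}^*}$ from Example~\ref{eg:w:trans:volume}~\ref{eg:w:trans:volume:v:trans}, Taylor expansion of the piecewise inverses at $0+$, elimination of the three off-diagonal volumes via the respective tail-independence assumptions, and the reduction $\Lambda(b,1/b;C_{\bm{\mathcal{V}}^*})=\Lambda(\alpha_1 b,\alpha_2/b;C)$ followed by optimisation through homogeneity. The only cosmetic difference is that the paper invokes Lipschitz continuity of $C$ explicitly to pass from $C(\alpha_1 p x+o(p),\alpha_2 p y+o(p))/p$ to $\Lambda(\alpha_1 x,\alpha_2 y;C)$ and cites \cite{koikekatohofert2023} for the proportionality condition at the optimum, whereas you handle both steps by direct substitution; your flagging of the sign convention in $\alpha_j$ is also apt.
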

Proposition~\ref{prop:flip:v:trans:tail} implies that if
$\alpha_1,\alpha_2\neq0$, the MTCM of flipped v-transformed copulas is attained
along the line with slope
$1/b^{*\,2}_{C_{\bm{\mathcal{V}}^*}}=\alpha_1/(\alpha_2b^{*\,2})$ and intercept
$0$. Otherwise if $\alpha_1$ or $\alpha_2$ is $0$, then the flipped
v-transformed copula is lower tail independent. Furthermore, since
$\alpha_1,\alpha_2\in[0,1]$, the value of MTCM is scaled down by the factor
$\sqrt{\alpha_1\alpha_2}$.

We now turn our attention to homogeneous W-transformed copulas, and study their tail dependence coefficients.
As we have seen in Example~\ref{eg:w:trans:volume}, the tails
of the W-transformed copula depends on the centre volume of $C$ in general, but how $C$ behaves in
the centre is indeterminate. We therefore turn to v-transforms and the v-transformed copula to derive
the upper tail dependence coefficient $\lambda_{\text{u}}$.
However, we are not able to do so for the lower tail dependence coefficient $\lambda_{\text{l}}$, for the
lower tail of $C_{\mathcal{V}}$ depends on the centre of $C$.

\begin{proposition}[Upper tail dependence for $C_{\mathcal{V}}$]\label{prop:tail:coeff}
  Let $C$ be a copula with tail dependence coefficients
  $\lambda_{\text{l}}, \lambda_{\text{u}}$, and let $\mathcal{V}$ be a
  v-transform with change point $\delta$. In terms of the $j$th piece
  $\mathcal{V}_{|k}$, $k=1,2$, of $\mathcal{V}$, the upper tail dependence
  coefficient $\lambda_{\text{u}}^{C_\mathcal{V}}$ of the homogeneous
  v-transformed copula is
  \begin{align*}
    \lambda_{\text{u}}^{C_\mathcal{V}}&=\frac{1}{-\mathcal{V}_{|1}'(0+)}\lambda_{\text{l}}^{C}+\biggl(1-\frac{1}{-\mathcal{V}_{|1}'(0+)}\biggr)\lambda_{\text{u}}^{C}\\
                                      &\mathrel{\phantom{=}}+\frac{2}{-\mathcal{V}_{|1}'(0+)}-\lim_{t\to1-}\frac{C(\mathcal{V}^{-1}_{|1}(t)+t,
                                        \mathcal{V}^{-1}_{|1}(t))+C(\mathcal{V}^{-1}_{|1}(t), \mathcal{V}^{-1}_{|1}(t)+t)}{1-t}.
  \end{align*}
  In particular, if $C$ is tail-independent in the upper-left and lower-right tails, then
  \begin{align*}
    \lambda_{\text{u}}^{C_\mathcal{V}}&=\frac{1}{-\mathcal{V}_{|1}'(0+)}\lambda_{\text{l}}^{C}+\biggl(1-\frac{1}{-\mathcal{V}_{|1}'(0+)}\biggr)\lambda_{\text{u}}^{C}.
  \end{align*}
\end{proposition}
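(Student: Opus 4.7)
The plan is to evaluate the diagonal limit defining $\lambda_{\text{u}}^{C_\mathcal{V}}$ by inserting the closed-form bivariate v-transformed copula from Example~\ref{eg:w:trans:volume}~\ref{eg:w:trans:volume:v:trans} and then passing to the limit piece by piece. Let $a := \mathcal{V}^{-1}_{|1}(t)$. The partition of square property (Proposition~\ref{prop:W:property}~\ref{prop:W:property:partition:sq}), specialised to the two branches of a v-transform, yields $\mathcal{V}^{-1}_{|2}(t) = a + t$, so Example~\ref{eg:w:trans:volume}~\ref{eg:w:trans:volume:v:trans} gives
\begin{equation*}
  C_\mathcal{V}(t,t) = \Delta_{(a,a+t]^2}C = C(a+t,a+t) - C(a+t,a) - C(a,a+t) + C(a,a).
\end{equation*}
Next, I would establish the scaling of $a$ and $1-(a+t)$ as $t\to 1-$. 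Since $\mathcal{V}_{|1}$ is strictly decreasing with $\mathcal{V}_{|1}(0+)=1$ and differentiable at $0+$, the relation $\mathcal{V}_{|1}(a)=t$ combined with the difference quotient yields
\begin{equation*}
  \lim_{t\to 1-}\frac{a}{1-t} = \lim_{a\to 0+}\frac{a}{\mathcal{V}_{|1}(0+)-\mathcal{V}_{|1}(a)} = \frac{1}{-\mathcal{V}'_{|1}(0+)} =: \frac{1}{\alpha},
\end{equation*}
and therefore $b := a+t$ satisfies $(1-b)/(1-t) = 1 - a/(1-t) \to 1 - 1/\alpha$.

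I would then decompose the numerator via the identity $1-2t = (1-2b)+2a$ as
\begin{equation*}
  1 - 2t + C_\mathcal{V}(t,t) = \bigl[1-2b+C(b,b)\bigr] + C(a,a) + 2a - C(b,a) - C(a,b),
\end{equation*}
divide by $1-t$ and let $t\to 1-$. The bracket tends to $(1-1/\alpha)\lambda_{\text{u}}^C$ upon applying the definition of $\lambda_{\text{u}}$ at $s=b\to 1-$ and multiplying by the weight $(1-b)/(1-t)\to 1-1/\alpha$; the term $C(a,a)/(1-t) = (C(a,a)/a)\cdot(a/(1-t))$ tends to $\lambda_{\text{l}}^C/\alpha$; and $2a/(1-t)\to 2/\alpha$. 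The remaining off-diagonal terms leave exactly the last limit in the stated general formula. For the ``in particular'' statement, upper-left tail independence and lower-right tail independence of $C$ translate, via $\P(U_1\le a, U_2>b)=a-C(a,b)$ and $\P(U_1>b, U_2\le a)=a-C(b,a)$, to $a-C(a,b)=o(1-t)$ and $a-C(b,a)=o(1-t)$ as $t\to 1-$. Hence both $C(a,b)/(1-t)$ and $C(b,a)/(1-t)$ tend to $1/\alpha$, their sum cancels the $+2/\alpha$ contribution, and the reduced formula follows.

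The main technical obstacle I anticipate is handling the mixed terms $C(a,b)$ and $C(b,a)$ cleanly: each tends to $0$ as $t\to 1-$ but must be compared to the scale $1-t$, leaving an indeterminate form that depends on the joint tail behaviour of $C$ away from the main diagonal. In particular, one must carefully interpret upper-left and lower-right tail independence along the specific path $(a,b)$ in the unit square (with $a\sim (1-t)/\alpha$ and $1-b\sim(1-t)(1-1/\alpha)$) rather than along arbitrary approaches to the corners $(0,1)$ and $(1,0)$. A secondary care point is that the $o(1-t)$ error terms coming from the diagonal asymptotics of $\lambda_{\text{u}}$ and $\lambda_{\text{l}}$ must be combined consistently after division by $1-t$, which is routine once the substitution $s=b$ for the upper-tail limit is made explicit.
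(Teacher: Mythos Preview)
Your proposal is correct and follows essentially the same approach as the paper: both express $C_{\mathcal{V}}(t,t)$ as the $C$-volume $\Delta_{(a,a+t]^2}C$, add and subtract $2a/(1-t)$ in the numerator, and split off the factors $(1-b)/(1-t)$ and $a/(1-t)$ to recognise $\lambda_{\text{u}}^C$ and $\lambda_{\text{l}}^C$. Your use of the difference quotient in place of L'H\^opital's rule and your more explicit treatment of the off-diagonal terms in the ``in particular'' part are minor presentational differences only.
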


Proposition~\ref{prop:tail:coeff} says that the upper tail dependence
coefficient is a convex combination of $\lambda_{\text{l}}^C$
and $\lambda_{\text{u}}^C$ (plus a constant if $C$ is not tail independent in the
upper-left and lower-right tails). This structure creates an opportunity to design copulas
$C$ that yield W-transformed copulas with specific tail properties.
While Lemma~\ref{lem:unif:preserv} implies that W-transforms
generally redistribute probability mass away from the tails (a geometric ``dragging'' effect that could
lead to a decrease in tail dependence), Proposition~\ref{prop:tail:coeff}
does not rule out the possibility of enhancing tail dependence through a
strategic choice of $C$ and $\W$. In Section~\ref{subsec:distr} we have seen that
W-transformed copulas are sums of volumes of $C$, and such volumes are
determined precisely by the change points of $\W$. This connection motivates us to consider
\textit{ordinal sum} copulas investigated by \cite[Section~2.3.3]{nelsen1999} in
the bivariate case and by \cite{mesiar2010ordinal} in the multivariate case,
given by
\begin{align}\label{eq:ordinal:sum}
  C_S(\bm{u}) = \sum_{k=1}^K(\delta_k-\delta_{k-1})C_k\Bigl(\min\Bigl\{\max\Bigl\{
  \dfrac{\bm{u}-\bm{\delta}_{k-1}}{\delta_{k}-\delta_{k-1}}, \bm{0}\Bigr\}, \bm{1}\Big\}\Bigr),
\end{align}
where $C_1, \dots, C_K$ are copulas and $\min\{\bm{u}\}$, $\max\{\bm{u}\}$ denote the elementwise
minimum and maximum of the vector $\bm{u}$. Take a piecewise surjective and
increasing W-transform $\W$. The homogeneous W-transformed copula
$C_{S,\W}$ then is
\begin{align}\label{eq:os:homo:in:W:cop}
  C_{S,\W}(\bm{u})=\sum_{k=1}^K(\delta_{k}-\delta_{k-1})C_k(G_{k}(u_1),\dots,G_{k}(u_d)),\quad
  G_{k}(u) \coloneqq \begin{cases}
    0, & u=0,\\
    \frac{\W^{-1}_{|k}(u)-\delta_{k-1}}{\delta_k-\delta_{k-1}} & u\in(0,1),\\
    1, & u=1.
  \end{cases}
\end{align}
To facilitate the application of W-transformed ordinal sums in
Section~\ref{subsec:cop:ordinal:sum} later, we now derive the analytical form of
the coefficients of tail dependence of $C_{S,\W}$.
\begin{proposition}[Tail dependence coefficients of $C_{S, \W}$]\label{prop:os:tail:dependence}
  Consider a piecewise surjective and increasing W-transform $\W$ with change points
  $\{\delta_k\}_{k=0}^K$. Define $G_{k}(u)$ as in \eqref{eq:os:homo:in:W:cop} and
  its derivative $g_{k}(u)=\frac{\rd }{\rd u} G_{k}(u)$ which exists
  almost everywhere on $[0,1]$. Let $d=2$ and $\lambda_{\text{l}}, \lambda_{\text{u}}$
  be the lower and upper tail dependence coefficients of the homogeneous W-transformed copula $C_{S,\W}$
  as in~\eqref{eq:os:homo:in:W:cop}. Denote by $\lambda_{\text{l},k}, \lambda_{\text{u},k}$ the lower and upper
  tail dependence coefficients of the component copula $C_k$, $k=1,\dots,K$. Then,
  \begin{align*}
    \lambda_\text{l}=\sum_{k=1}^K\alpha_k\lambda_{\text{l},k}\quad \text{and} \quad
    \lambda_\text{u}=\sum_{k=1}^K\beta_k\lambda_{\text{u},k},
  \end{align*}
  where $\alpha_k=(\delta_k-\delta_{k-1})g_k(0+)\ge0$, $\beta_k=(\delta_k-\delta_{k-1})g_k(1-)\ge0$
  and $\sum_{k=1}^K\alpha_k= \sum_{k=1}^K\beta_k=1$.
\end{proposition}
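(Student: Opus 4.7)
\medskip

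\noindent\textbf{Proof plan.} The plan is to start from the closed-form representation \eqref{eq:os:homo:in:W:cop}, which expresses the homogeneous W-transformed ordinal sum as
\begin{align*}
  C_{S,\W}(u_1,u_2)=\sum_{k=1}^{K}(\delta_k-\delta_{k-1})\,C_k\bigl(G_k(u_1),G_k(u_2)\bigr),
\end{align*}
and to evaluate the defining diagonal limits for $\lambda_{\text{l}}$ and $\lambda_{\text{u}}$ termwise. Before doing so, I would first verify the structural fact that each $G_k$ is a continuous distribution function on $[0,1]$ with $G_k(0+)=0$ and $G_k(1-)=1$. This is immediate from the hypothesis that $\W$ is piecewise surjective and increasing: each piece $\W_{|k}$ maps $(\delta_{k-1},\delta_k]$ bijectively onto $(0,1]$, so its inverse maps $(0,1]$ bijectively onto $(\delta_{k-1},\delta_k]$, and the affine rescaling in the definition of $G_k$ then produces a distribution function on $[0,1]$.

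For the lower tail, I would plug the representation into the definition $\lambda_{\text{l}}=\lim_{t\to 0+}C_{S,\W}(t,t)/t$ and factor
\begin{align*}
  \frac{C_k(G_k(t),G_k(t))}{t}=\frac{C_k(G_k(t),G_k(t))}{G_k(t)}\cdot\frac{G_k(t)}{t}.
\end{align*}
As $t\to 0+$, $G_k(t)\to 0$, so the first factor converges to $\lambda_{\text{l},k}$ by definition of $\lambda_{\text{l},k}$, while the second factor converges to $g_k(0+)$ by definition of the right-derivative at $0$. Since $K<\infty$, the sum and the limit can be exchanged without issue, and this yields $\lambda_{\text{l}}=\sum_{k=1}^K\alpha_k\lambda_{\text{l},k}$.

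For the upper tail, the key algebraic step is the marginal identity $C_{S,\W}(t,1)=t$, which combined with $C_k(\cdot,1)=\mathrm{id}$ forces $\sum_{k=1}^K(\delta_k-\delta_{k-1})G_k(t)=t$ for every $t\in[0,1]$. Using $\sum_k(\delta_k-\delta_{k-1})=1$ then gives the clean identity
\begin{align*}
  1-2t+C_{S,\W}(t,t)=\sum_{k=1}^K(\delta_k-\delta_{k-1})\bigl[1-2G_k(t)+C_k(G_k(t),G_k(t))\bigr].
\end{align*}
Dividing by $1-t$ and factoring through $1-G_k(t)$ in the same spirit as in the lower-tail case, with $(1-G_k(t))/(1-t)\to g_k(1-)$ and the remaining factor tending to $\lambda_{\text{u},k}$ as $G_k(t)\to 1$, delivers $\lambda_{\text{u}}=\sum_{k=1}^K\beta_k\lambda_{\text{u},k}$. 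The normalisations $\sum_k\alpha_k=\sum_k\beta_k=1$ follow by taking the appropriate one-sided derivatives of the marginal identity $\sum_k(\delta_k-\delta_{k-1})G_k(t)=t$ at $t=0+$ and $t=1-$.

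\textbf{Main obstacle.} Since the ordinal-sum construction has $K<\infty$, the interchange of limit and finite sum is automatic, and the proof is essentially a clean factorisation exercise. The only mild care point is checking that the one-sided derivatives $g_k(0+)$ and $g_k(1-)$ are genuinely well-defined as (possibly infinite) limits; this requires observing that $G_k$ is monotone on $[0,1]$, so these one-sided derivatives always exist in $[0,\infty]$, and that finiteness of $\lambda_{\text{l},k},\lambda_{\text{u},k}\in[0,1]$ combined with $\sum_k\alpha_k=\sum_k\beta_k=1$ precludes pathological $\infty\cdot 0$ indeterminacies in the convex combinations.
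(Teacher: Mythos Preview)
Your proposal is correct and follows essentially the same route as the paper: both establish the marginal identity $\sum_k(\delta_k-\delta_{k-1})G_k(t)=t$ from the copula margin condition, then factor the diagonal limits through $G_k(t)/t$ (lower tail) and $(1-G_k(t))/(1-t)$ (upper tail). Your upper-tail computation is in fact slightly cleaner, since you use the marginal identity to obtain the decomposition $1-2t+C_{S,\W}(t,t)=\sum_k(\delta_k-\delta_{k-1})[1-2G_k(t)+C_k(G_k(t),G_k(t))]$ directly, whereas the paper carries an extra term and shows it vanishes via the differentiated identity at $t=1-$.
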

A similar result can be derived for general piecewise surjective W-transforms, since, if the
$k$th piece of $\W$ is decreasing, the $k$th component of the ordinal sum contributes its upper
(lower) tail mass to the lower (upper) tail of the W-transformed ordinal sum, scaled by
$(\delta_k-\delta_{k-1})g_k(1-)$ ($(\delta_k-\delta_{k-1})g_k(0+)$).

\begin{remark}[Corrections of \cite{quessy2024general}]
  \cite{quessy2024general} has presented results on the coefficients of tail
  dependence under pcsm W-transforms with interchanging monotonicity between
  neighbouring pieces. However, the proof of his Lemma~2 (provided in the
  Supplementary Materials) appears to contain an oversight. Specifically, the
  second term in the first equality following the statement ``Then, an
  application of the general formula yields [...]'' is missing a denominator
  $x$. This omission affects subsequent derivations, leading to conclusions
  that may not hold in general. In particular, Proposition~3, Corollary~1, and
  Proposition~4 rely on Lemma~2, and, as demonstrated by our counterexample in
  what follows, these results do not appear to be valid under the given
  conditions.
\end{remark}

\begin{example}[Tail properties of W-transformed copulas]\label{eg:os:tails}
  \begin{enumerate}
  \item \emph{MTCM of flipped v-transformed copulas.}
    Consider the flipped v-transform $\mathcal{V}^*_\delta(u)=\begin{cases}
      u/\delta & u\in[0, \delta],\\
      (1-u)/(1-\delta) & u\in(\delta, 1].
    \end{cases}$
    Consider a Clayton copula $C$ with Kendall's tau $0.7$. A simulated sample of size 2000
    from $C$ and a sample of the same size from the corresponding flipped v-transformed copula
    $C_{\mathcal{V}^*_{0.2},\mathcal{V}^*_{0.8}}$ are shown in Figure~\ref{fig:flip:v:trans}.
    \begin{figure}[htbp]
      \centering
      \includegraphics[width=0.48\textwidth]{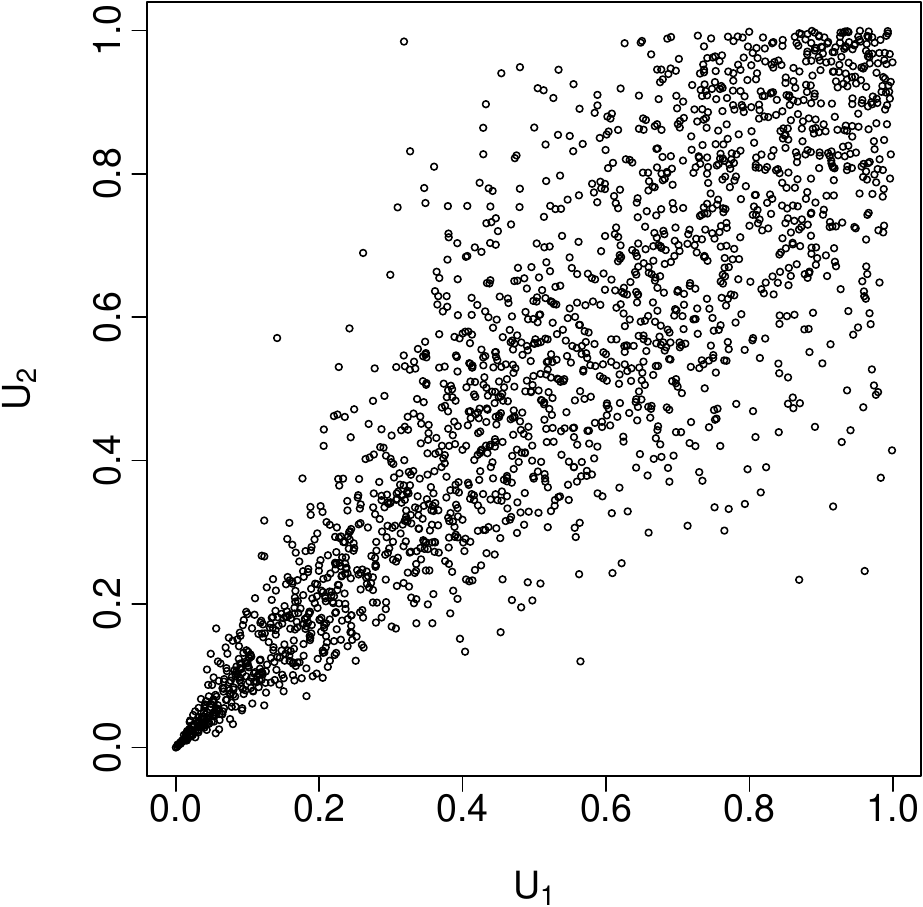}
      \hfill
      \includegraphics[width=0.48\textwidth]{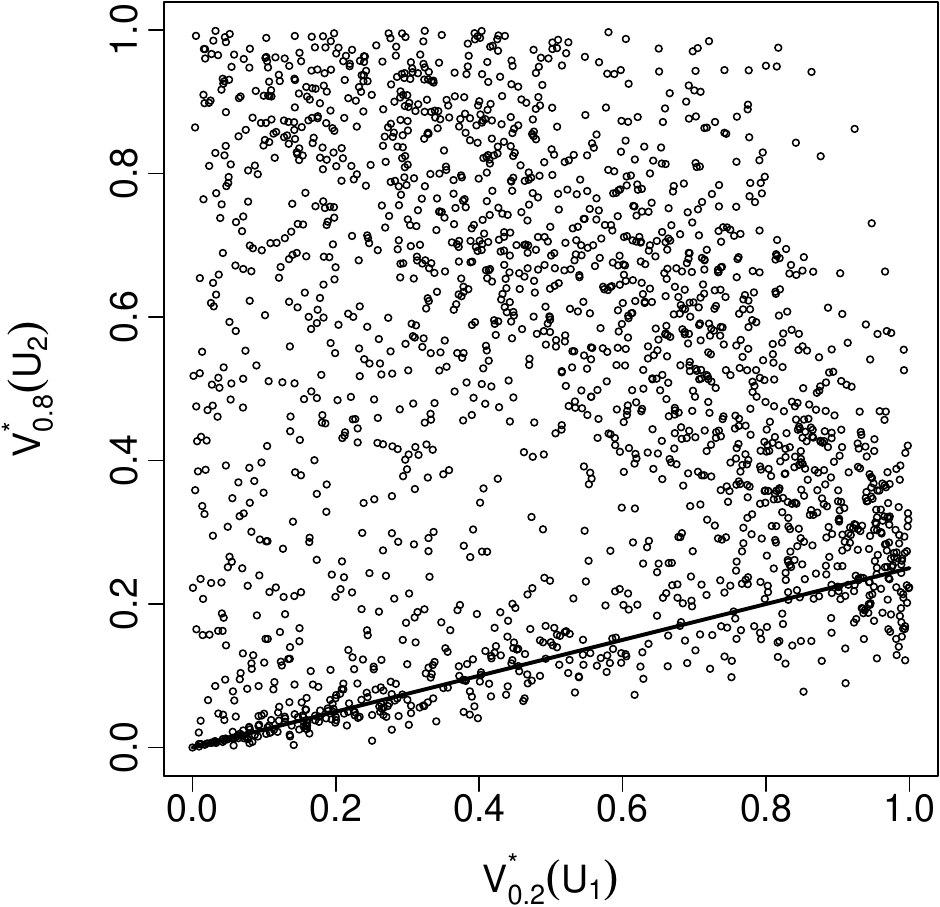}
      \caption{Samples of size 2000 from a Clayton copula $C$ with Kendall's tau $0.7$
        (left) and corresponding flipped v-transformed copula
        $C_{(\mathcal{V}^*_{0.2},\mathcal{V}^*_{0.8})}$ with a black straight line
        indicating the direction in which the MTCM is attained (right).}\label{fig:flip:v:trans}
    \end{figure}
  \item \emph{Counterexample to \cite{quessy2024general}.}
    Consider $\W$ as in Example~\ref{eg:v:trans:use}~\ref{eg:v:trans:use:chara:v}
    which is a v-transform and an ordinal sum based on two survival Gumbel copulas, both
    with Kendall's tau $0.7$. A simulated sample of size 2000 from this ordinal
    sum and a sample of the same size from the corresponding W-transformed ordinal
    sum $C_{S, \mathcal{V}}$ (which exhibits both lower and upper tail dependence)
    are shown in Figure~\ref{fig:counter:eg}.
    \begin{figure}[htbp]
      \centering
      \includegraphics[width=0.48\textwidth]{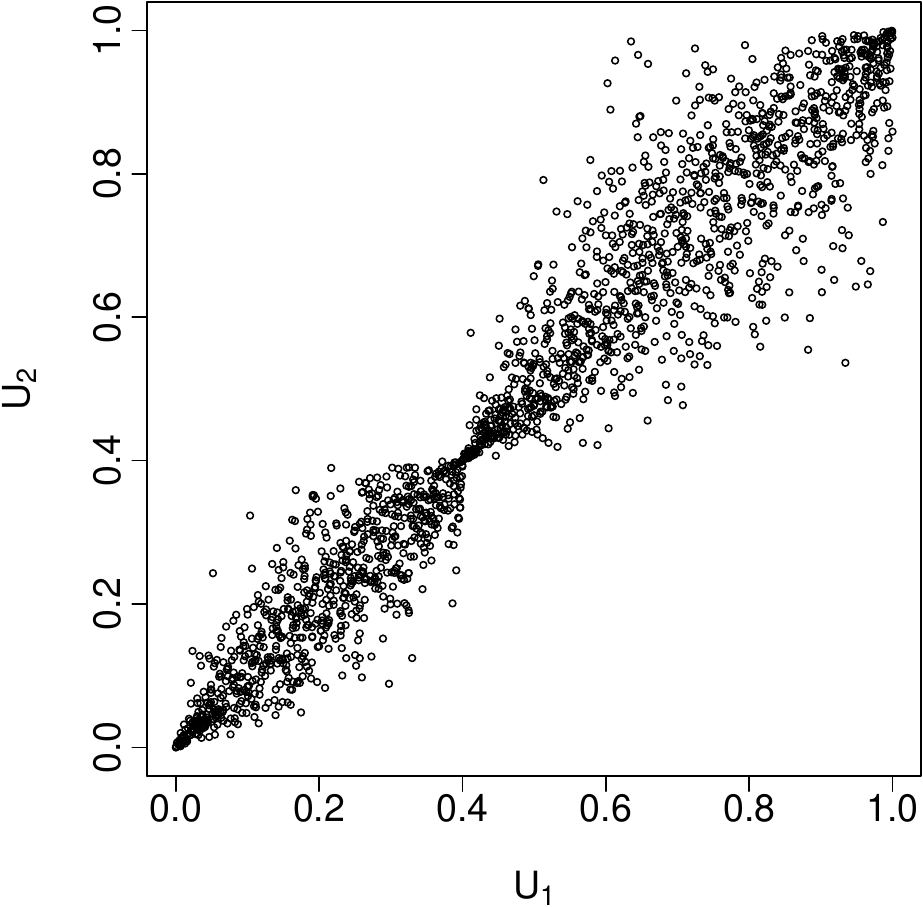}
      \hfill
      \includegraphics[width=0.48\textwidth]{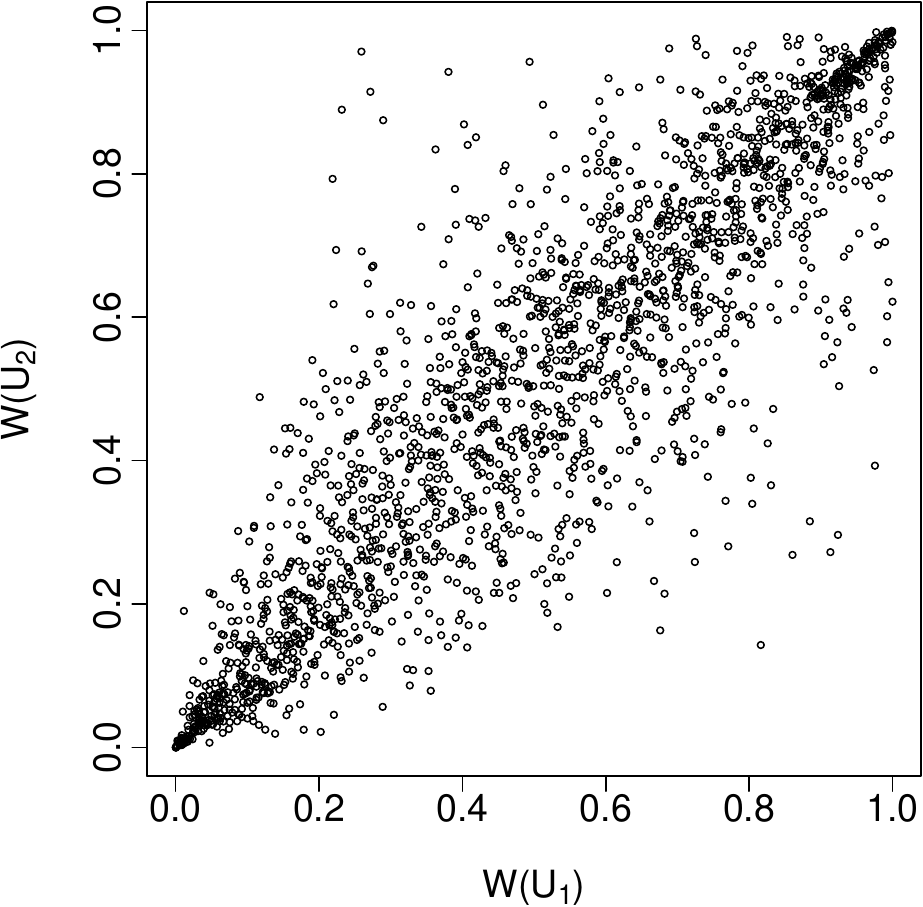}
      \caption{Samples of size 2000 from an ordinal sum $C_{S}$ (left) and
        corresponding W-transformed ordinal sum $C_{S, \mathcal{W}}$ (right).}
      \label{fig:counter:eg}
    \end{figure}
  \item \emph{Modification of Gaussian copula tails.}\label{eg:os:tails:modif:gauss}
    Consider $d=2$. Let $\bm{r}=\bm{1}$ and $F_X(x)=\frac{1}{1+(1/x-1)^a}$, $x\in[0, 1]$,
    $a\in(0,1)$. Then, $f_X(0+)=f_X(1-)=\infty$ and therefore, by Lemma~\ref{lem:param:derivative},
    the induced W-transform $\W_{\bm{t},\bm{r},F_X}$ in \eqref{eq:W:param:fam} satisfies
    $\W_{\bm{t},\bm{r},F_X}(0+)=\W_{\bm{t},\bm{r},F_X}(1-)=1$.
    By Proposition~\ref{prop:os:tail:dependence}, the homogeneous W-transformed ordinal sum
    $C_{S,\bm{\W}_{\bm{t},\bm{r},F_X}}$ with component copulas $C_1,\dots, C_K$ has
    $\lambda_{\text{l}}=\lambda_{\text{l},1}$ and $\lambda_\text{u}=\lambda_{\text{u},d}$.

    Furthermore, for $a=0.5$, $\bm{t}=(0, 0.1, 0.9, 1)$, $\bm{r}=(1,1,1)$,
    let $C_S$ have components $C_1$ (a Clayton copula with $\lambda_{\text{l},1} = 0.5$),
    $C_2$ (a Gaussian copula with correlation parameter $\rho=0.7$ and
    $\lambda_{\text{l},2}=0$), and $C_3$ (a Gumbel copula with $\lambda_{\text{u},3}=0.8$). Then
    the W-transformed ordinal sum
    $C_{S, \W_{\bm{t}, \bm{r} ,F_X}}$ has tail dependence
    coefficients $\lambda_\text{l}=0.5$ and $\lambda_\text{u}=0.8$. A plot of the W-transform
    $\W_{\bm{t},\bm{r},F_X}$ (with change points being
    $\bm{\delta}=F_X(\bm{t})=(0, 0.25, 0.75, 1)$), a sample from the ordinal sum
    $C_S$, and a sample from the corresponding W-transformed copula
    $C_{S, \W_{\bm{t},\bm{r} ,F_X}}$ are shown in
    Figure~\ref{fig:os:example}.
    \begin{figure}[htbp]
      \centering
      \includegraphics[width=0.32\textwidth]{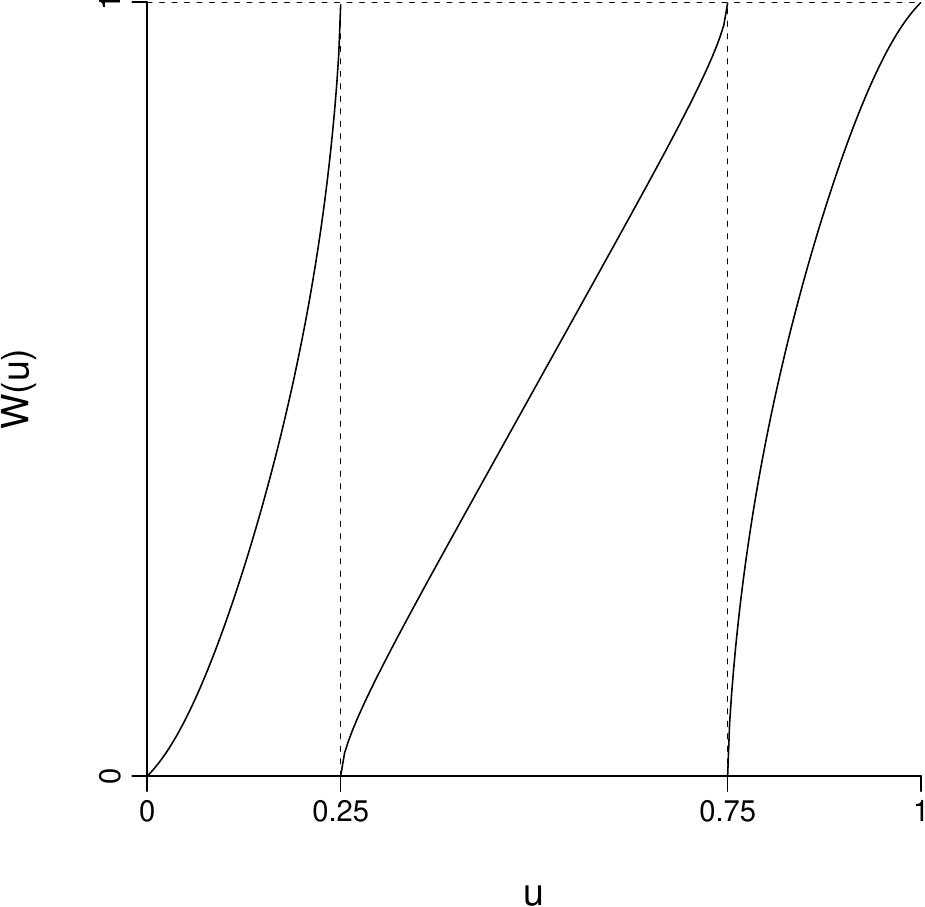}
      \hfill
      \includegraphics[width=0.32\textwidth]{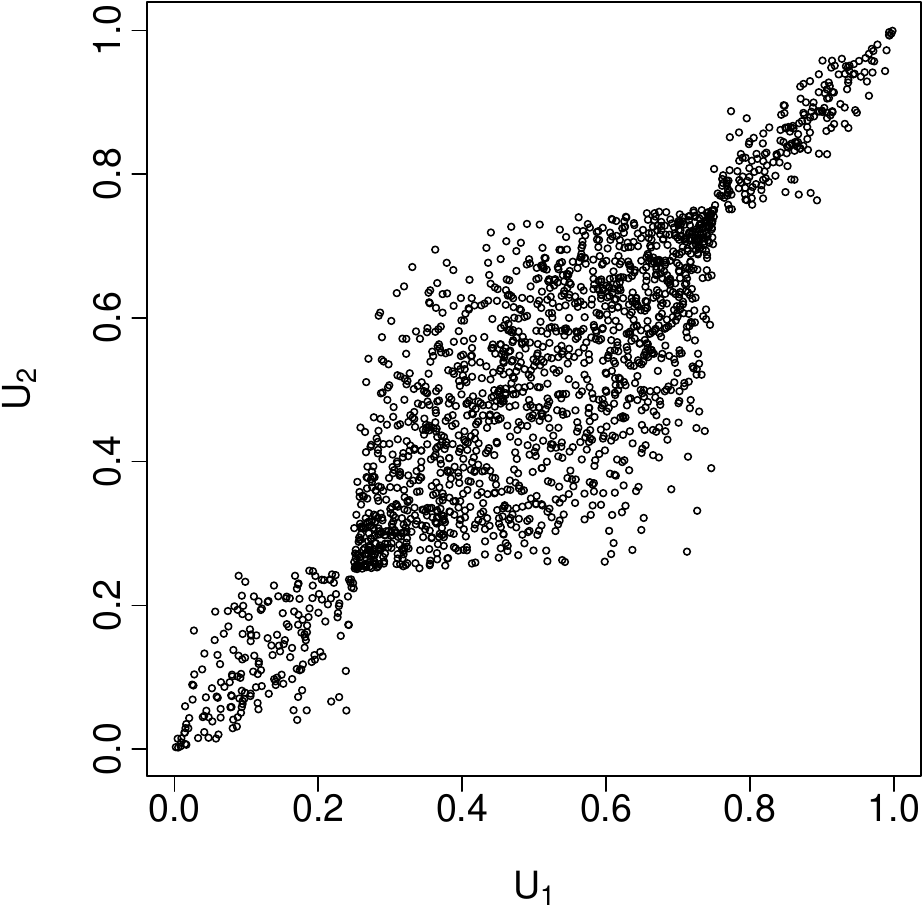}
      \hfill
      \includegraphics[width=0.32\textwidth]{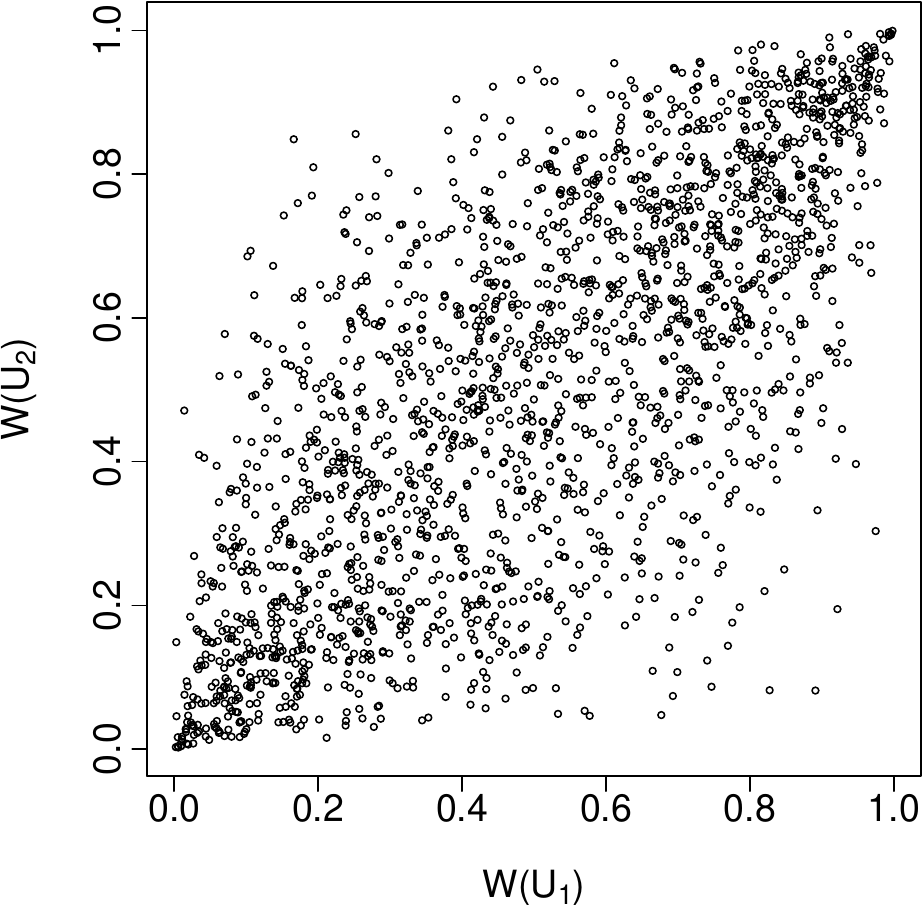}
      \caption{Piecewise increasing W-transform
        $\W_{\bm{t},\bm{r},F_X}$ from~\eqref{eq:W:param:fam}
        (left), a sample of size 2000 from the ordinal sum copula
        $C_S$ under consideration (centre) and a sample of the same
        size from the corresponding W-transformed
        copula $C_{S, \W_{\bm{t},\bm{r} ,F_X}}$ (right).}
      \label{fig:os:example}
    \end{figure}
  \end{enumerate}
\end{example}
We close this subsection with a remark on the interpretation of the tail dependence reduction
discussed above. First, for asymmetric W-transformed copulas, which is the primary motivation
for the W-transformed framework, the traditional tail dependence coefficients $\lambda_{\text{l}}$
and $\lambda_{\text{u}}$ are of limited usefulness as they are limits based on copula values along
the copula diagonal, not directions of maximal tail concordance. The MTCM introduced in this
subsection provides a more appropriate tool for this setting, and Proposition~\ref{prop:flip:v:trans:tail}
shows explicitly how the MTCM of a flipped v-transformed copula relates to that of the base copula.
Second, even in the homogeneous case, the reduction does not limit the utility of the pure
W-transform. One may simply start from a base copula $C$ that is already heavily tail-dependent
and apply a W-transform to dilute this strong dependence into a more realistic, asymmetric pattern.
We view this as an interesting application for copulas with high concordance that have so far seen
limited practical use. Third, for settings requiring explicit control over tail dependence alongside
asymmetry, the ordinal sum construction provides the appropriate tool, where the component copulas
supply the tail dependence while the W-transform mixes the blocks to introduce asymmetry. The two
constructions are thus complementary: the pure W-transform dilutes and warps an existing strongly
dependent base, while the W-transformed ordinal sum assembles a new dependence structure with
user-specified tail properties. A similar perspective applies to global concordance measures, as discussed
in the next subsection.

\subsection{Concordance measures}\label{subsec:concor}
The concordance measures Spearman's rho $\rho_{\text{S}}$ and Kendall's tau $\tau$ for a
bivariate copula $C$ are
$\rho_{\text{S}}(C)=12\iint_{[0, 1]^2}C(u_1,u_2)\,\rd u_1\rd u_2-3$ and
$\tau(C)=4\iint_{[0, 1]^2} C(u_1,u_2)\,\rd C(u_1, u_2)-1$, respectively; see, for example,
\cite[Chapter~10]{jaworskidurantehaerdlerychlik2010}. For W-transformed
copulas $C_{\bm{\W}}$, these measures can be written as
\begin{align*}
  \rho_{\text{S}}(C_{\bm{\W}})&=12\sum_{k_2=1}^{K_2}\sum_{k_1=1}^{K_1}\int_0^1\int_0^1
                                \Delta_{B_{\bm{\delta}_{\bm{k}},\bm{\W^{-1}(\bm{u})},\bm{I}}}C
                                \,\rd u_1\rd u_2-3,\\
  \tau(C_{\bm{\W}})&=4\sum_{k_2=1}^{K_2}\sum_{k_1=1}^{K_1}\iint_{[0,1]^2}
                     \Delta_{B_{\bm{\delta}_{\bm{k}},\bm{\W^{-1}(\bm{u})},\bm{I}}}C\,\rd C(u_1, u_2)-1,
\end{align*}
where
\begin{align*}
  B_{\bm{\delta}_{\bm{k}},\bm{\W^{-1}},\bm{I}}
  =\prod_{j=1}^2\mathcal{I}_j^{(k_j)},\quad \mathcal{I}_j^{(k_j)}=\begin{cases}
    (\delta_{j, k_j-1}, \W^{-1}_{j|k_j}(u_j)], & k_j\in I_j,\\
    (\W^{-1}_{j|k_j}(u_j),\delta_{j, k_j}], & k_j\notin I_j.
  \end{cases}
\end{align*}
While there is little hope of getting a closed-form formula for these measures
even for piecewise linear v-transforms, these measures admit an interpretable
decomposition. Specifically, Spearman's rho can be viewed as a mixture of local
Spearman's rho values over all rectilinear regions of the form
$[\delta_{1, k_1}, \delta_{1, k_1+1}]\times [\delta_{2, k_2}, \delta_{2,
  k_2+1}],$ $k_1\in\{1,\dots,K_1\}$, $k_2\in\{1, \dots, K_2\}$.  Consider a
Cauchy copula with correlation parameter $\rho=0$, so an uncorrelated Student's $t$ copula with
$\nu=1$ degrees of freedom. Applying the W-transform
$\W(u)=|2u-1|$, $u\in[0, 1]$, homogeneously to both margins yields a
transformed copula $C_{\bm{\W}}$ with $\rho_{\text{S}} \approx 0.47$;
for samples, see Figure~\ref{fig:cauchy:cop}.
\begin{figure}[htbp]
  \centering
  \includegraphics[width=0.48\textwidth]{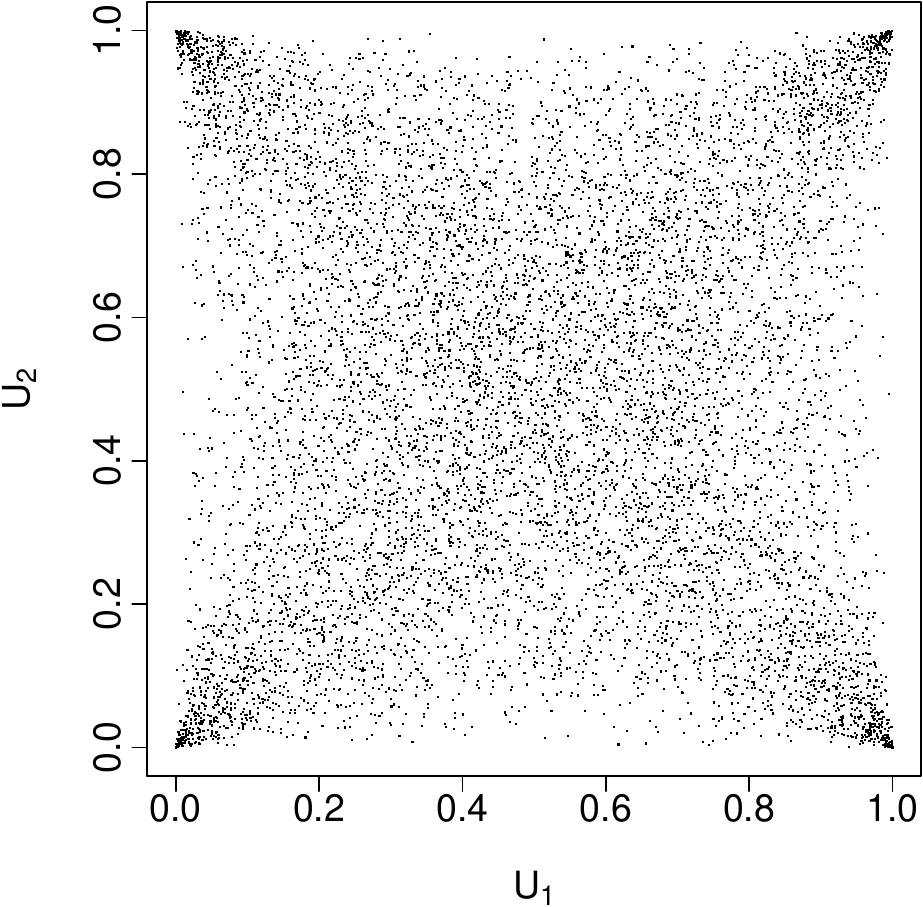}
  \hfill
  \includegraphics[width=0.48\textwidth]{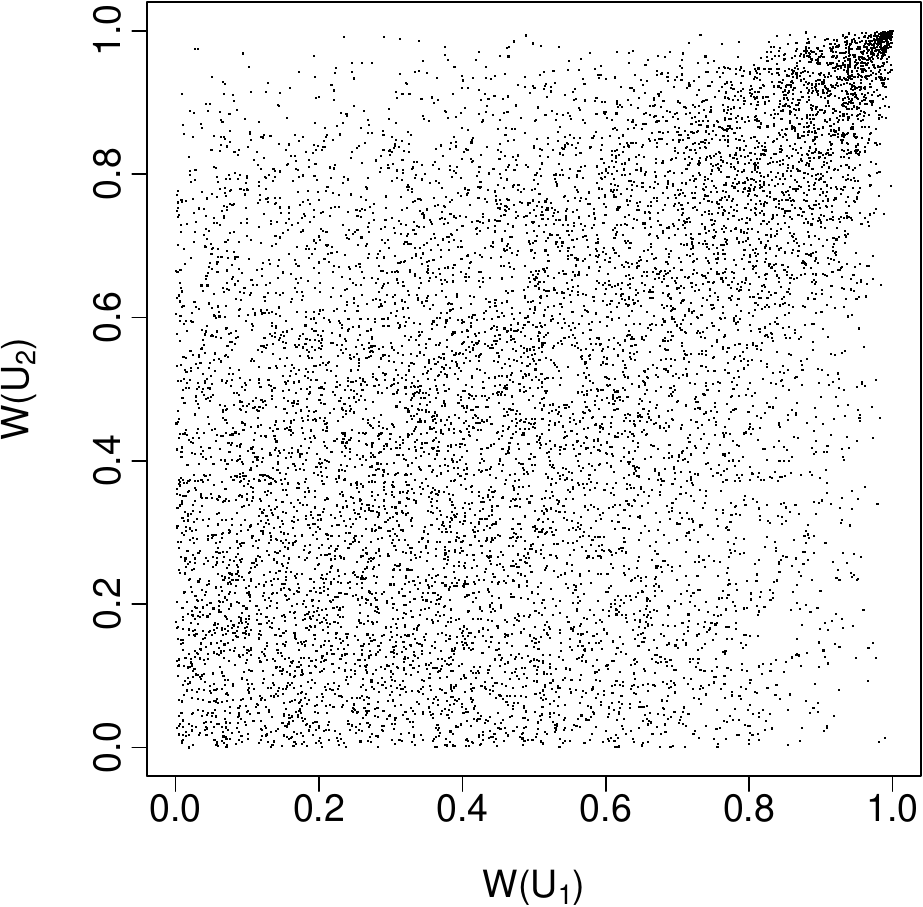}
  \caption{A sample from the radially symmetric Cauchy copula with Spearman's
    rho $\rho_{\text{S}}=0$ (left), and a sample of the same size generated from the
    W-transformed Cauchy copula with $\rho_{\text{S}} \approx 0.47$ (right).}\label{fig:cauchy:cop}
\end{figure}
This increase in $\rho_{\text{S}}$ appears because within each grid cell
($[0,0.5]^2$, $[0,0.5]\times[0.5, 1]$, etc.), the local Spearman's rho is
non-zero, and the W-transform cumulatively integrates these local dependencies
into (the global) Spearman's rho. However, most copula families do not have
significant non-zero local correlation across all grid cells. Thus,
W-transformed copulas typically exhibit lower concordance than their underlying base
copulas, except in special cases where all local dependencies are non-trivial.

\subsection{Symmetries}\label{subsec:symmetry}
The impact of W-transforms on distributional symmetries of $\bm{U}\sim C$
depends on whether the same or different transforms are applied across the $d$
margins. This section studies symmetries of homogeneous W-transformed copulas
$C_{\W}$, characterising which symmetries of $C$ are preserved by $C_{\W}$. The
complementary scenario, where different W-transforms $\W_1,\dots,\W_d$ induce
asymmetric dependence, is explored in Section~\ref{subsec:asym:tail} later.

A $d$-dimensional copula is \emph{exchangeable} if, for any permutation $\sigma$
of the indices $\{1,\dots,d\}$, one has
$C(u_{\sigma(1)}, \dots, u_{\sigma(d)})=C(u_1, \dots, u_d)$ for all
$u_1, \dots, u_d\in[0, 1]$; examples of exchangeable copulas are Archimedean and
homogeneous elliptical copulas. Our first result establishes that W-transforms,
when applied homogeneously to a copula-distributed random vector, preserve exchangeability.
\begin{proposition}[Exchangeability]\label{prop:exchan}
  Let $C$ be an exchangeable copula. Then the homogeneous $C_{\W}$ is also exchangeable.
\end{proposition}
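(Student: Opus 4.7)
The plan is to use the stochastic representation of $C_{\W}$ from~\eqref{eq:W:transformed:copula} and propagate the exchangeability of $C$ directly through the componentwise application of the common transform $\W$. This is the most transparent route and avoids any case analysis on the monotonicity or change points of $\W$.

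First I would let $\bm{U}=(U_1,\dots,U_d)\sim C$, so that by~\eqref{eq:W:transformed:copula} the random vector $(\W(U_1),\dots,\W(U_d))$ has distribution function $C_{\W}$ (its margins are $\U(0,1)$ by Proposition~\ref{prop:w:unif:preserv}). For any permutation $\sigma$ of $\{1,\dots,d\}$, exchangeability of $C$ gives
\begin{align*}
  (U_{\sigma(1)},\dots,U_{\sigma(d)})\deq (U_1,\dots,U_d).
\end{align*}
Applying the (deterministic, identical) map $\W$ componentwise preserves equality in distribution, so
\begin{align*}
  (\W(U_{\sigma(1)}),\dots,\W(U_{\sigma(d)}))\deq (\W(U_1),\dots,\W(U_d)).
\end{align*}
Since the left-hand side is the $\sigma$-permutation of the coordinates of the right-hand side, the joint distribution function of $(\W(U_1),\dots,\W(U_d))$ is invariant under $\sigma$, that is $C_{\W}(u_{\sigma(1)},\dots,u_{\sigma(d)})=C_{\W}(u_1,\dots,u_d)$ for all $\bm{u}\in[0,1]^d$. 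As $\sigma$ was arbitrary, $C_{\W}$ is exchangeable.

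There is essentially no obstacle here: the argument relies only on (i) that the \emph{same} W-transform is applied to every margin, which is precisely the homogeneity assumption, and (ii) that applying a fixed measurable function componentwise respects equality in distribution of random vectors. As a sanity check, one could alternatively verify the claim from the analytical formula~\eqref{eq:W:trafo:cop} of Theorem~\ref{theorem:W:trafo:cop}: under homogeneity the change points $\delta_{j,k}$, the monotonicity index sets $I_j$, and the piecewise inverses $\W^{-1}_{j|k}$ are independent of $j$, so each hyperrectangle $B_{\bm{\delta}_{\bm{k}},\bm{\W^{-1}(\bm{u})},\bm{I}}$ permutes its factors along with the arguments $u_1,\dots,u_d$, and exchangeability of $C$ makes each $C$-volume invariant under joint permutation of the hypercube's coordinates and of the multi-index $\bm{k}$; reindexing the finite/countable sums then yields the same expression as before the permutation, confirming exchangeability of $C_{\W}$.
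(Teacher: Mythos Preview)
Your proof is correct. It takes a different route from the paper's own proof: the paper works directly with the analytical expression~\eqref{eq:W:trafo:cop} from Theorem~\ref{theorem:W:trafo:cop}, writes out $C_{\W}(u_{\sigma(1)},\dots,u_{\sigma(d)})$ as a sum of $C$-volumes, and then re-indexes $k_{\sigma(j)}\mapsto k_j$ (using that the change points, monotonicity indices, and piecewise inverses are dimension-independent under homogeneity) to recover the original sum. Your argument is more elementary and conceptual: it uses only the stochastic representation~\eqref{eq:W:transformed:copula} and the fact that applying the same measurable map componentwise preserves equality in distribution, so no case analysis or explicit formula is needed. The paper's approach has the advantage of making the invariance visible at the level of the $C$-volume decomposition, which ties in with the rest of Section~\ref{subsec:distr}; your approach has the advantage of being shorter and immediately generalisable (it does not rely on finitely or countably many change points or on the specific structure of $\W$). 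Interestingly, you even sketch the paper's argument as your ``sanity check'' at the end.
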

The converse of Proposition~\ref{prop:exchan} is not true in general, that is given exchangeable
homogeneous $C_{\W}$, $C$ may not be exchangeable. To see this, consider the maltese copula
\begin{align}
  C(u_1, u_2) = \begin{cases}
    \max\{0, 4u_1u_2-3u_2\}, & u_2\leq \frac{1}{4},\\
    \min\{\frac{4}{3}u_1u_2-\frac{1}{3}u_1,u_2-\frac{1}{4}\}+\max\{0,u_1-\frac{3}{4}\},
                             & u_2>\frac{1}{4},\\
  \end{cases}\label{eq:non:ex:copula}
\end{align}
which puts mass uniformly on the rectangles $[0, 3/4]\times[1/4,1]$ and $[3/4,1]\times[0,1/4]$.
Clearly, $C$ is not exchangeable as $C(1/3, 1/2)=1/9\neq1/12=C(1/2, 1/3)$, but for the W-transform
$\W(u)=\begin{cases}
  -4u+1, & u\le\frac{1}{4},\\
  \frac{4}{3}x-\frac{1}{3}, & u>\frac{1}{4},
\end{cases}$
the homogeneous W-transformed copula is $C_{\W}(u_1, u_2)=u_1u_2$ which is the independence copula and
thus exchangeable.
Let us now turn to radial symmetry. A bivariate copula $C$ is \emph{radially
  symmetric} if $C(u_1,u_2)=-1+u_1+u_2+C(1-u_1,1-u_2)$, $u_1,u_2\in[0, 1]$, or equivalently, if
its survival copula $\hat{C}$ equals $C$. Radial symmetry implies that for any
$(u_1,u_2)\in [0,1]^2$, the $C$-volume of the rectangle $(0, u_1]\times(0,u_2]$ is the
same as that of its radially opposite counterpart $(1-u_1, 1]\times (1-u_2,
1]$. However, radial symmetry is not preserved under arbitrary W-transforms. To
see this, take a Cauchy copula with correlation parameter $\rho=0$ and the
W-transform $\W(u)=|2u-1|$, $u\in[0, 1]$. Then the homogeneous transformed copula
$C_{\W}$ is not radially symmetric anymore in general; see the right-hand
side of Figure~\ref{fig:cauchy:cop}.

The symmetric linear W-transform $\W(u)=|2u-1|$, $u\in[0,1]$ redistributes tail mass
asymmetrically. It collapses both tail masses into the upper-right tail, which violates radial symmetry.
On the other hand, radially symmetric W-transformed copulas may arise from non-radially
symmetric copulas. To see this, consider the copula \eqref{eq:non:ex:copula}, which
is not radially symmetric (since $\Delta_{(0,3/4]\times (0,1/4]}C=0$, but
$\Delta_{(1/4, 1]\times(3/4, 1]}C=1/6$), however, its W-transformed copula
$C_{\W}$ is the independence copula, which is clearly radially symmetric.

\section{Applications}\label{sec:application}
In this section, we demonstrate the practical use of W-transforms by applying them to specific
copulas $C$. We consider three key scenarios in the next three sections:
\begin{enumerate}
\item Removing tail dependence in one tail of a copula $C$, while retaining the tail dependence in the other tail of $C$.
\item Creating an asymmetric $C_{\bm{\W}}$ by applying different W-transforms $\W_1, \W_2$
  to the margins of $(U_1,U_2)\sim C$ of a symmetric copula $C$.
\item Constructing new copulas $C_{\bm{\W}}$ using ordinal sums which are connected to
  mixtures of copulas.
\end{enumerate}
In Section~\ref{subsec:illu:real:data} we then consider an application of
W-transformed copulas to a real-life dataset.

\subsection{Removal of tail dependence in one tail}\label{subsec:remove:tail}
In applications, it is often desirable to model dependence structures that
exhibit asymmetric tail behaviour. Empirical studies have highlighted this need,
for example \cite{garcia2011dependence} found strong extremal tail dependence
across countries %
in equity and bond markets; \cite{chollete2011international} presented evidence
in data of extreme asymmetry of tail dependence where one tail has significant
dependence; %
\cite{hautsch2015financial} and
\cite{tobias2016covar} measured risk spillovers via value-at-risk, focusing
solely on downside dependence. %
In this section, we demonstrate how W-transforms can be used to selectively
remove tail dependence of one tail of a copula $C$, while preserving the other,
in order to model asymmetric dependencies.

Consider the W-transform
\begin{align}\label{eq:W:id:piece}
  \W(u)=\begin{cases}
    \dfrac{9}{10} - \dfrac{3\sqrt{5u}}{5}, & u\in[0, 0.45],\\[6pt]
    \dfrac{3\sqrt{20u-9}}{10}, & u\in(0.45, 0.9],\\[6pt]
    u, & u\in(0.9, 1].
  \end{cases}
\end{align}
Applied to both margins of a bivariate $t$-copula $C_{\nu, \rho}$ with $\nu=2$
degrees of freedom and correlation parameter $\rho=0.9$, this W-transform
retains the upper tail dependence while removing the lower one. Plots of the
W-transform~\eqref{eq:W:id:piece}, a simulated sample of size 2000 from
$C_{\nu, \rho}$ and a sample of the same size from the corresponding
homogeneous W-transformed copula $C_{\W}$ are shown in Figure~\ref{fig:id:W:t:cop}.
\begin{figure}[htbp]
  \includegraphics[width=0.32\textwidth]{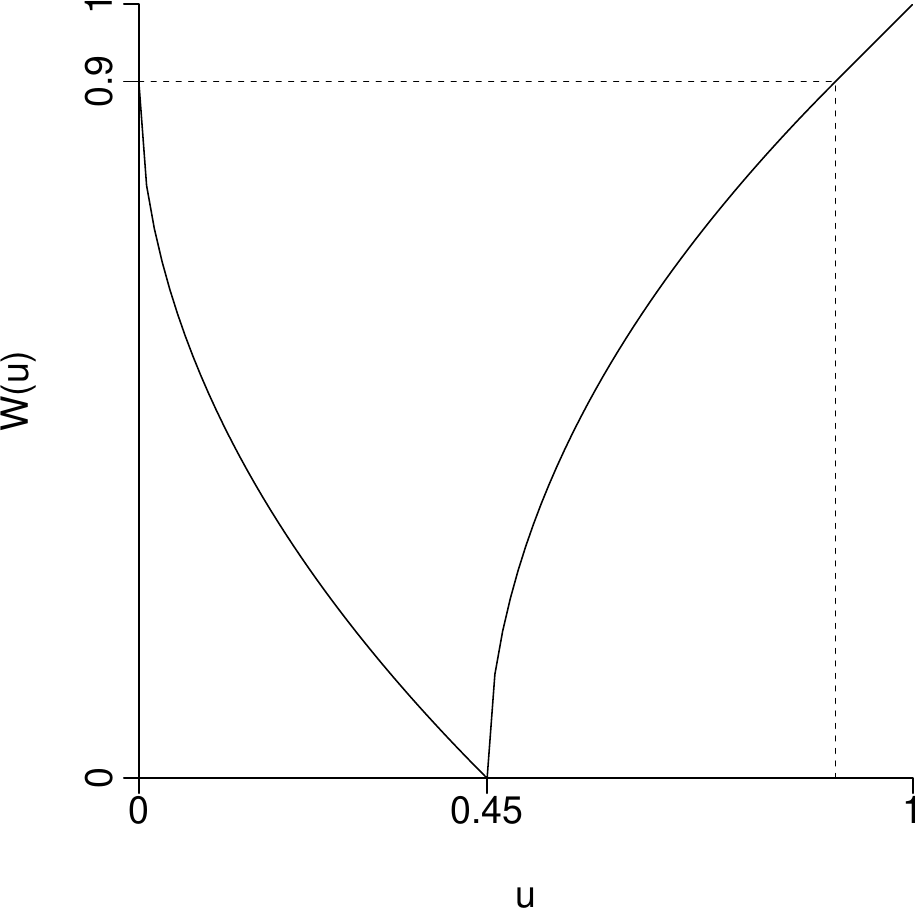}
  \hfill
  \includegraphics[width=0.32\textwidth]{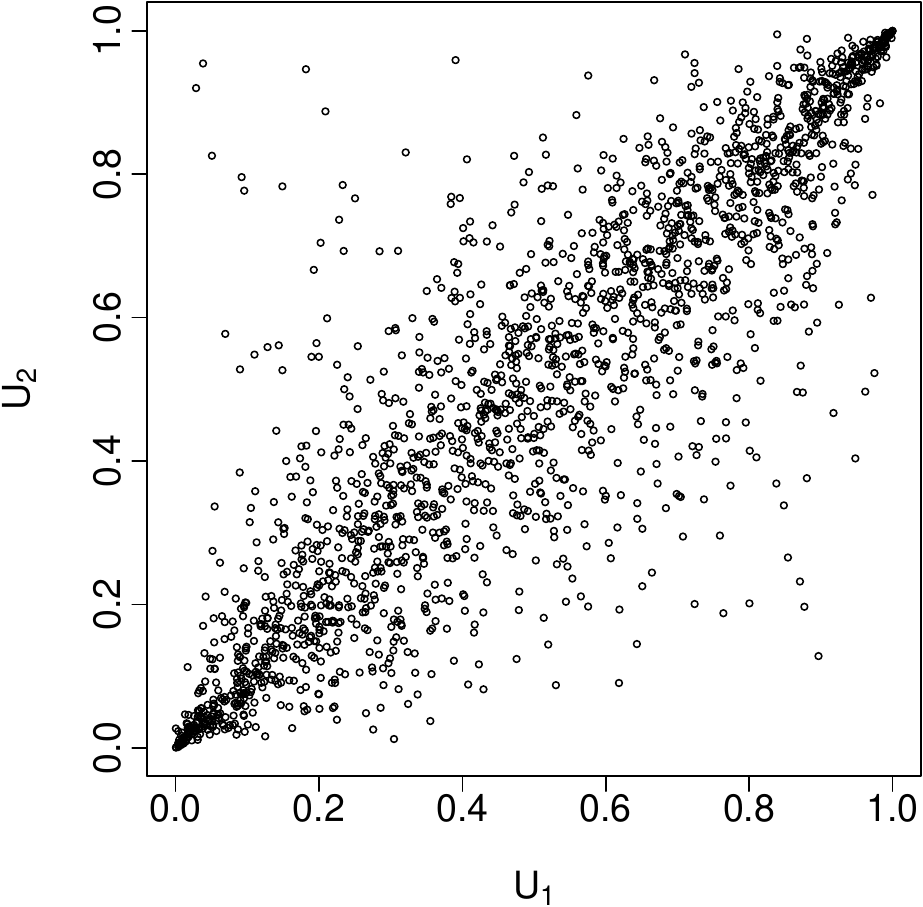}
  \hfill
  \includegraphics[width=0.32\textwidth]{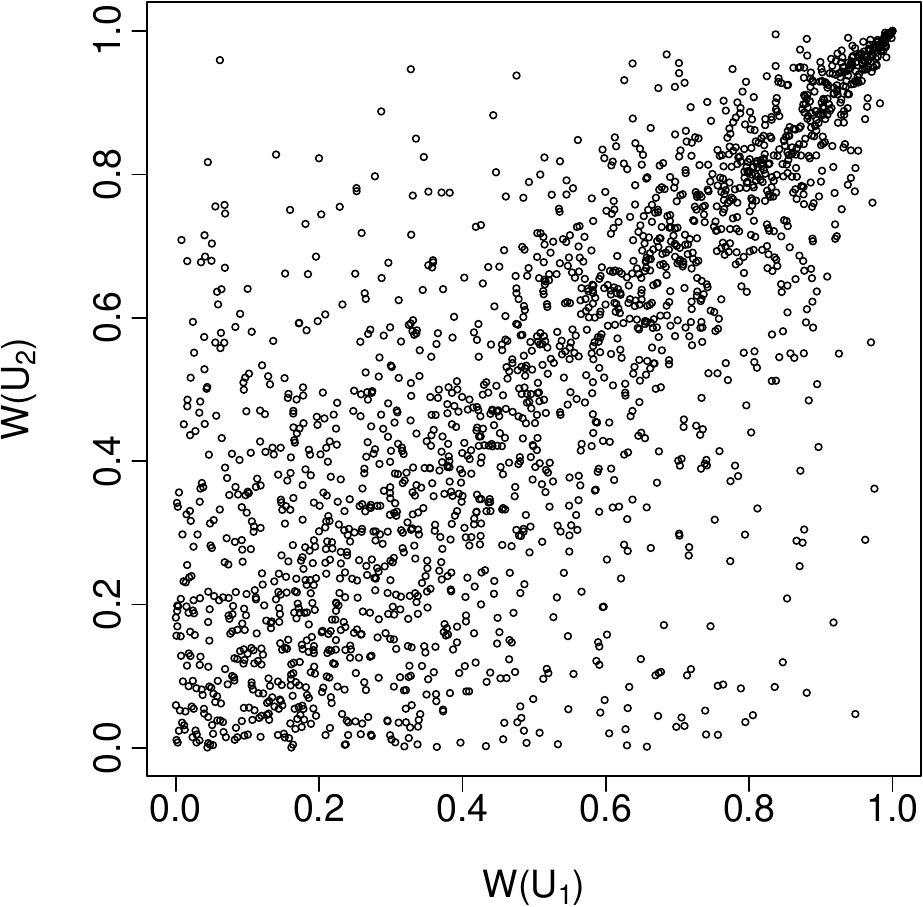}
  \caption{W-transform~\eqref{eq:W:id:piece} (left), a sample of size 2000
    from the $t$-copula $C_{\nu=2,\rho=0.9}$ (centre) and a sample of the same
    size from the corresponding homogeneous W-transformed copula $C_{\W}$ (right).}
  \label{fig:id:W:t:cop}
\end{figure}
Clearly, the identity piece $\W(u) = u$, $u\in(0.9, 1]$, preserves the
upper tail clustering.  However, since $|\W^\prime(u)|\ge 1$,
$u\in[0, 0.45)$, $\W$ ``drags'' the mass clustered in the lower tail
outward and re-distributes it over $[0, 0.9]^2$. On the other hand, samples near
0.45 (lacking co-movement) are mapped to the lower tail, eliminating lower tail
dependence. Note that the flipped W-transform $1-\W(u)$ would retain
lower tail dependence while removing the upper one. This highlights the
flexibility of W-transforms for modifying tails. Clearly, an application to
higher dimensions can easily be constructed.

\subsection{Creating asymmetry}\label{subsec:asym:tail}
Many copula families, such as the aforementioned homogeneous elliptical copulas
and Archimedean copulas, are exchangeable. However, in practice, rarely do we
encounter perfectly symmetric data, which calls for copulas families that can
capture non-exchangeability. For example, \cite{durante2016asymmetric}
considered such copulas and applied them to experimental designs. Or
\cite{mcneilsmith2012} and \cite{kollo2017tail} considered skewed
$t$-copulas. General methods for constructing non-exchangeable copulas include
Khoudraji's device, see \cite{khoudraji1995} and \cite{freesvaldez1998}, its
extensions via $P$-increasing functions in \cite{durante2009construction} and
generalisations of Archimedean copulas in \cite{liebscher2008},
\cite{mcneil2008} and \cite{hofert2010b}.  In this section, we propose a simple
method to break exchangeability of any copula $C$ by applying distinct
W-transforms to each margin of $\bm{U}\sim C$, thus generating intrinsically
asymmetric $C_{\bm{\W}}$.

Consider the W-transform parametrised by $\theta\in(0,0.5)$
\begin{align}\label{eq:W:asym}
  \W_\theta(u) = \begin{cases}
    \dfrac{u}{2\theta}, & u\in[0, \theta],\\[6pt]
    \dfrac{u-\theta}{1-2\theta}, & u\in(\theta, 1-\theta],\\[6pt]
    \dfrac{u-1+2\theta}{2\theta}, & u\in(1-\theta, 1],
  \end{cases}
\end{align}
which is piecewise linear. If one applies such W-transforms for different
parameters $\theta$ to $\bm{U}\sim C$, one naturally expects an asymmetrically
distributed $(\W_{\alpha_1}(U_1),\W_{\alpha_2}(U_2))\sim C_{\bm{\W}}$. As
we shall show, a portion of the tail dependence of $C$ is retained even though
mass is not concentrated on the diagonal anymore. Consider the same $t$-copula
$C_{\nu=2, \rho=0.9}$ as in Section~\ref{subsec:remove:tail} and $\alpha_1=0.3$,
$\alpha_2=0.45$.  Then the W-transformed copula
$C_{(\W_{0.3}, \W_{0.45})}$ is no longer exchangeable, exhibiting
an asymmetric mass distribution about the diagonal in the tails. Plots of
$\W_{0.45}$ and simulated samples from the $t$-copula and its
W-transformed copula are shown in Figure~\ref{fig:asym:W:t:cop}.
\begin{figure}[htbp]
  \includegraphics[width=0.32\textwidth]{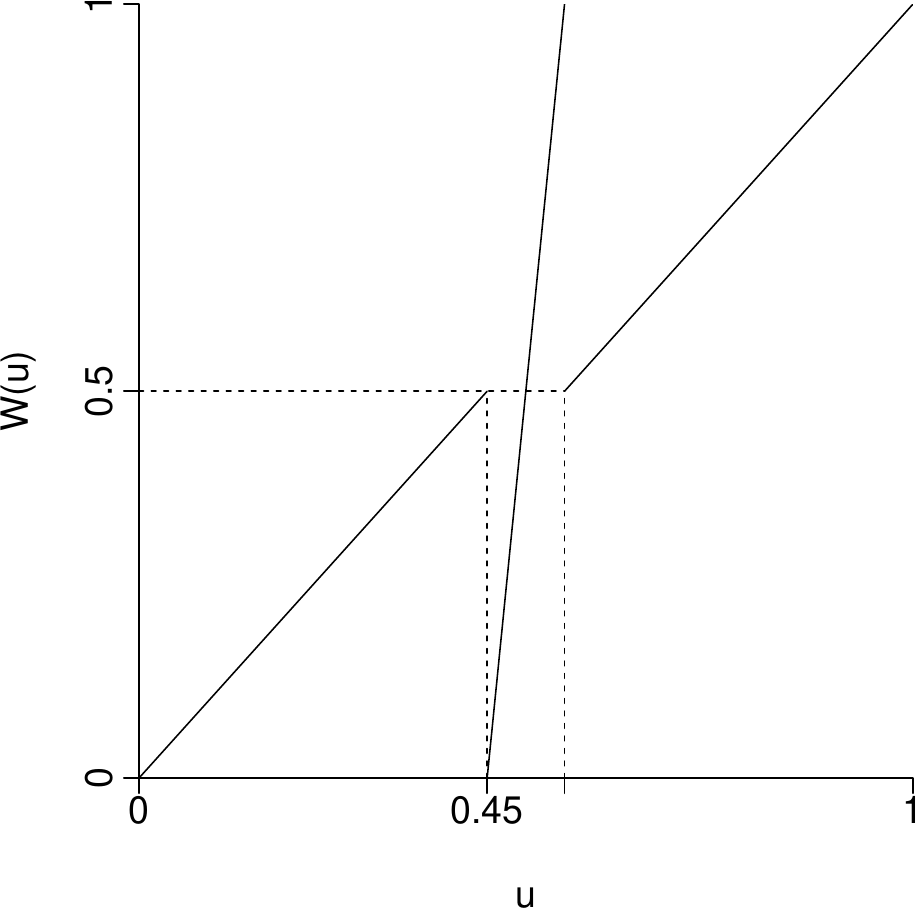}
  \hfill
  \includegraphics[width=0.32\textwidth]{fig_t_copula.pdf}
  \hfill
  \includegraphics[width=0.32\textwidth]{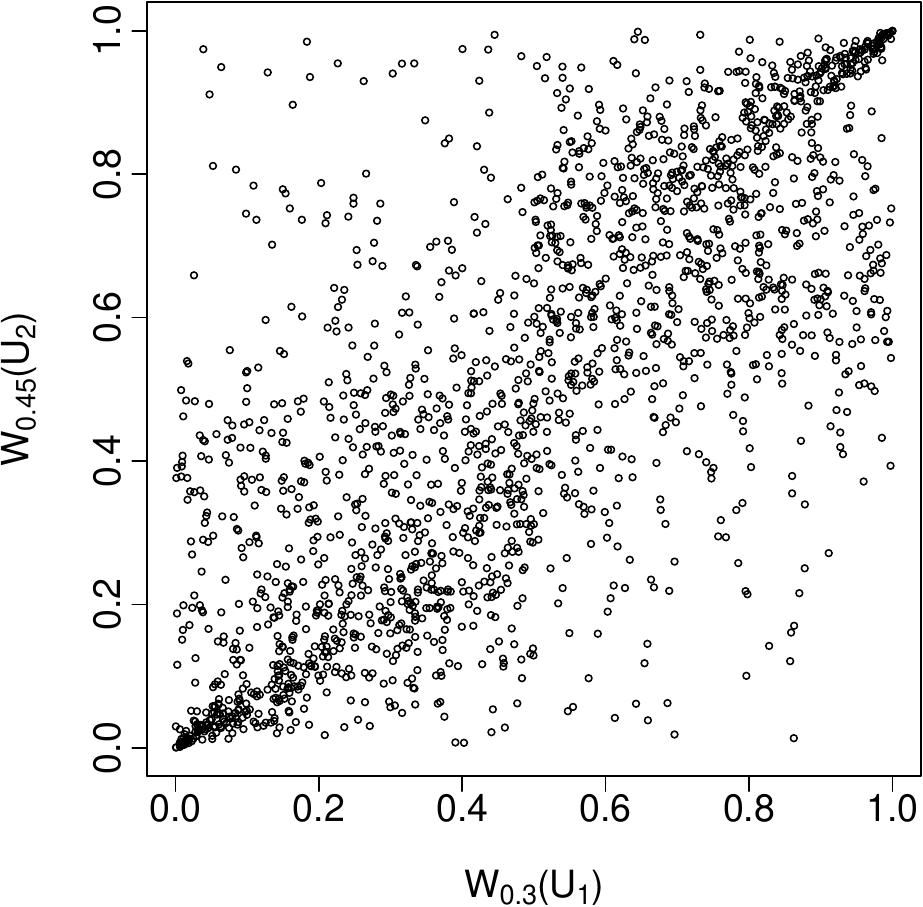}
  \caption{W-transform $\W_{0.45}$~\eqref{eq:W:asym} (left),
    a sample of size 2000 from the $t$-copula $C_{\nu=2, \rho=0.9}$ (centre) and
    a sample of the same size from the corresponding W-transformed copula
    $C_{(\W_{0.3}, \W_{0.45})}$ (right).}
  \label{fig:asym:W:t:cop}
\end{figure}

The two W-transforms $\W_{0.3}, \W_{0.45}$ redistribute the
probability mass over $[0, 0.3]\times[0, 0.45]$ and $(0.7, 1]\times(0.55, 1]$,
which is stretched from a rectangle to a square and redistributed over $[0, 0.5]^2$
and $(0.5, 1]^2$, respectively, hence the asymmetry.  We further validate this
behaviour using a test of exchangeablility proposed by \cite{genest2012tests}
(\texttt{exchTest()} in the \R\ package \texttt{copula}) on the W-transformed copula sample,
which gave a p-value of $0.0005$ and thus evidence against exchangeability.

\subsection{Construction of copulas using ordinal sums}\label{subsec:cop:ordinal:sum}
Section~\ref{subsec:tail} showed that W-transforms typically reduce the tail
dependence of a copula $C$ and that one may construct flexible tails based on ordinal sums. In this
section, we detail this construction further by generalising convex mixtures of copulas via ordinal sums,
targeting tail dependence by strategically choosing a set of copulas in the ordinal sum,
and interpreting this construction as a mixture of copulas.

Consider an ordinal sum $C_S$ as in \eqref{eq:ordinal:sum}. For homogeneous
W-transformed copulas, the W-transform $\W$ has change points
partitioning $[0,1]$ into non-overlapping and non-degenerate intervals
$\Delta_k=(\delta_{k-1},\delta_k]$, and one then scales $C_k$ to the hyperrectangle
$(\delta_{k-1}, \delta_k]^d$, $k=1,\dots,K$. By Section~\ref{subsec:distr}, one
thus expects that $\W$ aggregates all component copulas by precisely adding their
volumes of $(\delta_{k-1}, \delta_k]^d$, $k=1,\dots,K$.
For pssm W-transforms $\W_j$, $j=1,\dots, d$ (see, for example,
\eqref{eq:param:v:trans}) with the same change points $\{\delta_k\}_{k=0}^K$,
define
\begin{align}\label{eq:os:G:df}
  G_{j,k}(u) \coloneqq \begin{cases}
    0, & u=0,\\
    \frac{\W^{-1}_{j|k}(u)-\delta_{k-1}}{\delta_k-\delta_{k-1}} & u\in(0,1),\\
    1, & u=1,
  \end{cases}
\end{align}
and let $g_{j,k}(u)=\frac{\rd }{\rd u} G_{j,k}(u)$ be the almost everywhere existing derivative of $G_{j,k}$.
Let $I_j$ be the index set such that $\W_{j|k}$ is increasing (decreasing) if and only if
$k\in I_j$ ($I_j^C$). Then the \emph{W-transformed ordinal sum} $C_{S,\bm{\W}}$ is
\begin{align}\label{eq:os:W:cop}
  C_{S,\bm{\W}}(\bm{u}) = \sum_{k=1}^K(\delta_{k}-\delta_{k-1})\Delta_{B_k}C, \quad
  B_k=\prod_{j=1}^d \Bigl(\Bigl(\biguplus_{k\in I_j}(0,G_{j,k}(u_j)]\Bigr)\cup\Bigl(
  \biguplus_{k\notin I_j} (G_{j,k}(u_j), 1]\Bigr)\Bigr).
\end{align}
If each $\W_j$ is piecewise increasing, then \eqref{eq:os:W:cop} reduces to
\begin{align}\label{eq:os:in:W:cop}
  C_{S,\bm{\W}}(\bm{u})=\sum_{k=1}^K(\delta_{k}-\delta_{k-1})C_k(G_{1,k}(u_1),\dots,G_{d,k}(u_d)),
\end{align}
that is $C_{S,\bm{\W}}$ reduces to a mixture of the copulas $C_1,\dots,C_K$.

This construction shows two significant improvements over existing models.
First, by Proposition~\ref{prop:os:tail:dependence}, it achieves more flexible
tail dependencies through the convex combinations $\sum\alpha_k\lambda_{\text{l},k}$
and $\sum\beta_k\lambda_{\text{u},k}$ in contrast to the tail
dependence coefficients $\min_k\{\lambda_{\text{l},k}\}$ and $\min_k\{\lambda_{\text{u},k}\}$
of the methods of \cite{khoudraji1995} and \cite{liebscher2008} which only take
into account the smallest $\lambda_{\text{l},k}$ and $\lambda_{\text{u},k}$,
respectively. Second, it maintains intermediate degrees of concordance while
allowing for non-exchangeability through applying different W-transforms to the
margins.

By Sklar's theorem, construction principle \eqref{eq:os:in:W:cop} equivalently
defines a mixture of joint distributions
\begin{align}
  C_{S,\bm{\W}}(\bm{u}) = \sum_{k=1}^K(\delta_{k}-\delta_{k-1})C_k(G_{1,k}(u_1),
  \dots,G_{d,k}(u_d)) = \sum_{k=1}^K(\delta_{k}-\delta_{k-1})F_k(\bm{u}),\label{eq:mix:joint:df2}
\end{align}
subject to
\begin{align}\label{eq:margin:df:constrain}
  \sum_{k=1}^K (\delta_k-\delta_{k-1}) g_{j,k}(u) = 1, \quad u\in[0, 1],\ j=1,\dots,d,
\end{align}
since \eqref{eq:os:G:df} is a strictly increasing distribution function on $[0,1]$.
Hence, one can interpret \eqref{eq:mix:joint:df2} as a construction principle of copulas
by choosing $d(K-1)$ marginal distributions on $[0,1]$ subject to \eqref{eq:margin:df:constrain}.

\begin{remark}[Relationship to \cite{li2014distorted}]
  The construction principle \eqref{eq:os:in:W:cop} is equivalent to the one
  presented in \cite[Equation~(1)]{li2014distorted} which has been named
  ``distorted mixture copula'' (DM copula). \cite{li2014distorted} have shown that
  \eqref{eq:os:in:W:cop} is able to achieve any tail dependence function
  as defined by \cite[Equations~(2.2) and~(2.3)]{joe2010tail} or any tail dependence
  coefficient. In this reference, DM copulas were used to construct a copula $C$
  that is arbitrarily close to a Gaussian copula in terms of absolute mean deviation
  but has lower (upper) tail dependence function identical to that of a Clayton (Gumbel)
  copula. This can be done via~\eqref{eq:os:in:W:cop} using W-transforms,
  and an illustration is provided in Example~\ref{eg:os:tails}~\ref{eg:os:tails:modif:gauss}.
\end{remark}

\subsection{Application to the Danube dataset}\label{subsec:illu:real:data}
As an illustration of the flexibility and usefulness of W-transforms for
statistical modelling, we consider the dataset \texttt{danube} from the \R\
package \texttt{lcopula}. It consists of 659 pseudo-observations of monthly base
flow observations from the Global River Discharge Project of the Oak Ridge
National Laboratory Distributed Active Archive Center, determined from joint
observations over 55 years until 1991 at two stations, one being in
Sch\"arding (Austria) on the Inn and the other one being in Nagymaros (Hungary) on the
Danube.

Upon visual inspection, the pseudo-observations, shown on the left-hand side of
Figure~\ref{fig:danube}, exhibit non-exchangeability; a formal test using the
function \texttt{exchTest()} from the \R\ package \texttt{copula} yields a
p-value of $0.0005$, confirming statistically significant non-exchangeability.
Additionally, the data demonstrate co-movement in the upper tail, making
upper-tail dependent copulas such as the Gumbel, rotated Clayton or Joe suitable
candidate models for the data. As demonstrated by \cite[Section~4,
5]{hofertkojadinovicmaechleryan2018}, a Gumbel copula is not rejected (with a
p-value of $0.07343$) under a parametric bootstrap goodness-of-fit test with the
function \texttt{gofCopula()} of the \R\ package \texttt{copula} using the
inversion of Kendall's tau estimation method.  However, since the Danube data
are inherently non-exchangeable, the exchangeable Gumbel family requires
adjustment. A previous attempt using a Khoudraji–transformed Gumbel copula, as
discussed by \cite[Section 4, 5]{hofertkojadinovicmaechleryan2018}, provided
only weak evidence for this model (yielding a p-value of $0.04745$). In this
study, we build on this %
attempt to improve a statistically sound fit for the Danube data.

We begin by fitting a one-parameter Gumbel copula to the Danube data via
maximum pseudo-likelihood estimation. The parameter estimate is $2.1383$, with a
log-likelihood of $278.148$. A simulated sample from this fitted copula is shown
on the left-hand side of Figure~\ref{fig:danube:fits}. To account for the observed
\begin{figure}[htbp]
  \centering
  \includegraphics[width=0.48\textwidth]{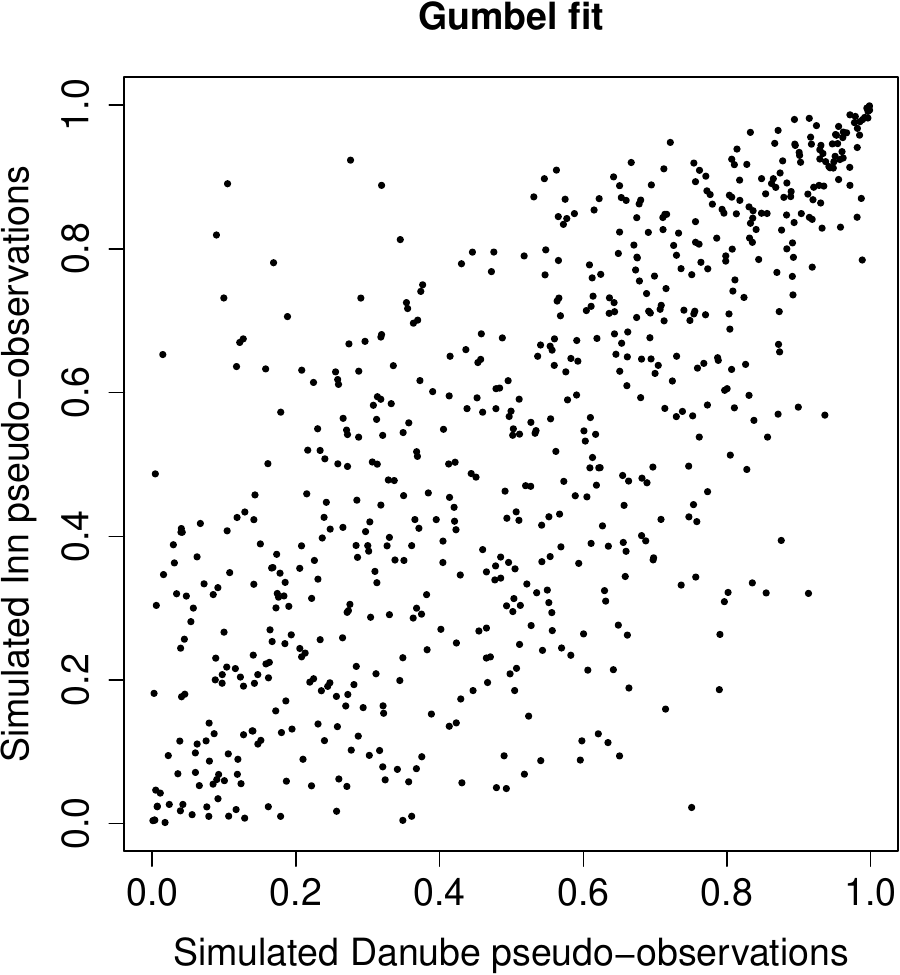}
  \hfill
  \includegraphics[width=0.48\textwidth]{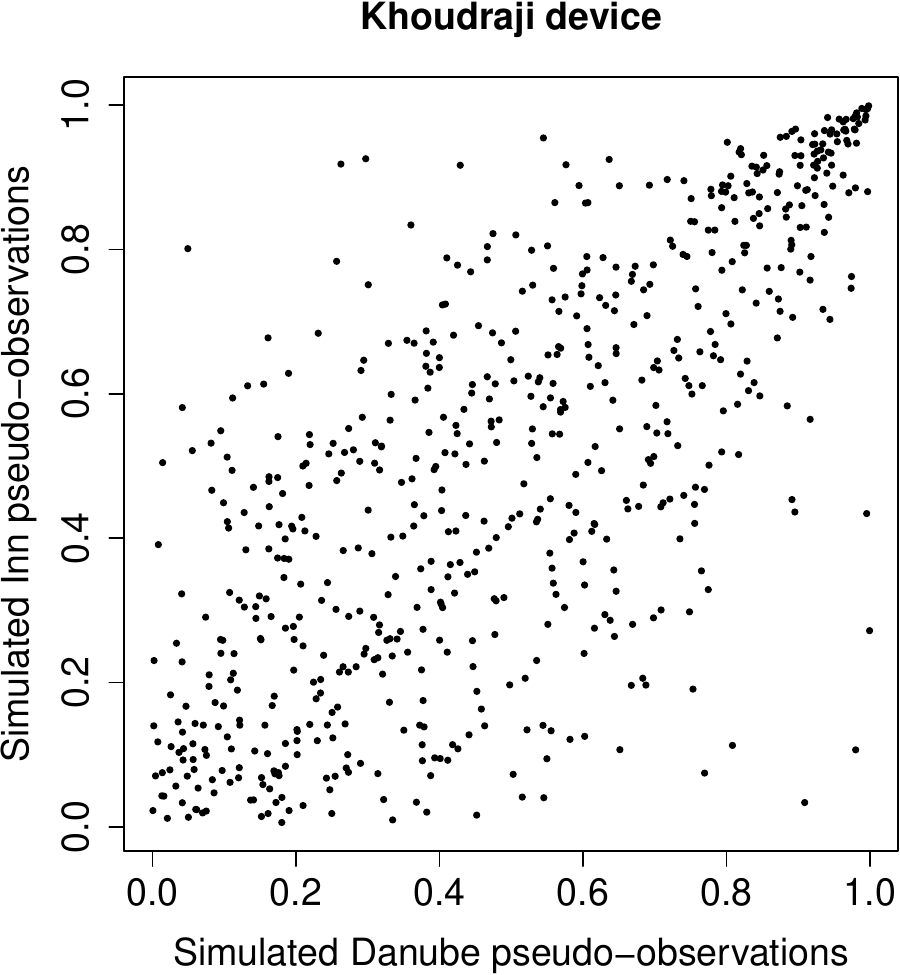}
  \caption{Simulated sample of size 659 of the fitted Gumbel (left) and of the
    fitted Khoudraji-transformed Gumbel copula (right).}
  \label{fig:danube:fits}
\end{figure}
non-exchangeability, we next consider an ordinal sum copula as
in~\eqref{eq:ordinal:sum}, with $\delta = 0.5$, two Gumbel copulas with unknown
parameters $\alpha_1, \alpha_2$, a piecewise linear W-transform
$\W(u)=2u-\lceil 2u -1\rceil$, $u\in[0, 1]$, applied to the first
margin (Danube), and a parametric W-transform $\W_{\theta}$ applied to
the second margin (Inn), given by
\begin{align}
  \W_\theta(u)=\begin{cases}
    \dfrac{\sqrt{\theta u + 1}-1}{D}, & u\in[0,0.5],\\
    \dfrac{\theta-2D-\sqrt{\theta^2-4\theta D+4D^2+2\theta D^2-4\theta D^2u}}{2D^2}, &u\in(0.5, 1],
  \end{cases}\label{eq:W:trafo:Inn}
\end{align}
where $D=\sqrt{0.5\theta+1}-1$ and $\theta\in(0,\infty)$. A plot of
$\W_{\theta}(u)$ for $\theta = 20$ is shown on the left of
Figure~\ref{fig:danube:diagostics}.
\begin{figure}[htbp]
  \centering
  \includegraphics[width=0.32\textwidth]{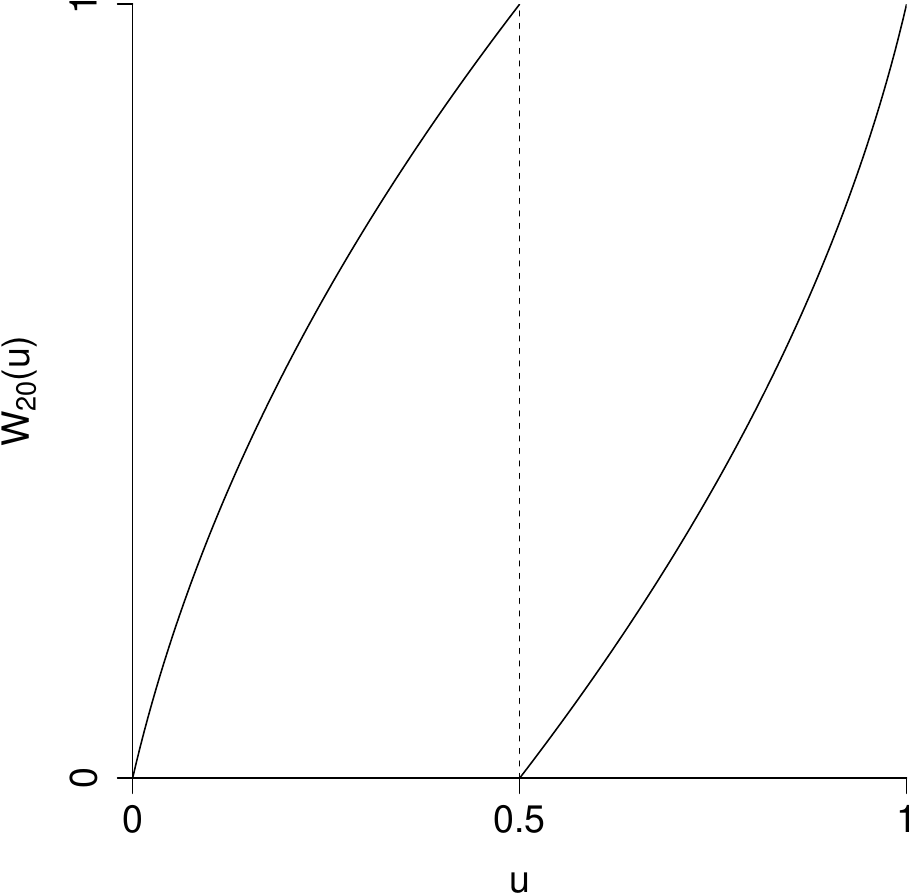}\hfill
  \includegraphics[width=0.32\textwidth]{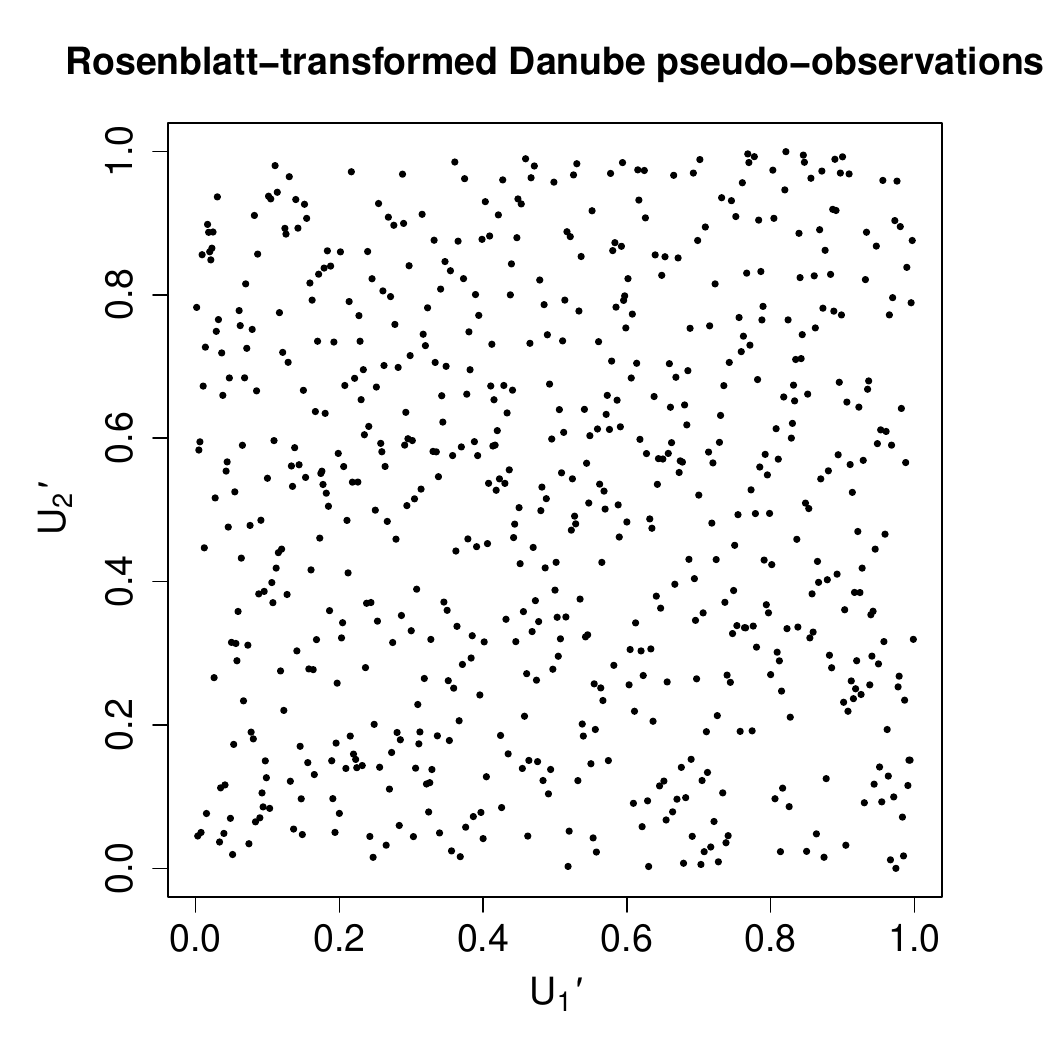}\hfill
  \includegraphics[width=0.32\textwidth]{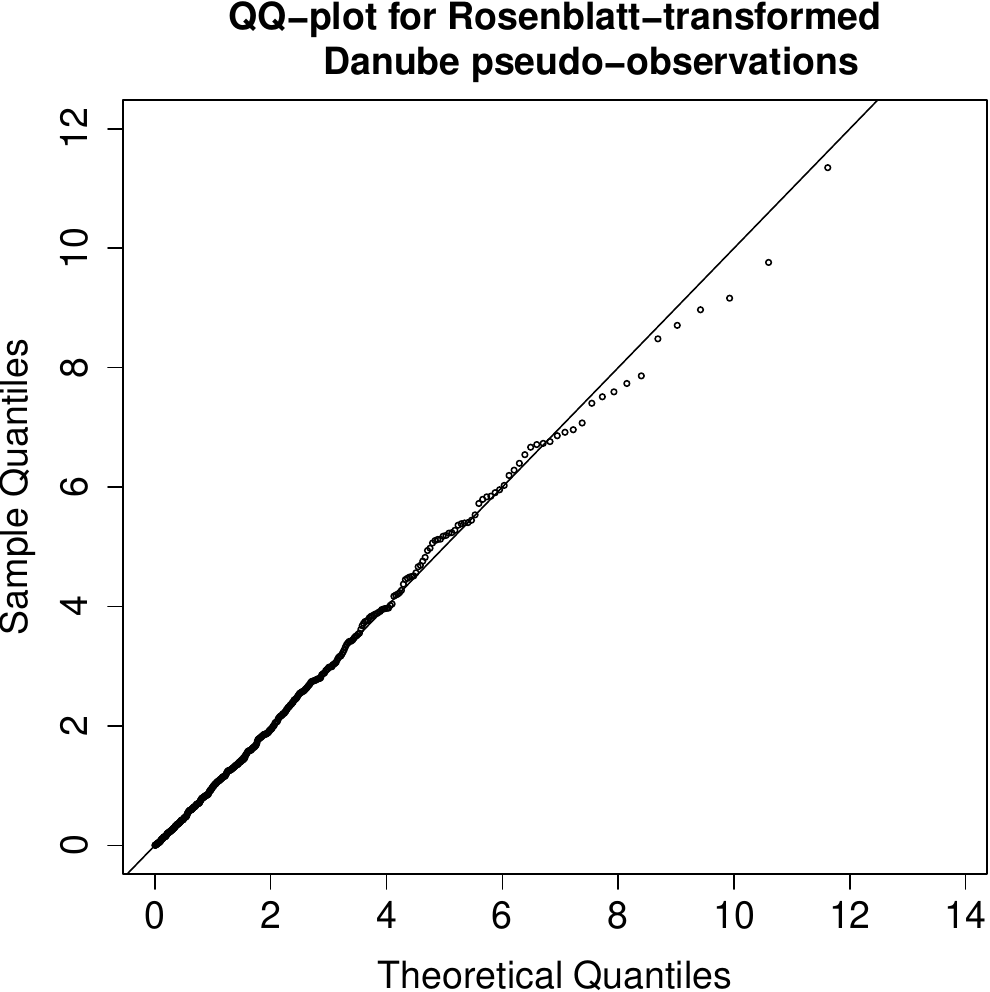}
  \caption{W-transform $\W_{20}$~\eqref{eq:W:trafo:Inn} (left), implied Rosenblatt-transformed
    Danube data $(U_1',U_2')$ (centre) and Q-Q plot of empirical quantiles of
    $(\Phi^{-1}(U_1'))^2 + (\Phi^{-1}(U_2'))^2$ against the theoretical $\chi^2_2$
    quantiles (right).}
  \label{fig:danube:diagostics}
\end{figure}
Our choice of these W-transforms is motivated by two key observations. First,
the Sch\"arding station is located upstream of Nagymaros, leading to generally
higher monthly average flow rates at Sch\"arding. This is reflected in the concave
structure of the data (see the top-left of Figure~\ref{fig:danube:fits}), which
we aim to capture through the concave shape of the first piece of
$\W_\theta$ (see the left of
Figure~\ref{fig:danube:diagostics}). Second, to facilitate a comparison with the
Gumbel copula model used by \cite{hofertkojadinovicmaechleryan2018}, our
W-transformed ordinal sum should extend this special case. Notably, when
$\theta\rightarrow 0$, $\W_\theta$ converges to the piecewise linear
W-transform $\W_{0}(u) = 2u-\lceil 2u -1\rceil$. Hence, a homogeneous
application of this transformation to both margins of the ordinal sum copula
with two equal Gumbel copula components recovers a Gumbel copula and we thus
indeed generalise the latter.

We estimate the parameters of the W-transformed ordinal sum via maximum
likelihood, obtaining $\alpha_1= 2.8437$, $\alpha_2 = 2.0412$ and $\theta = 21.2635$, with
a log-likelihood of $284.319$. A likelihood ratio test with respect to the
Gumbel model yields a p-value of $0.0021$, indicating that the W-transformed
ordinal sum provides a statistically significant improvement over the Gumbel
model. A simulated sample from the fitted W-transformed ordinal sum is displayed
on the right-hand side of Figure~\ref{fig:danube}. For comparison, we also fit
the Khoudraji–transformed Gumbel copula as described in
\cite{hofertkojadinovicmaechleryan2018}, which gives a log-likelihood of
$281.902$. A corresponding sample from this fitted copula is shown on the
right-hand side of Figure~\ref{fig:danube:fits}.

Visual inspection of the samples already indicates the superiority of the
W-transformed ordinal sum copula compared to both the Gumbel copula and the
Khoudraji–transformed Gumbel copula. To further validate our model, we perform
two more visual diagnostics. First, we apply the Rosenblatt transform of the
fitted W-transformed ordinal sum on the data (realisations of $(U_1, U_2)$) and the
resulting transformed data (realisations of $(U_1', U_2')$; see centre of
Figure~\ref{fig:danube:diagostics}) exhibit no visible departure from
independence. Second, for the Rosenblatt-transformed data (realisations of
$(U_1', U_2')$), we compute the realisations of
$(\Phi^{-1}(U_1'))^2 + (\Phi^{-1}(U_2'))^2$ and plot their empirical quantiles
against the theoretical $\chi^2_2$ quantiles in the form of a Q-Q plot (see the
right of Figure~\ref{fig:danube:diagostics}). The close alignment indicates no
departure from the identity, thus confirming our model's ability to capture the
dependence structure of the Danube data.

We complement these visual diagnostics with a formal parametric bootstrap
goodness-of-fit test (via the function \code{gofCopula()} of the \R\ package \code{copula} of \cite{copula}) %
for all three models using maximum pseudo-likelihood
estimation. The three tests yield p-values of $0.02048$ for the Gumbel copula,
$0.04745$ for the Khoudraji-transformed Gumbel copula, and $0.1013$ for our
W-transformed ordinal sum. These results provide numerical evidence that our
proposed model outperforms the alternatives in terms of goodness-of-fit.

\section{Conclusion}\label{sec:concl}
We introduced W-transforms, a class of
transformations %
constructed from a distribution $F_X$ and a piecewise strictly monotone
function $T$, and studied their properties for continuous and discontinuous
$F_X$. Specifically, W-transforms constructed from continuous $F_X$ are
piecewise strictly monotone, uniformity-preserving, invariant under compositions
and satisfy the partition of square property. When $F_X$ is not continuous, we
extended the definition of W-transforms and showed that the resulting generalised W-transforms
always have linear pieces.

By applying W-transforms componentwise to a copula-distributed random vector, we
derived the corresponding W-transformed copulas and analysed their functional
form, density, tail dependence, concordance measures, and symmetries. We
demonstrated the flexibility and adaptability of W-transforms by showcasing
their ability to produce diverse tail behaviour, to modify copula tails, to
create asymmetric dependencies (in particular, not restricted to
exchangeability), and to lead to flexible copulas based on ordinal
sums. Specifically, we used W-transforms to remove the tail dependence of given
copulas in one tail, to create asymmetric copula models by applying different
W-transforms to a copula-distributed random vector componentwise and constructed
models with flexible tails based on ordinal sums. The resulting models showed
realistic sample clouds as often seen in dependent data. In an empirical
application of W-transforms to the Danube dataset, our suggested W-transformed
ordinal sum copulas outperformed existing models, underscoring the usefulness
and potential of W-transforms for real-life stochastic modelling.

Competing W-transform specifications can be compared using
standard copula goodness-of-fit tests. The problem of a data-driven selection
procedure for the number and placement of change
points in the absence of domain-specific guidance remains open and is
a natural direction of future research.

\appendix

\section{Proofs}
\subsection{Proofs of Section~\ref{sec:cont:W:trafo}}

\begin{proof}[Proof of Proposition~\ref{prop:w:unif:preserv}]
  If $U\sim\U(0,1)$, the quantile transform implies that $F_X^{-1}(U)\deq X\sim F_X$ for any
  $X\sim F_X$. Due to local strict monotonicity of $T$, $X$ being continuously distributed
  implies that $T(X)$ is continuously distributed, so $F_{T(X)}^{-1}$ is strictly increasing by
  \cite{embrechtshofert2013c}.
  Hence,
  \begin{align*}
    \P(\W(U) \leq u) %
    &= \P\bigl(F_{T(X)}(T(X)) \leq u\bigr)
      = \P\bigl(F_{T(X)}^{-1}(F_{T(X)}(T(X))) \leq F_{T(X)}^{-1}(u)\bigr)\\
		&= \P(T(X)\leq F_{T(X)}^{-1}(u)) = F_{T(X)}(F_{T(X)}^{-1}(u))=u,
	\end{align*}
	where the last equality follows from \cite{embrechtshofert2013c}. Hence $\W(U)\sim
	\U(0,1)$. \qedhere
\end{proof}

\begin{proof}[Proof of Proposition~\ref{prop:W:property}]
  \mbox{} %
  \begin{enumerate}
  \item By \eqref{eq:W:transform}, since $F_{T(X)}$ is strictly increasing on
    $\text{ran}(T)$, the change points of $F_{T(X)}(T(x))$ are $t_k,
    k\in\IN$. By continuity of $F_X$, $F_X^{-1} (F_X(u))=F_X(F_X^{-1}(u))=u$ by
    \cite{embrechtshofert2013c}.  Therefore, the change points $\delta_k$ of
    $\W$ are such that $F_X^{-1}(\delta_k)=t_k$, that is
    $\delta_k=F_X(t_k), k\in\IN$. Since $\inf\supp(F_X)=t_0$ and
    $\sup\supp(F_X)=t_K$, $\delta_0=F_X(t_0)=0$ and $\delta_K=F_X(t_K)=1$.
  \item By continuity of $F_X$, $F^{-1}_X$ is strictly increasing. Since $F_{T(X)}$ is strictly
    increasing on $\text{ran}(T)$, the monotonicity of $\W$ depends on $T$ only. Then, by
    \ref{prop:W:property:change:pt}, $\W$ has the same monotonicity on $(F_X(t_{k-1}),
    F_X(t_k)] = (\delta_{k-1}, \delta_k]$ as $T$ has on $(t_{k-1}, t_k]$. Moreover,
    if $T$ is continuous everywhere, then $F_{T(X)}$ is continuous everywhere. As a composition
    of $F_{T(X)}, T$ and $F^{-1}_X$, we obtain that $\W$ is continuous everywhere.
  \item  By \ref{prop:W:property:change:pt} and \ref{prop:W:property:monotone}, $\W$ is pcsm.
    Let $U\sim \U(0,1)$ and $V=\W(U)$. Consider the event $\{V\leq v\}$. By construction,
    this event is equivalent to $\biguplus_{k=1}^{K} S_k(v)$ up to the singleton $\{0\}$ which is a null set.
    As $\W$ is uniformity-preserving, $V\sim\U(0,1)$, and hence \eqref{eq:partition:sq:property} holds.
    \qedhere
  \end{enumerate}
\end{proof}

\begin{proof}[Proof of Proposition~\ref{prop:W:compo}]
  Uniformity-preservation follows since $\W'$ and $\W''$ preserve uniformity, and thus
  their composition inherits this property. Consider a monotone piece $\W''_{|\ell}$ of $\W''$
  with image $I_\ell\coloneqq\W''((\delta_{\ell-1}'', \delta_\ell''])$, $\ell\in\{1, \dots, K''\}$.
  Since $\W'$ is pcsm with change points $\{\delta_k'\}_{k=0}^{K'}$, its restriction to $I_\ell$ is
  strictly monotone on each interval $(\delta_{k-1}', \delta_k']\cap I_\ell$. The preimages
  $\W''^{\,-1}(\delta_k')$ partition $(\delta_{\ell-1}'', \delta_\ell'')$ into subintervals where
  $\W'\circ\W''$ is strictly monotone. As $K', K''\in\bar{\IN}$, the total partition is countable.
\end{proof}

\begin{proof}[Proof of Proposition~\ref{prop:suff:con:lin}]
  \mbox{} %
  \begin{enumerate}
  \item By definition, $\W(u)=F_{T(X)}\bigl(T(F_X^{-1}(u))\bigr)=\P\bigl(T(X)\leq
    T(F_X^{-1}(u))\bigr)$. Let $T_{|k}=T|_{(t_{k-1}, t_k]}$ be the restriction of $T$ on the
    $k$th piece. Suppose that $F_X^{-1}(u)\in(t_{k-1}, t_k]$ for some $\ell$ and that
    $T_{|\ell}$ is increasing. Then, by the law of total probability, we have
    \begin{align*}
      \W(u)&=\sum_{k=1}^K \P\bigl(T_{|k}(X)\leq T_{|\ell}(F_X^{-1}(u)),
             X\in(t_{k-1}, t_k]\bigr).
    \end{align*}
    Since $T$ is injective except possibly at the change points $t_0, \dots, t_K$, and since $F_X$ is
    continuous, the joint probability $\P\bigl(T_{|k}(X)\leq T_{|\ell}(F_X^{-1}
    (u)), X\in(t_{k-1}, t_k]\bigr)$ for any $k\neq \ell$ is 0 if $\inf T_{|k} \geq \sup T_{|\ell}$, and is
    $F_X(t_k)-F_X(t_{k-1})$ if $\sup T_{|k}\leq \inf T_{|\ell}$. Hence,
    \begin{align*}
      \W(u)&=\P\bigl(T_{|\ell}(X)\leq T_{|\ell}(F_X^{-1}(u)), X\in(t_{\ell-1}, t_\ell]\bigr)\\
           &\phantom{={}}+\sum_{k\neq \ell}\I\{\sup T_{|k}\leq \inf T_{|\ell}\}(F_X(t_k)-F_X(t_{k-1}))\\
           &=u-F_X(t_{\ell-1})+ \sum_{k\neq \ell}\I\{\sup T_{|k}\leq \inf T_{|\ell}\}(F_X(t_k)-F_X(t_{k-
             1})).
    \end{align*}
    The proof when $T_{|\ell}$ is decreasing follows similarly and one has $\W(u)=-
    u+F_X(t_\ell)+\sum_{k\neq \ell}\I\{\sup T_{|k}\leq \inf T_{|\ell}\}(F_X(t_k)-F_X(t_{k-1}))$. Hence
    $\W$ is piecewise linear.
  \item Since $F_X(X)\sim\U(0, 1)$, \cite{embrechtshofert2013c} implies that
    \begin{align*}
      \W(u)=F_{T(X)}\bigl(T(F_X^{-1}(u))\bigr)=F_{T(X)}\bigl(F_X(F_X^{-1}(u))\bigr)=F_{T_X(X)}(u)=F_{F_X(X)}(u)=u.
    \end{align*}
  \item For any fixed $\ell'\in \{1, \dots, K'\}$, we partition $\{1, \dots, K'\}$ into three index
    sets $I_1, I_2, I_3$ such that if $k'\in I_i$, condition $i$ in the statement holds on
    $T_{|k'}$. Then, similarly to 1), one has
    \begin{align*}
      \W(u)&=\sum_{k'=1}^{K'} \P\bigl(T_{|k'}(X)\leq T_{|\ell'}(F_X^{-1}(u)),
             X\in(t'_{k'-1}, t'_{k'}]\bigr).
    \end{align*}
    If $k'\in I_1$, the joint probability $\P\bigl(T_{|k'}(X)\leq T_{|\ell'}(F_X^{-1}
    (u)), X\in(t'_{k'-1}, t'_{k'}]\bigr)$ is 0 if $\inf T_{|k'} \geq \sup T_{|\ell'}$, and is $F_X(t'_{k'})-
    F_X(t'_{k'-1})$ if $\sup T_{|k'}\leq \inf T_{|\ell'}$. Assume now $T_{|\ell'}$ is strictly increasing. If
    $k'\in I_2$, we have
    \begin{align*}
      &\mathrel{\phantom{=}}\P\bigl(T_{|k'}(X)\leq T_{|\ell'}(F_X^{-1}(u)), X\in(t'_{k'-1}, t'_{k'}]\bigr)\\
      &=\P\bigl(T_{|\ell'}(X+t'_{k'}-t'_{\ell'})\leq T_{|\ell'}(F_X^{-1}(u)), X\in(t'_{k'-1}, t'_{k'}]\bigr)\\
      &=\frac{(t'_{K'}-t'_0)u+t_0'+t'_{\ell'}-t'_{k'}-t'_{k'-1}}{t'_{K'}-t'_0}=
        u+\frac{t'_{\ell'}-t'_{k'}-t'_{k'-1}+t_0'}{t'_{K'}-t'_0}.
    \end{align*}
    If $k'\in I_3$, we have
    \begin{align*}
      &\mathrel{\phantom{=}}\P\bigl(T_{|k'}(X)\leq T_{|\ell'}(F_X^{-1}(u)),X\in(t'_{k'-1}, t'_{k'}]\bigr)\\
      &=\P\bigl(T_{|\ell'}(t'_{\ell'}-X+t'_{k'-1})\leq T_{|\ell'}(F_X^{-1}(u)), X\in(t'_{k'-1}, t'_{k'}]\bigr)\\
      &=\frac{t'_{k'}-t'_{k'-1}-t'_{\ell'}+t_0'+(t'_{K'}-t'_0)u}{t'_{K'}-t'_0}=
        u+\frac{t'_{k'}-t'_{k'-1}-t'_{\ell'}+t_0'}{t'_{K'}-t'_0}.
    \end{align*}
    Otherwise if $T_{|\ell'}$ is decreasing, then for $k'\in I_2$, one has $\P\bigl(T_{|k'}(X)\leq
    T_{|\ell'}(F_X^{-1}(u)), X\in(t'_{k'-1}, t'_{k'}]\bigr)=-u+\frac{2t'_{k'}-t'_{\ell'}-t_0'}{t'_{K'}-t'_0}$.
    And for $k'\in I_3$, one has $\P\bigl(T_{|k'}(X)\leq T_{|\ell'}(F_X^{-1}(u)),
    X\in(t'_{k'-1}, t'_{k'}]\bigr)=-u+\frac{t'_{\ell'}-t_0'}{t'_{K'}-t'_0}$.
    Combining all cases one sees that $\W(u)$ is a linear function in $u$ with
    absolute slope $\bigl{|}|I_2|+|I_3|\bigl{|}$. Hence we are done. \qedhere
  \end{enumerate}
\end{proof}

\begin{proof}[Proof of Proposition~\ref{prop:linear:period}]
  Assume $\W$ is $p$-periodic. We first prove that $\W$ is bijective almost everywhere.
  For a Lebesgue null set $N$ and $u_1, u_2\in [0, 1]\setminus N$, $\W(u_1)=\W(u_2)$
  implies that $\W^{p}(u_1)=\W^p(u_2)$ which in turn implies that $u_1=u_2$.
  Hence $\W$ is injective on $[0, 1]\setminus N$. On the other hand, for any $v\in [0,
  1]\setminus\W^{p-1}(N)$, let $u=\W^{p-1}(v)$, then $\W
  (u)=\W(\W^{p-1}(v))=\W^{p}(v)=v$. It follows that $\W$ is
  bijective on $[0, 1]\setminus N$. By Lemma~\ref{lem:unif:preserv}, since
  $\W$ is piecewise differentiable and uniformity-preserving, $\sum_{u\in\W^{-1}(v)}
  \frac{1}{|\W^\prime(u)|}=1$ which implies that $|\W^\prime(u)|=1$
  for all but a finite number of points $v\in [0,1]$. Since $\W$ has all its
  pieces defined in non-degenerate intervals, $\W$ is piecewise linear.
\end{proof}

\begin{proof}[Proof of Lemma~\ref{lem:param:derivative}]
  We prove the first case only, the second follows similarly. For any $u\in[0,1]$ and
  for $\ell$ such that $F_X^{-1}(u)\in[t_{\ell-1}, t_\ell)$, differentiate
  \eqref{eq:W:param:fam} with respect to $u$ to get
  \begin{align*}
    \W'_{\bm{t},\bm{1}, F_X}(u) &=\sum_{k=1}^{K}f_X\Bigl(T_{|\ell}
                                  (F^{-1}_X(u))t_k+\bigl(1-T_{|\ell}(F_X^{-1}(u))\bigr)t_{k-1}\Bigr)
                                  \biggl(\dfrac{T'_{\ell}(F_X^{-1}(u))(t_k-t_{k-1})}{f_X(F_X^{-1}(u))}\biggr)\\
                                &=\sum_{k=1}^Kf_X\biggl(\frac{F_X^{-1}(u)-t_{\ell-1}}{t_{\ell}-t_{\ell-1}}(t_k-t_{k-1}) +
                                  t_{k-1}\biggr)\dfrac{t_k-t_{k-1}}{(t_\ell-t_{\ell-1})f_X(F_X^{-1}(u))}.
  \end{align*}
  If $f_X(0+)=\infty$, then
  \begin{align*}
    \W'_{\bm{t},\bm{1}, F_X}(0+) &= \sum_{k=1}^Kf_X\biggl(
                                   \frac{F_X^{-1}(0+)-t_0}{t_{1}-t_{0}}(t_k-t_{k-1}) +
                                   t_{k-1}\biggr)\dfrac{t_k-t_{k-1}}{(t_1-t_{0})f_X(F_X^{-1}(0+))}\\
                                 &=\frac{f_X(F_X^{-1}(0+))}{f_X(F_X^{-1}(0+))} +\sum_{k=2}^K
                                   \dfrac{(t_{k}-t_{k-1})\Bigl(f_X((t_k-t_{k-1})\frac{F_X^{-1}(0+)-t_0}{t_1-t_0}+t_{k-1})\Bigr)}
                                   {(t_1-t_0)f_X(F_X^{-1}(0+))}=1,
  \end{align*}
  where the last equation follows by $f_X(F_X^{-1}(0+))=f_X(t_0+)=\infty$ and $(t_k-t_{k-1})
  \frac{F_X^{-1}(0+)-t_0}{t_1-t_0}+t_{k-1} > t_0$ and therefore $f_X\bigl((t_k-t_{k-1})
  \frac{F_X^{-1}(0+)-t_0}{t_1-t_0}+t_{k-1}\bigr)<\infty$.
\end{proof}

\subsection{Proofs of Section~\ref{sec:general:W:trafo}}
\begin{proof}[Proof of Proposition~\ref{prop:jump:linear}]
  By assumption, $\P(T(X)=T(x_0)) > 0$. Then $F_X(x_0, V) = F_X(x_0-)+(F_X(x_0)-F_X(x_0-))V$
  and
  \begin{align*}
    F_{T(X)}(T(x_0), V) = F_{T(X)}(T(x_0)-)+(F_{T(X)}(T(x_0))-F_{T(X)}(T(x_0)-))V. %
  \end{align*}
  Therefore, the generalised W-transform from $F_X(x_0, V)$ to $F_{T(X)}(T(x_0),$ $W)$ is
  \begin{align*}
    \Wg(u)=\dfrac{F_{T(X)}(T(x_0))-F_{T(X)}(T(x_0)-)}{F_X(x_0)-F_X(x_0-)}(u-
    F_X(x_0-))+F_{T(X)}(T(x_0)-),
  \end{align*}
  where $u \in (F_X(x_0-), F_X(x_0))$. It follows that $\Wg$ is linear on
  $(F_X(x_0-), F_X(x_0))$ with slope
  \begin{align*}
    \dfrac{F_{T(X)}(T(x_0))-F_{T(X)}(T(x_0)-)}{F_X(x_0)-F_X(x_0-)}
    &=\dfrac{F_{T(X)}(s)-F_{T(X)}(s-)}{F_X(x_0)-F_X(x_0-)}=\dfrac{\sum_{\ell=0}^L\P(X=x_\ell)}{\P(X=x_\ell)}. \qedhere
  \end{align*}
\end{proof}

\subsection{Proofs of Section~\ref{sec:cop}}

\begin{proof}[Proof of Theorem~\ref{theorem:U:V:cop}]
  \mbox{} %
  \begin{enumerate}
  \item Suppose $u\in(\delta_{\ell-1}, \delta_\ell]$ for some $\ell\in\{1, \dots, K\}$.
    Consider the construction of $S_k$ in Proposition~\ref{prop:W:property} \ref{prop:W:property:partition:sq}
    and let $R_\ell\coloneqq\{u: \W_{|\ell}(u)\leq v\}$. Then $R_\ell=\bigl(\delta_{\ell-1},
    \min\{u, \W^{-1}_{|\ell}(v)\}\bigr]$ if $\ell\in I$ and $R_\ell=\bigl
    (\W^{-1}_{|\ell}(v), u\bigr]$ otherwise.
    By construction, the event $\{U\leq u, V\leq v\}=\{U\leq u, \W(U)\leq v\}$ is thus equivalent
    to $\bigl(\biguplus_{k=1}^{\ell-1}S_k\bigr)\cup R_\ell$. Therefore,
    \begin{align*}
      &\phantom{{}={}}\P(U\leq u, V\leq v)=\P\biggl(\biggl(\,\biguplus_{k=1}^{\ell-1}S_k\biggr)\cup R_\ell\biggr)\\
      &=\sum_{\substack{k\in I, \\ k<\ell}} (\W^{-1}_{|k}(v)-\delta_{k-1}) + \sum_{\substack{k\in I^C, \\ k<\ell}}(\delta_k-\W^{-1}_{|k}(v))\\
      &\phantom{{}={}}+\I_{\{\ell\in I\}}\bigl(\min\{u, \W^{-1}_{|\ell}(v)\}-\delta_{\ell-1}\bigr)+
        \I_{\{\ell\in I^C\}}\max\{u-\W^{-1}_{|\ell}(v),0\}\\
      &=\sum_{k\in I}\max\bigl\{\min\{u,\W^{-1}_{|k}(v)\}-\delta_{k-1}, 0\bigr\} +\sum_{k\in I^C}\max
        \bigl\{\min\{\delta_k,u\}-\W^{-1}_{|k}(v), 0\bigr\}.
    \end{align*}
  \item Since $\P(U\leq u, V=v)= \frac{\rd}{\rd v}C(u, v)$, \eqref{eq:joint:U:V} follows upon
    differentiating \eqref{eq:U:V:cop} with respect to $v$. One recognises \eqref{eq:joint:U:V}
    to be the distribution function of a multinomial distribution with event probabilities $p_1,
    \dots, p_{|N(v)|}$ where $p_k =\bigl{|}\frac{\rd}{\rd v}\W_{|k}^{-1}(v)
    \bigl{|}$ for all $k\in N(v)$. Hence the statement follows. \qedhere
  \end{enumerate}
\end{proof}

\begin{proof}[Proof of Proposition~\ref{prop:sto:id}]
  By definition, $\W^{-1}(v, U')=\W^{-1}_{|k}(v)$ for some $k\in N(v)$, so
  $\W(\W^{-1}(v, U'))=\W_{|k}(\W^{-1}_{|k}(v))
  =v$. Hence $\W(\W^{-1}(V, U'))=V$. Moreover, for any $k\in N(v)$,
  $\P(\W^{-1}(V, U')=\W^{-1}_{|k}(v)\,|\,V=v)=\P(U'\in
  (\sum_{\ell=1}^{k-1}p_\ell, \sum_{\ell=1}^{k}p_\ell]\,|\,V=v)=p_k$.
  It follows that the pair $(\W^{-1}(V, U'), V)$ has the same conditional distribution as the
  one specified in Theorem~\ref{theorem:U:V:cop}~\ref{theorem:U:V:cop:cond} and $(\W^{-1}(V, U'), V)$ is distributed
  according to the copula $C$ in \eqref{eq:U:V:cop}. Hence $U\sim\U(0, 1)$.
\end{proof}

\begin{proof}[Proof of Theorem~\ref{theorem:W:trafo:cop}]
  For any $j$, consider the event $\{\W_j(U_j)\leq u_j\}$. Then by
  Proposition~\ref{prop:W:property}~\ref{prop:W:property:partition:sq}, we have
  $\{\W_j(U_j)\leq u_j\}=\bigl(\bigcup_{k_j\in I_j} (\delta_{j, k_j-1},
  \W^{-1}_{j|k_j}(u_j)]\bigr)\cup\bigl(\bigcup_{k_j\notin I_j}
  (\W^{-1}_{j|k_j}(u_j),\delta_{j, k_j}]\bigr)$. By definition, the
  joint distribution function of
  $(\W_1(U_1),\dots, \W_d(U_d))$ is thus
  \begin{align*}
    C_{\bm{\W}}(u_1,\dots,u_d)&=\P(\W_1(U_1)\leq u_1, \dots,
                                \W_d(U_d)\leq u_d)\\
                              &=\P\biggl(\,\bigcap_{j=1}^d\biggl[\biggl(\,\bigcup_{k_j\in I_j}(\delta_{j, k_j-1}, \W^{-1}_{j|
                                k_j}(u_j)]\biggr)\cup\biggl(\,\bigcup_{k_j\notin I_j}(\W^{-1}_{j|k_j}(u_j),\delta_{j, k_j}]\biggr)
                                \biggr]\biggr)\\
                              &=\P\biggl(\,\bigcap_{j=1}^d\mathcal{I}_j^{(k_j)}\biggr)
                                =\sum_{k_d=1}^{K_d}\dots\sum_{k_1=1}^{K_1}\Delta_{B_{\bm{\delta}_{\bm{k}},\bm{\W^{-1}(\bm{u})},\bm{I}}}C.
  \end{align*}
  The density follows from \eqref{eq:W:trafo:cop} by an application of the chain rule. Note that if
  $\W^{-1}_{j|k_j} (u_j)\in\{\delta_{j,k_j-1}, \delta_{j,k_j}\}$, then the differentiation gives 0.
  Otherwise, for each $B_{\bm{\delta}_{\bm{k}},\bm{\W^{-1}},\bm{I}}$,
  \begin{align*}
    \frac{\partial}{\partial u_1\cdots\partial u_d}\Delta_{B_{\bm{\delta}_{\bm{k}},\bm{\W^{-1}},\bm{I}}}C=
    &(-1)^{d+\sum_{m=1}^d\I_{\{k_m\in I_m\}}}c(\W^{-1}_{1|k_1}(u_1),\dots, \W^{-1}_{d|k_d}(u_d))\\
    &\times\prod_{\ell=1}^d\frac{1}{\W'_{\ell|k_\ell}(\W^{-1}_{\ell|k_\ell}(u_\ell))}.
  \end{align*}
  Hence the result follows.
\end{proof}

\begin{proof}[Proof of Proposition~\ref{prop:W:volume}]
  Let $I_j$ be the index set such that $\W_{j|k_j}$ is
  increasing (decreasing) if and only if $k_j\in I_j$ ($I_j^C$). Then the $C_{\bm{\W}}$-volume
  of $(\bm{a},\bm{b}]$ is
  \begin{align*}
    V_{C_{\bm{\W}}}((\bm{a},\bm{b}])&=\P(a_1<\W_1(U_1)\leq
                                      b_1,\dots, a_d<\W_d(U_d)\leq b_d)\\
                                    &=\P\biggl(\,\bigcap_{j=1}^{d}\biguplus_{k_j=1}^{K_j}\{a_j<\W_{j|k_j}(U_j)\leq b_j, U_j
                                      \in (\delta_{j,k_j}, \delta_{j,k_j+1}]\}\biggl{)}\\
                                    &=\P\biggl(\,\bigcap_{j=1}^d\biguplus_{k_j=1}^{K_j}\biggl\{\W_{j|k_j}^{-1}\bigl(a_j^{\I\{k_j\in
                                      I_j\}}b_j^{\I\{k_j\notin I_j\}}\bigr)< U_j\leq
                                      \W_{j|k_j}^{-1}\bigl(a_j^{\I\{k_j\notin I_j\}}b_j^{\I\{k_j\in I_j\}}\bigr),\\
                                    &\phantom{{}=\P\biggl(\,\bigcap_{j=1}^d\biguplus_{k_j=1}^{K_j}\biggl\{}U_j\in (\delta_{j,k_j}, \delta_{j,k_j+1}]\biggr\}\biggr)\\
                                    &=\sum_{k_d=1}^{K_d}\cdots\sum_{k_1=1}^{K_1}\P\biggl(\,\bigcap_{j=1}^d
                                      \bigl(\W_{j|k_j}^{-1}\bigl(a_j^{\I\{k_j\in I_j\}}b_j^{\I\{k_j\notin I_j\}}\bigr),
                                      \W_{j|k_j}^{-1}\bigl(a_j^{\I\{k_j\notin I_j\}}b_j^{\I\{k_j\in I_j\}}\bigr)\bigr]\biggr)\\
                                    &=\sum_{k_d=1}^{K_d}\cdots\sum_{k_1=1}^{K_1}
                                      \Delta_{B_{\bm{\W^{-1}}(\bm{a}),\bm{\W^{-1}}(\bm{b}),\bm{I}}}C. \qedhere
  \end{align*}
\end{proof}

\begin{proof}[Proof of Proposition~\ref{prop:flip:v:trans:tail}]
  By Example~\ref{eg:w:trans:volume}~\ref{eg:w:trans:volume:v:trans},
  \begin{align*}
    C_{\bm{\mathcal{V}}^*}(pb,\tfrac{p}{b})&=\Delta_{(0,\mathcal{V}^{*\,-1}_{1|1}(pb)]\times(0,\mathcal{V}^{*\,-1}_{2|1}(\frac{p}{b})]}C+\Delta_{(0,\mathcal{V}^{*\,-1}_{1|1}(pb)]\times(\mathcal{V}^{*\,-1}_{2|2}(\frac{p}{b}),1]}C\\
                                           &\phantom{{}=}+\Delta_{(\mathcal{V}^{*\,-1}_{1|2}(pb),1]\times(0,\mathcal{V}^{*\,-1}_{2|1}(\frac{p}{b})]}C+\Delta_{(\mathcal{V}^{*\,-1}_{1|2}(pb),1]\times(\mathcal{V}^{*\,-1}_{2|2}(\frac{p}{b}),1]}C.
  \end{align*}
  Expanding the terms, interchanging the second and the third, and dividing them by $p$ results in
  \begin{align*}
    \frac{C_{\bm{\mathcal{V}}^*}(pb,\frac{p}{b})}{p}&=
                                                      \frac{C(\mathcal{V}^{*\,-1}_{1|1}(pb),\mathcal{V}^{*\,-1}_{2|1}(\frac{p}{b}))}{p}+\frac{\mathcal{V}^{*\,-1}_{2|1}(\frac{p}{b})-C(\mathcal{V}^{*\,-1}_{1|2}(pb), \mathcal{V}^{*\,-1}_{2|1}(\frac{p}{b}))}{p}\\
                                                    &\phantom{{}=}+\frac{\mathcal{V}^{*\,-1}_{1|1}(pb)- C(\mathcal{V}^{*\,-1}_{1|1}(pb), \mathcal{V}^{*\,-1}_{2|2}(\frac{p}{b}))}{p}\\
                                                    &\phantom{{}=}+\frac{1-\mathcal{V}^{*\,-1}_{1|2}(pb)-\mathcal{V}^{*\,-1}_{2|2}(\frac{p}{b})+C(\mathcal{V}^{*\,-1}_{1|2}(pb), \mathcal{V}^{*\,-1}_{2|2}(\frac{p}{b}))}{p}.
  \end{align*}
  By assumption, $C$ is tail independent in the upper-left, upper and lower-right tails,
  so $\Lambda\equiv0$ in these three regions. %
  We thus obtain that
  \begin{align*}
    \Lambda\biggl(b,\frac{1}{b};C_{\bm{\mathcal{V}}^*}\biggr)
    =\lim_{p\to0+}\frac{C_{\bm{\mathcal{V}}^*}(pb,\frac{p}{b})}{p}=\lim_{p\to0+}\frac{C(\mathcal{V}^{*\,-1}_{1|1}(pb),\mathcal{V}^{*\,-1}_{2|1}(\frac{p}{b}))}{p}.
  \end{align*}
  By Taylor expansions,
  $\mathcal{V}^{*\,-1}_{1|1}(pb)=(\mathcal{V}^{*\,-1}_{1|1})'(0+)pb+o(p)$
  and
  $\mathcal{V}^{*\,-1}_{2|1}(p/b)=(\mathcal{V}^{*\,-1}_{2|1})'(0+)p$ $/b+o(p)$.
  Hence, by Lipschitz continuity of copulas,
  \begin{align*}
    &\Bigl|C\Bigl((\mathcal{V}^{*\,-1}_{1|1})'(0+)pb+o(p),\ (\mathcal{V}^{*\,-1}_{2|1})'(0+)\frac{p}{b}+o(p)\Bigr)\\
    &\phantom{\Bigl|}-C\Bigl((\mathcal{V}^{*\,-1}_{1|1})'(0+)pb,\ (\mathcal{V}^{*\,-1}_{2|1})'(0+)\frac{p}{b}\Bigr)\Bigr|\le 2|o(p)|.
  \end{align*}
  Now $\lim_{p\to0+}o(p)/p=0$ implies that
  \begin{align}
    \Lambda\biggl(b,\frac{1}{b};C_{\bm{\mathcal{V}}^*}\biggr)&=\lim_{p\to0+}\frac{C\bigl((\mathcal{V}^{*\,-1}_{1|1})'(0+)pb+o(p),(\mathcal{V}^{*\,-1}_{2|1})'(0+)\frac{p}{b}+o(p)\bigr)}{p}\notag\\
                                                             &=\lim_{p\to0+}\frac{C(p\alpha_1b,\frac{p\alpha_2}{b})}{p}=\Lambda\biggl(\alpha_1b,\frac{\alpha_2}{b};C\biggr), \label{eq:proof:MTCM:Lam}
  \end{align}
  where $\alpha_1=(\mathcal{V}^{*\,-1}_{1|1})'(0+)$ and $\alpha_2=(\mathcal{V}^{*\,-1}_{2|1})'(0+)$.
  By \cite[Equation~(2.2)]{koikekatohofert2023}, the supremum of $\Lambda(b,\frac{1}{b};
  C_{\bm{\mathcal{V}}^*})$ is attained when $(\alpha_1b,\alpha_2/b)$ is proportional to
  $(b^*,1/b^*)$, that is
  \begin{align*}
    (\alpha_1b,\alpha_2/b)=t(b^*,1/b^*)
  \end{align*}
  for some $t>0$. Solving this with respect to $b$ gives
  $b^*_{C_{\mathcal{V}^*}}=\sqrt{\frac{\alpha_2}{\alpha_1}}b^*$ (and
  $t=\sqrt{\alpha_1\alpha_2}$). Plugging $b=b^*_{C_{\mathcal{V}^*}}$ into~\eqref{eq:proof:MTCM:Lam}
  and using that $\Lambda$ is homogeneous of order $1$, we obtain
  \begin{align*}
    \Lambda\biggl(b^*_{C_{\mathcal{V}^*}},\frac{1}{b^*_{C_{\mathcal{V}^*}}}; C_{\mathcal{V}^*}\biggr)&=\Lambda\biggl(\alpha_1b^*_{C_{\mathcal{V}^*}},\frac{\alpha_2}{b^*_{C_{\mathcal{V}^*}}};C\biggr)=\Lambda\biggl(\alpha_1\sqrt{\frac{\alpha_2}{\alpha_1}}b^*,\frac{\alpha_2}{\sqrt{\frac{\alpha_2}{\alpha_1}}b^*};C\biggr)\\
                                                                                                     &=\Lambda\biggl(\sqrt{\alpha_1\alpha_2} b^*, \frac{\sqrt{\alpha_1\alpha_2}}{b^*};C\biggr)=\sqrt{\alpha_1\alpha_2}\Lambda\biggl(b^*,\frac{1}{b^*}; C\biggr).\qedhere
  \end{align*}
\end{proof}

\begin{proof}[Proof of Proposition~\ref{prop:tail:coeff}]
  Using
  \begin{align*}
    C_{\mathcal{V}}(t,t)&=C(\mathcal{V}^{-1}_{|1}(t)+t, \mathcal{V}^{-1}_{|1}(t)+t)-C(\mathcal{V}^{-1}_{|1}(t)+t, \mathcal{V}^{-1}_{|1}(t))\\
                        &\phantom{{}=}-C(\mathcal{V}^{-1}_{|1}(t), \mathcal{V}^{-1}_{|1}(t)+t)+C(\mathcal{V}^{-1}_{|1}(t), \mathcal{V}^{-1}_{|1}(t)),
  \end{align*}
  dividing by $1-t$, subtracting
  $\frac{2\mathcal{V}^{-1}_{|1}(t)}{1-t}$ from the first summand and adding it thereafter, we obtain that
  \begin{align*}
    &\phantom{{}={}}\frac{1-2t+C_{\mathcal{V}}(t,t)}{1-t}\\
    &=\frac{1-2(\mathcal{V}^{-1}_{|1}(t)+t)+C(\mathcal{V}^{-1}_{|1}(t)+t, \mathcal{V}^{-1}_{|1}(t)+t)}{1-t}+\frac{2\mathcal{V}^{-1}_{|1}(t)}{1-t}\\
                                         &\phantom{{}={}}-\frac{C(\mathcal{V}^{-1}_{|1}(t)+t, \mathcal{V}^{-1}_{|1}(t))}{1-t}-\frac{C(\mathcal{V}^{-1}_{|1}(t), \mathcal{V}^{-1}_{|1}(t)+t)}{1-t}+\frac{C(\mathcal{V}^{-1}_{|1}(t), \mathcal{V}^{-1}_{|1}(t))}{1-t}.
  \end{align*}
  Multiplying and dividing the first summand by $1-(\mathcal{V}^{-1}_{|1}(t)+t)$ and the last one by $\mathcal{V}^{-1}_{|1}(t)$ leads to
  \begin{align*}
    &\frac{1-2(\mathcal{V}^{-1}_{|1}(t)+t)+C(\mathcal{V}^{-1}_{|1}(t)+t, \mathcal{V}^{-1}_{|1}(t)+t)}{1-(\mathcal{V}^{-1}_{|1}(t)+t)}\frac{1-(\mathcal{V}^{-1}_{|1}(t)+t)}{1-t}+\frac{2\mathcal{V}^{-1}_{|1}(t)}{1-t}\\
    &-\frac{C(\mathcal{V}^{-1}_{|1}(t)+t, \mathcal{V}^{-1}_{|1}(t))}{1-t}-\frac{C(\mathcal{V}^{-1}_{|1}(t), \mathcal{V}^{-1}_{|1}(t)+t)}{1-t}+\frac{C(\mathcal{V}^{-1}_{|1}(t), \mathcal{V}^{-1}_{|1}(t))}{\mathcal{V}^{-1}_{|1}(t)}\frac{\mathcal{V}^{-1}_{|1}(t)}{1-t}.
  \end{align*}
  Now take the limit for $t\to1-$ and use L'H\^opital's rule (leading to $(\mathcal{V}^{-1}_{|1})'(1-)=\frac{1}{\mathcal{V}_{|1}'(\mathcal{V}^{-1}_{|1}(1-))}$ $=\frac{1}{\mathcal{V}'_{|1}(0+)}$) to see that
  \begin{align*}
    \lambda_{\text{u}}^{C_{\mathcal{V}}}&=\lambda_{\text{u}}^C\biggl(1-\frac{1}{-\mathcal{V}_{|1}'(0+)}\biggr)+\frac{2}{-\mathcal{V}_{|1}'(0+)}\\
                                        &\phantom{{}=}-\lim_{t\to1+}\frac{C(\mathcal{V}^{-1}_{|1}(t)+t, \mathcal{V}^{-1}_{|1}(t))+C(\mathcal{V}^{-1}_{|1}(t), \mathcal{V}^{-1}_{|1}(t)+t)}{1-t}+\lambda_{\text{l}}^C\frac{1}{-\mathcal{V}'_{|1}(0+)}.
  \end{align*}
  In particular, if $C$ is tail-independent in the upper-left and lower-right tails, so $\Lambda\equiv0$
  in these regions. We thus obtain that
  \begin{align*}
    \lambda_{\text{u}}^{C_{\mathcal{V}}}&=\lambda_{\text{u}}^C\biggl(1-\frac{1}{-\mathcal{V}_{|1}'(0+)}\biggr)+\frac{2}{-\mathcal{V}_{|1}'(0+)}.\qedhere
  \end{align*}
\end{proof}

\begin{proof}[Proof of Proposition~\ref{prop:os:tail:dependence}]
  Since $C_{S,\W}$ is a copula, $C_{S,\W}(u_1, 1)=\sum_{k=1}^K(\delta_k-\delta_{k-1})G_k(u_1)=u_1$
  and $C_{S,\W}(1, u_2)=\sum_{k=1}^K(\delta_k-\delta_{k-1})G_k(u_2)=u_2$.
  It follows that
  \begin{align}
    \sum_{k=1}^K(\delta_k-\delta_{k-1})g_k(u_j)=1, \quad j=1,2.\label{eq:density:relation}
  \end{align}
  Therefore,
  \begin{align*}
    \lambda_{\text{l}}&=\lim_{t\rightarrow0+}\dfrac{C_{S,\W}(t, t)}{t}
                        =\sum_{k=1}^K(\delta_k-\delta_{k-1})\lim_{t\rightarrow0+}
                        \dfrac{C_k(G_k(t), G_k(t))}{t}\\
                      &=\sum_{k=1}^K(\delta_k-\delta_{k-1})\lim_{t\rightarrow0+}
                        \dfrac{G_k(t)}{t}\dfrac{C_k(G_k(t),G_k(t))}{G_k(t)}=\sum_{k=1}^K(\delta_k-\delta_{k-1})g_k(0+)\lambda_{\text{l},k}
                        =\sum_{k=1}^K\alpha_k\lambda_{\text{l},k},
  \end{align*}
  and
  \begin{align*}
    \lambda_{\text{u}}&=\lim_{t\rightarrow1-}\dfrac{C_{S,\W}(t, t)+1-2t}{t}\\
                      &=\sum_{k=1}^K(\delta_k-\delta_{k-1})\lim_{t\rightarrow1-}\dfrac{1-G_k(t)}
                        {1-t}\dfrac{C_k(G_k(t), G_k(t))+1-2G_k(t)}{1-G_k(t)}\\
                      &\phantom{{}=}+\lim_{t\rightarrow1-}\frac{1-2t-\sum_{k=1}^K
			(\delta_k-\delta_{k-1})(1-2G_k(t))}{1-t}\\
                      &= \sum_{k=1}^K(\delta_k-\delta_{k-1})g_k(1-)\lambda_{\text{u},k}+2-2\sum_{k=1}^K(\delta_k-\delta_{k-1})g_k(1-)= \sum_{k=1}^K\beta_k\lambda_{\text{u},k},
  \end{align*}
  where the last equality follows from~\eqref{eq:density:relation} upon letting $u_1,u_2\to 1-$.
\end{proof}

\begin{proof}[Proof of Proposition~\ref{prop:exchan}]
  Let $C$ be exchangeable, so
  $C(u_{\sigma(1)}, \dots, u_{\sigma(d)})=C(u_1, \dots, u_d)$ for any
  permutation $\sigma$ of $\{1, 2, \dots, d\}$. Fix a permutation
  $\sigma$. Then
  \begin{align*}
    &\phantom{{}={}}C_{\W}(u_{\sigma(1)}, \dots, u_{\sigma(d)})\\
    &=\sum_{k_{\sigma(d)}=1}^K\dots\sum_{k_{\sigma(1)}=1}^K V_C\biggl(\,\prod_{j=1}^d \biggl(\,\biguplus_{k_{\sigma(j)}\in I} (\delta_{k_{\sigma(j)}-1},\W^{-1}_{|k_{\sigma(j)}}(u_{\sigma(j)})]\biggr)\cup\\
    &\phantom{=\sum_{k_{\sigma(1)}=1}^K\cdots \sum_{k_{\sigma(d)}=1}^KV_C\biggl(\,\prod_{j=1}^d \biggl(\,}\biggl(\,\biguplus_{k_{\sigma(j)}\notin I}
      (\W^{-1}_{|k_{\sigma(j)}}(u_{\sigma(1)}),
      \delta_{k_{\sigma(j)}}]\biggr)\biggr).
  \end{align*}
  Since $\sigma$ is a bijection and $\{\delta_k\}$ is identical across dimensions, we re-index
  $k_{\sigma(j)}\leftarrow k_j$, leading to
  \begin{align*}
    &\phantom{{}={}}C_{\W}(u_{\sigma(1)}, \dots, u_{\sigma(d)})\\
    &=\sum_{k_d=1}^K\cdots\sum_{k_1=1}^K V_C\biggl(\,\prod_{j=1}^d \biggl(\,\biguplus_{k_j\in
      I}(\delta_{k_j-1}, \W^{-1}_{|k_j}(u_j)]\biggr)\cup\biggl(\,\biguplus_{k_j\notin I}
      (\W^{-1}_{|k_j}(u_j), \delta_{k_j}]\biggr)\biggr)\\
    & = C_{\W}(u_1, u_2, \dots, u_d). \qedhere
  \end{align*}
\end{proof}

\printbibliography[heading=bibintoc]
\end{document}

%
%
%
%

%%% Local Variables:
%%% mode: LaTeX
%%% TeX-master: t
%%% End: